\documentclass[sigconf,nonacm]{acmart}
\setkeys{acmart.cls}{balance=false}
\usepackage{booktabs,array}
\usepackage[ruled,noend]{algorithm2e}
\usepackage{tikz}
\usetikzlibrary{arrows.meta,positioning}
\usepackage{mathtools}

\AtBeginDocument{\theoremstyle{acmdefinition}\newtheorem{remark}[theorem]{Remark}}

\setcopyright{none}
\acmYear{2026}
\acmDOI{}
\acmISBN{}

\newcommand{\msg}[1]{\textsc{#1}}
\newcommand{\m}[2]{\langle\msg{#1},#2\rangle}

\title{Optimistic Reliable Broadcast in Three Steps}

\author{Nikita Polianskii}
\affiliation{\institution{IOTA Foundation}\country{}}
\email{nikita.polyansky@gmail.com}

\begin{abstract}
We study the latency tradeoffs of signature-free asynchronous reliable
broadcast with optimistic two-step delivery. We consider $n$ parties,
of which at most $f$ are Byzantine. An optimistic fault budget $b\le f$
guarantees two-step delivery with an honest broadcaster and at most $b$
faults, while correctness tolerates $f$ faults. We write $(d_o,d_s,d_t)$
for optimistic latency, honest-broadcaster latency with up to $f$ faults,
and totality delay---the elapsed time from first to last honest delivery.

For $3f+1\le n<4f$, our first construction improves Opt-RBC's
$(2,4,4)$ guarantees to $(2,3,4)$, preserving its maximal optimistic
budget $b_*=\lfloor n/2\rfloor-f$. Thus maximal optimism is compatible
with the optimal three-step good case.

We prove a complementary lower bound: for $3f+1\le n<4f$, every
deterministic reliable broadcast protocol in our asynchronous model
with two-step optimistic budget $b\ge n-3f$ has totality delay at least
three, even when the broadcaster is honest. Our second construction
achieves $(2,3,3)$, matching this lower bound at maximal optimism when
$f\ge2$ and $\lceil(10f-3)/3\rceil\le n<4f$. Both constructions use
quadratic messages and tolerate adaptive corruption. In the matching
range, maximal optimism therefore costs exactly one totality step
relative to Bracha's $(3,3,2)$, without increasing good-case latency.
\end{abstract}

\begin{document}
\maketitle
\providecommand{\manuscripttitlebreak}{}
\manuscripttitlebreak

\section{Introduction}
\label{sec:intro}

Reliable broadcast lets a designated broadcaster send a value to $n$
parties, up to $f$ of which may be Byzantine. Honest parties must agree
on the delivered value, must deliver an honest broadcaster's value, and
must all deliver if any one of them does. The last property, totality,
is what makes a locally delivered value safe to act on inside a larger
protocol. Its latency, the time by which the last honest party can lag
behind the first, bounds how long a DAG-based consensus protocol must
wait before a delivered block is usable at every
party~\cite{ShresthaEtAl2026OptimisticRBC}.

We measure latency in \emph{steps}, where one step is the actual
network delay bound between honest parties, and write $(d_o,d_s,d_t)$ for a
protocol's optimistic latency, good-case latency, and totality delay
(Section~\ref{sec:model}). Bracha's signature-free protocol is
$(3,3,2)$: it delivers an honest broadcaster's value in three steps,
and once an honest party delivers, every honest party delivers within
two further steps~\cite{Bracha1987,YuEtAl2026SimpleIT}. Two-step
delivery under all $f$ faults requires $n\ge4f$ without
signatures~\cite[Theorem~4]{AbrahamRenXiang2021Latency}. Shrestha et
al.\ therefore relax the requirement: two-step delivery is required only
in executions in which at most $b$ parties are actually Byzantine, for
an \emph{optimistic budget}
$b\le f$~\cite{ShresthaEtAl2026OptimisticRBC}. For $3f+1\le n<4f$ the
largest feasible budget is $b_*=\lfloor n/2\rfloor-f$. Their protocol,
Opt-RBC,\footnote{We use the corrected arXiv v4 of May 2026, which
adds an ack layer to fix a termination issue in the earlier version;
its guarantees supersede those of the CCS 2025 version.}
attains it, but at a price: it is $(2,4,4)$, so good-case
latency rises from three steps to four and totality delay from two to
four. The superseded v2 claimed
two-step optimism with a three-step good case~\cite{ShresthaEtAl2025OptimisticRBCSuperseded}.
Confirm/ack establishes those two latency guarantees while retaining the
ack layer of the corrected protocol. Opt-RBC's reported bad-case latency
five is measured from invocation, whereas totality four follows the
elapsed convention of Yu et al.~\cite{YuEtAl2026SimpleIT}.
We ask whether this price is necessary:
can maximal optimism coexist with Bracha's good-case latency, and with
his totality delay?

\paragraph{Results}
The confirm/ack protocol improves Opt-RBC's $(2,4,4)$ to $(2,3,4)$
at the same maximal budget $b_*$ throughout $3f+1\le n<4f$, with
$O(n^2)$ messages. Thus two-step optimism need not increase the optimal
three-step good case. Paired witnesses achieve $(2,3,3)$ over the
range below. For $f\ge2$ and $3f+1\le n<4f$, our lower
bound rules out totality below three whenever $b\ge n-3f$, in the
deterministic model of Section~\ref{sec:model}. At $n=3f+1$, even one-fault
two-step optimism therefore costs an extra totality step. Our quadratic-message $(2,3,3)$ protocol attains
three at maximal optimism whenever
$n\ge\lceil(10f-3)/3\rceil$, including the entire interval for
$2\le f\le6$. Throughout this matching range, maximal optimism costs
\emph{exactly one totality step and no good-case step} relative to
Bracha. Table~\ref{tab:comparison} separates the construction's budget
from the lower bound's premise.

\begin{table}[t]
\caption{Latency in steps for $3f+1\le n<4f$, with
$b_*=\lfloor n/2\rfloor-f$ and $s=n-3f-1$. All protocols are
signature-free with $O(n^2)$ messages; types count the proposal as well
as subsequent message types.}
\label{tab:comparison}
\small\setlength{\tabcolsep}{2.5pt}
\begin{tabular}{@{}llrrrr@{\hspace{10pt}}l@{}}
\toprule
Protocol & Budget $b$ & Types & $d_o$ & $d_s$ & $d_t$ & Ref.\\
\midrule
Bracha & -- & 3 & 3 & 3 & 2 & \cite{Bracha1987}\\
Opt-RBC & $b_*$ & 5 & 2 & 4 & 4 & \cite{ShresthaEtAl2026OptimisticRBC}\\
Bracha + fast echo & $\lfloor s/2\rfloor+1$ & 3 & 2 & 3 & 3 & App.~\ref{sec:fast-bracha}\\
Confirm/ack & $b_*$ & 6 & 2 & 3 & 4 & App.~\ref{sec:protocol}\\
Paired witness & $\min\{b_*,2s+3\}$ & 6 & 2 & 3 & 3 & Sec.~\ref{sec:paired}\\
\midrule
Lower bounds & & -- & $2^{a}$ & $3^{b}$ & $3^{c}$ & \\
\bottomrule
\end{tabular}
\par\smallskip
{\footnotesize
$^a$Any budget (Lemma~\ref{lem:two-step-minimum}).
$^b$Under all $f$ faults~\cite[Theorem~4]{AbrahamRenXiang2021Latency}.
$^c$Deterministic protocols with two-step budget $b\ge n-3f$
(Theorem~\ref{thm:totality-lower}), witnessed with an honest broadcaster.
The fast-echo baseline has Byzantine-broadcaster totality at most two
(Proposition~\ref{prop:fast-bracha}). Confirm/ack and paired witnesses are
our constructions; paired witnesses require $f\ge2$.\par}
\end{table}

\paragraph{Full-range construction}
Confirm/ack retains Opt-RBC's fast echo delivery and vote--ack--ready
recovery. It adds a confirm sent on a sufficiently large echo quorum,
delivery on a confirm quorum, and an ack trigger on $f+1$ confirms.
With an honest broadcaster, echoes arrive by step two and confirms by
step three. The ack layer reconciles recovery from the echo and confirm
delivery rules when the broadcaster equivocates. This gives $(2,3,4)$
at budget $b_*$ throughout the interval; Theorem~\ref{thm:main} and
Appendix~\ref{sec:baseline-results} give the
thresholds and proof. We next ask whether its totality delay can be reduced.

\begin{theorem}[Totality lower bound]
\label{thm:totality-lower}
Let $3f+1\le n<4f$ and
$n-3f\le b\le\lfloor n/2\rfloor-f$. In the
deterministic model specified in Section~\ref{sec:model}, every
reliable broadcast protocol with two-step honest-broadcaster delivery
under at most $b$ faults has totality delay at least three. This holds
already over executions with an honest broadcaster and at most $f$
faulty non-broadcasters.
For $n\le4f-2$, the same bound holds if only non-broadcasters are
required to deliver and the totality clock counts only their deliveries.
\end{theorem}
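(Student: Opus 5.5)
We argue by an indistinguishability argument that is arranged to show totality delay two is impossible. Suppose a protocol has two-step optimistic delivery under at most $b\ge n-3f$ faults and totality delay at most two. We work with an honest broadcaster and unit-delay schedules. The key fact comes from the optimistic guarantee. In any execution with an honest broadcaster and at most $b$ faults where all messages take one step, every honest party delivers at time $2$. Such a delivery can depend only on the proposal (received at step $1$) and on the step-one messages relayed by the other parties (received at step $2$). So some honest party $p$ delivers at time $2$ after hearing from a set $Q$ of parties at step two. The adversary can make the views of $Q$ at step one independent of what happens elsewhere.

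\textbf{Step 1: a fast deliverer whose evidence is almost all faulty-controllable.} We partition the parties into $p$, the broadcaster $s$, a set $B$ of size $b$ that stays silent (crashed, counting toward the budget), and the remaining honest parties. Since $p$ must deliver at time $2$ while $b$ parties are silent, it does so on messages from at most $n-b$ parties. From Theorem~4 of~\cite{AbrahamRenXiang2021Latency} and Lemma~\ref{lem:two-step-minimum}, we expect a quorum-intersection calculation to show the following. With $b\ge n-3f$, there is an execution $E_1$ and a set $F$ of at most $f$ parties such that $p$'s time-$2$ view in $E_1$ is reproducible when $F$ is Byzantine. In that reproduction, $F$ sends to $p$ exactly the step-one messages it sent in $E_1$, while behaving toward every other party as if it had received nothing, or a different history. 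The arithmetic $b\ge n-3f$ is used here: it ensures that the parties $p$ actually relied on, minus those we can make Byzantine, leave at most $f-b$ extra honest witnesses. Those witnesses can be kept slow.

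\textbf{Step 2: force a lagging honest party.} In the reproduced execution $E_2$, $p$ delivers at time $2$. The faulty parties $F$ show the other honest parties a view consistent with "only a few echoes". Messages from $s$ and from the honest parties of $p$'s quorum to some honest $q$ are delayed to the full one-step bound at the latest allowed times. Totality two would require $q$ to deliver by time $4$. We then build $E_3$, indistinguishable to $q$ up to time $4$, in which $p$ is the Byzantine one, or $p$'s fast view never occurs, and the honest broadcaster is slow toward $q$. If $q$ delivers by time $4$ in $E_3$, a chain of such executions reaches one in which the broadcaster's value is replaced. That could be an equivocating-looking view that is in fact honest, with a different input. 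This contradicts validity or agreement. The constraint $b\le\lfloor n/2\rfloor-f$ only guarantees that the optimistic hypothesis is satisfiable. The bound $n<4f$ supplies the slack to hide $q$'s missing information.

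\textbf{Main obstacle and the final clause.} The hardest step is constructing $E_3$ so that $q$'s time-$4$ view is genuinely insufficient. $q$ receives relays of relays, and we must show that the information about $p$'s delivery reaches $q$ only through at most $f$ channels the adversary can forge or delay. We would try to choose $F$ to contain every party that heard $p$'s quorum at step two. We would then count, using $n\le 4f-1$, that this is possible.

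For the strengthened clause, the broadcaster cannot be one of the deliverers used. We therefore need one more spare corruption or delay, and we would use $n\le4f-2$ to free one extra slot so that both $p$ and $q$ are non-broadcasters.
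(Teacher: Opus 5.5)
There is a genuine gap, and it lies in the timing framework rather than in the counting. You use unit-delay schedules in which $p$ delivers at time $2$, and the only thing you extract from the hypothesis is that $q$ must deliver by time $4$. That deadline carries no information. At $n=3f+1$, $b=1$ the theorem's hypotheses are met by Bracha with fast echo delivery (Proposition~\ref{prop:fast-bracha}), whose good-case latency is three. In every honest-broadcaster execution with delay bound one, every honest party of that protocol has therefore delivered by time $3$. So any contradiction derived from your $E_2$ deadline together with the correctness properties would also refute a correct protocol. The hypothesis does no work in $E_2$, and you describe no other execution where it would. You also target only ``totality at most two'', whereas the theorem must refute every real $D<3$. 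Beyond this, Step~1 is only an expectation about a quorum-intersection calculation, and the ``chain of such executions'' in Step~2 is never specified. As a result, the proposal does not say where $b\ge n-3f$ is used or what the final contradiction is.

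The missing idea is to make the first honest delivery happen essentially at time zero relative to the delay bound. Another honest party's history up to that delivery plus $D$ steps must still be compatible with a Byzantine-broadcaster world. The paper does this with two scales $2\varepsilon<\eta$.
\begin{itemize}
\item In the fast executions all delays are $\varepsilon$, and only an endpoint group of size $s+1=n-3f\le b$ is faulty. This is exactly where $b\ge n-3f$ enters. The anchor $h$ therefore delivers by $2\varepsilon$.
\item The forcing execution $X_A$ has delay bound $\eta$. The anchor exception, together with the faulty $K_B$ replaying its fast-execution packets to $h$, preserves $h$'s history through $2\varepsilon$. So $D$ forces $E_A$ to deliver by $2\varepsilon+D\eta<3\eta$.
\item The late group $E_B$ receives its initial packets only at $\eta$. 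Hence $K_A$'s history is pinned before $2\eta$ and $E_A$'s before $3\eta$ (Lemma~\ref{lem:causal-cut}). This is what transfers the forced delivery into the bridge $W_A$.
\end{itemize}
The contradiction then comes not from validity but from two Byzantine-broadcaster bridges with faulty sets $\mathcal S\cup E_B$ and $\mathcal S\cup E_A$. The source group is honest in the fast and forcing executions. In the bridges it copies its $X_A$ streams to one side and its $X_B$ streams to the other. The $2f$ common parties then have identical histories forever, yet eventual totality and agreement demand $A$ from them in one bridge and $B$ in the other. Your remark about $n\le4f-2$ is in the right spirit: that condition is exactly $|\mathcal S|=4f-n\ge2$, which lets the anchor be an ordinary recipient.
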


The theorem allows arbitrary messages, timers, and continued
communication after delivery, and assumes neither threshold rules, nor
bounded message complexity, nor three-step good-case latency. It does not
cover randomized protocols or protocols supplied with the actual network delay bound.
An early delivery within the source group imposes a totality deadline
while some recipients are still waiting for messages. When that group
contains an ordinary recipient, its delivery suffices; only the boundary
$n=4f-1$ needs the source-inclusive convention in this proof.

The paired-witness construction attains this bound over the range below. Write
$s=n-3f-1$ for the number of parties beyond minimum resilience.

\begin{theorem}[Paired-witness broadcast]
\label{thm:paired}
Let $f\ge2$, $n=3f+1+s$ with $s\ge0$, and
\begin{equation}
 1\le b\le\min\{f,\lfloor n/2\rfloor-f,2s+3\}.
 \label{eq:paired-range}
\end{equation}
There is a signature-free asynchronous reliable broadcast protocol
against adaptive corruption of at most $f$ parties with optimistic
latency two under at most $b$ faults, good-case latency three, and
totality delay at most three. At most $5(n-1)^2+2(n-1)$ messages are sent while their authors
are uncorrupted, including after delivery, and local work is polynomial. For $n\ge4f$ it supports $b=f$.
\end{theorem}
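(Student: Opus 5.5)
The proof is constructive: we specify the paired-witness protocol and then verify each clause of the statement. The protocol has six message types. The broadcaster sends $\m{Propose}{v}$. On the first proposal, each party echoes with $\m{Echo}{v}$. A party delivers at step two on $n-b$ matching echoes; this is the fast path, and it is live whenever at most $b$ parties are faulty and the broadcaster is honest. On $\lceil(n+f+1)/2\rceil$ matching echoes a party sends $\m{Witness}{v}$, and it delivers on $n-f$ matching witnesses; this gives good-case latency three. Recovery goes through $\m{Vote}{v}$, $\m{Ack}{v}$ and $\m{Ready}{v}$ messages, with Bracha-style amplification: a party sends ready on $f+1$ readies. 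The new ingredient is the pairing. Every party that delivers, by either rule, must already hold a certificate that forces every honest party to send ready within one step, so that all honest parties deliver on $n-f$ readies one step later. We therefore let a party send ready immediately on either of two paired triggers, matching the two delivery rules: $n-b-f$ matching echoes together with $f+1$ witnesses, or $n-2f$ witnesses. The range condition $b\le 2s+3$ is what makes these paired thresholds intersect correctly.

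\textbf{Safety.} We first prove, by quorum intersection, that no two honest parties send witnesses for different values. For agreement, we show that the fast echo quorum $n-b$ and the witness/ready quorums for a conflicting $v'$ cannot coexist. Two sets of $n-b$ echoes intersect in at least $n-2b-f\ge1$ honest parties, since $b\le\lfloor n/2\rfloor-f$. A fast quorum for $v$ and an $n-f$ ready quorum for $v'$ must be traced back through the first honest ready, which was triggered by one of the paired conditions. In each case we exhibit an honest party that echoed both values, which is a contradiction. Validity and good-case latency three follow directly, since all $n-f$ honest parties echo by step two and witness by step three. The message bound $5(n-1)^2+2(n-1)$ is a count: the broadcaster's $n-1$ proposals, at most one message of each of the five non-proposal types from each party to the other $n-1$ parties, plus an extra $n-1$ slack term for the single allowed re-send on vote upgrade. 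We argue that each honest party sends each type at most once, which also gives adaptive-corruption safety, because nothing depends on secret state.

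\textbf{Totality three, the main obstacle.} Suppose an honest party $p$ delivers at time $t$. We must show that every honest party sends ready by $t+2$, so that every honest party delivers by $t+3$. If $p$ delivered via witnesses, the $n-f$ witnesses include $n-2f$ honest ones. These reach everyone by $t+1$, triggering ready, so all honest parties deliver by $t+2$. The hard case is fast delivery on $n-b$ echoes, possibly with an equivocating broadcaster. Then only $n-b-f$ of those echoes are guaranteed honest, and they reach everyone by $t+1$. To conclude we need these echoes to force either $n-2f$ honest witnesses by $t+2$ or the paired echo-plus-witness trigger. Here I expect to need $n-b-f\ge\lceil(n+f+1)/2\rceil-f$ up to an additive slack that is absorbed by $2s+3$. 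Concretely, we count honest parties that see at least the witness threshold of echoes for $v$ by $t+1$; each such party witnesses, and we need at least $f+1$ of them. The inequality reduces to $b\le 2s+3$ together with $f\ge2$, which rules out the degenerate $n=4$ case. This is the step I expect to be most delicate, and I would verify it first by checking the boundary cases $s=0$ and $b=b_*$. Finally, for $n\ge4f$ we set $b=f$, and the same thresholds go through because $n-2f\ge 2f$.
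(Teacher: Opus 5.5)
There is a genuine gap, at the step you flagged as delicate: recovery after a fast delivery. In your design every ready trigger needs honest witnesses, and a witness needs a Bracha-size echo quorum of $\lceil(n+f+1)/2\rceil=2f+1+\lceil s/2\rceil$ echoes. After a fast delivery on $n-b$ echoes with a Byzantine broadcaster, you can count only on $n-b-f=2f+1+s-b$ honest echoes. The inequality you need is therefore $b\le\lfloor s/2\rfloor$; nothing here is absorbed by $2s+3$. It fails for every $b\ge1$ at $s=0$. For $b=f$ it requires $n\ge5f+1$, so your $n\ge4f$ clause fails as well. Even counting only non-broadcaster echoes, it only becomes $b\le\lfloor s/2\rfloor+1$, the budget of the fast-echo baseline in Proposition~\ref{prop:fast-bracha}.

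Concretely, take $n=3f+1$ and $b=1$. The broadcaster sends $v$ to all honest parties but one. The $f$ Byzantine parties, the broadcaster among them as your fast path requires, echo $v$ only to one honest $p$. Then $p$ reaches $3f=n-b$ echoes and delivers. Every other honest party holds exactly $2f<2f+1$ echoes for $v$, so $p$ is the only honest witness. Neither ready trigger ($f+1$ witnesses together with the echoes, or $n-2f=f+1$ witnesses) can fire without Byzantine help, so totality fails outright, not just its latency. Your vote and ack messages are never given rules, so they cannot close this gap. An ack round would also reintroduce the fourth step.

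The missing idea is a recovery trigger whose threshold equals the number of honest echoes a fast delivery leaves behind. The paper uses a vote on $M=Q-f+1$ echoes, counting the $N=n-1$ non-broadcasters with $Q=N-b$, so that $Q-t\ge M$ for $t\le f-1$. These votes need not be unique. Safety instead comes from two facts. First, $Q+M-N>t$, so a fast quorum excludes every opposite honest vote; this is where $b\le\lfloor n/2\rfloor-f$ enters. Second, ready requires $R=N-f+1$ votes.

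The three-step good case uses a separate confirm path: confirm on $C=N-f$ echoes plus the own-echo guard, and deliver on $C$ confirms. The real difficulty is reconciling this path's recovery with the low-threshold vote path without an ack round, because letting $P$ confirms trigger ready directly is unsafe once $b\ge s+2$. The paper binds each confirm to its author's echo. A witness needs an echo and a confirm of $v$ from each of the same $P=f+s+1$ authors, and ready fires on $R_{\mathrm w}=2f-1$ witnesses. Lemma~\ref{lem:paired-compatibility} shows that a witness quorum forbids an opposite $R$-vote quorum, and $b\le2s+3$ enters there, through $(C-t)+(P-g)\ge f+2s+3\ge f+b$. After an echo delivery the chain is echoes, votes, readies; after a confirm delivery it is paired echoes and confirms, witnesses, readies. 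Each chain gives ready by $T+2\delta$ and delivery by $T+3\delta$.

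Your message count also does not match the stated bound. The bound counts five one-shot multicasts per non-broadcaster plus the broadcaster's proposal and ready. Proposals plus five types from every party already exceed it. A ``re-send on vote upgrade'' also contradicts the one-shot sends you rely on.
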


The full-fault specialization at $n\ge4f$ recovers the elapsed-totality
guarantee already provided by Abraham, Ren, and Xiang's
protocol~\cite[Theorem~3]{AbrahamRenXiang2021Latency}.

\begin{corollary}[Matching totality bounds]
\label{cor:matching-totality}
For $f\ge2$ and $3f+1\le n<4f$ with $b_*\le2s+3$, equivalently
$n\ge\lceil(10f-3)/3\rceil$, the smallest totality delay of a
deterministic protocol in the model of
Section~\ref{sec:model} with maximal two-step optimism is exactly three,
and a quadratic-message protocol attains it together with the optimal
good-case latency three. For every $f\ge2$, at $n=3f+1$ totality
three is optimal for $1\le b\le\min\{3,b_*\}$.
\end{corollary}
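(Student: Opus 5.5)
The corollary follows by combining Theorem~\ref{thm:totality-lower} (lower bound) with Theorem~\ref{thm:paired} (upper bound), after checking that the two parameter ranges overlap as claimed.

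\textbf{Lower bound.} Fix $f\ge2$, $3f+1\le n<4f$, and the maximal budget $b=b_*=\lfloor n/2\rfloor-f$. To apply Theorem~\ref{thm:totality-lower} we need $n-3f\le b_*$. Write $n=3f+1+s$ with $0\le s\le f-2$. Then $\lfloor n/2\rfloor-f=\lfloor (f+1+s)/2\rfloor$, and
\[
\lfloor (f+1+s)/2\rfloor\ge s+1=n-3f
\]
holds because $s\le f-1$ gives $f+1+s\ge 2s+2$. So every deterministic protocol with two-step budget $b_*$ has totality delay at least three.

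\textbf{Upper bound.} Apply Theorem~\ref{thm:paired} with $b=b_*$. Condition~\eqref{eq:paired-range} requires $1\le b_*\le\min\{f,b_*,2s+3\}$.
\begin{itemize}
\item $b_*\ge s+1\ge1$ by the inequality above.
\item $b_*\le f$ holds because $n<4f$ gives $n/2-f<f$.
\item $b_*\le 2s+3$ is the stated hypothesis.
\end{itemize}
The theorem then yields a quadratic-message protocol with optimistic latency two, good-case latency three, and totality at most three. Good-case three is optimal: $n<4f$, so \cite[Theorem~4]{AbrahamRenXiang2021Latency} rules out two-step delivery under all $f$ faults.

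\textbf{Equivalence with $n\ge\lceil(10f-3)/3\rceil$.} We must show that $\lfloor (f+1+s)/2\rfloor\le 2s+3$ is equivalent to the stated bound on $n$.
\begin{itemize}
\item Since $2s+3$ is an integer, the floor inequality is equivalent to $(f+1+s)/2<2s+4$, i.e.\ $f+1+s\le 4s+8$ when $f+1+s$ has parity matching the floor. Handling both parities, the condition is $f+1+s\le 4s+7$, i.e.\ $3s\ge f-6$.
\item In terms of $n$, $3s\ge f-6$ reads $3(n-3f-1)\ge f-6$, i.e.\ $3n\ge 10f-3$.
\item Because $n$ is an integer, this is exactly $n\ge\lceil(10f-3)/3\rceil$.
\end{itemize}
The main obstacle is this parity bookkeeping, and it needs a careful check. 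The other direction of the equivalence is needed too: when $3s\le f-7$, we get $f+1+s\ge 4s+8$ and hence $b_*\ge 2s+4$.

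\textbf{The case $n=3f+1$.} Here $s=0$, and $n-3f=1\le b$ holds for every $b\ge1$. For $b\le b_*$, Theorem~\ref{thm:totality-lower} gives totality at least three. For $b\le\min\{3,b_*\}$, condition~\eqref{eq:paired-range} holds with $2s+3=3$, so Theorem~\ref{thm:paired} gives totality at most three.
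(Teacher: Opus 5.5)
Your proposal is correct and follows the paper's proof. Like the paper, you combine Theorem~\ref{thm:totality-lower} (via $n-3f\le b_*$) with Theorem~\ref{thm:paired} (via $b_*\le\min\{f,2s+3\}$) and cite Abraham, Ren, and Xiang for the optimality of good-case latency three; you also derive the equivalence $b_*\le 2s+3\iff 3n\ge 10f-3$, which the paper leaves implicit. That derivation needs no parity case split, and your intermediate ``$f+1+s\le 4s+8$'' is a slip: because $2s+3$ is an integer, $\lfloor(f+1+s)/2\rfloor\le 2s+3$ holds exactly when $f+1+s<4s+8$, i.e.\ when $3s\ge f-6$.
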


The proof is at the end of Section~\ref{sec:paired}. At minimum
resilience the bounds match for every supported budget, and reach
maximal optimism for $2\le f\le6$. The $b=1$ case, including $(n,f)=(7,2)$,
already follows from Bracha with fast echo delivery
(Proposition~\ref{prop:fast-bracha}); paired witnesses extend the
three-totality guarantee to budgets two and three at minimum resilience.
The first missing maximal-budget case is
$(n,f,b_*)=(22,7,4)$, whereas adding one party gives the matching case
$(23,7,4)$. At the missing case the current rules actually lose
totality; Proposition~\ref{prop:paired-boundary} gives the general
counterexecution beyond their budget range. In general the sufficient surplus is
$s\ge\max\{0,\lceil(f-6)/3\rceil\}$. The lower bound holds throughout
$3f+1\le n<4f$ at budgets $b\ge n-3f$; maximal optimism with three-step totality at minimum
resilience for arbitrary $f$ remains open. The bound concerns
honest-broadcaster delivery spread: it forces $d_t^{\mathrm H}=3$
when $d_s=3$, but does not rule out $d_t^{\mathrm B}=2$. At $n=4f-1$,
including $(n,f)=(7,2)$, the lower bound uses the source-inclusive
delivery convention.

\paragraph{Techniques}
The lower bound builds on the four-execution indistinguishability
arguments of Abraham, Ren, and
Xiang~\cite[Theorem~4]{AbrahamRenXiang2021Latency} and of Shrestha et
al.~\cite[Theorem~8]{ShresthaEtAl2026OptimisticRBC}.
The step from an early fast delivery to a deadline for other parties
also appears in Abraham, Ren, and Xiang's argument~\cite[Theorem~7]{AbrahamRenXiang2021Latency}: their first bridge
preserves a two-round delivery and invokes the assumed bad-case bound.
Our elapsed-time argument preserves an early delivery under nonuniform
larger delays, keeps the resulting deadline fixed across the required
history comparisons, and uses eventual totality alone in the final
bridges. A source group containing
ordinary recipients is honest in the fast and forcing executions but
faulty in the final bridges. It generalizes the padding set $E$, of size
at most one, in Shrestha et al.'s proof of Theorem~8, which is honest in
their two honest-broadcaster worlds and faulty together with the
broadcaster in the other two~\cite[Section~4]{ShresthaEtAl2026OptimisticRBC}.
This reduces the fast executions' fault
count to $n-3f$, giving the smaller-budget bound.

The protocol must reconcile recovery from an echo delivery with recovery
from a confirm delivery. Votes propagate the former, but using confirms
alone for the latter can produce conflicting ready messages. Our
\emph{witness} instead requires an echo and a confirm from each of the
same authors. These paired observations occupy immutable echo slots:
a witness quorum forces enough honest parties to be unable to vote for
an opposite value, even earlier in the execution. This exclusion lets
both recovery paths trigger ready directly, removing the acknowledgment
round while preserving agreement (Section~\ref{sec:paired}).

\paragraph{Consequences for BFT consensus}
In protocols that use reliable broadcast to disseminate blocks,
good-case latency controls how quickly honest proposals become available,
while totality controls how quickly other honest parties catch up once
one party delivers. When the same RBC-delivered evidence suffices for
round advancement at every party, this catch-up bound also limits
round-entry skew. Our improvements can therefore shorten both proposal
dissemination and the synchronization margins used for recovery.
Section~\ref{sec:applications} instantiates these consequences for
Sailfish++ and Simple-IT~\cite{ShresthaEtAl2026OptimisticRBC,YuEtAl2026SimpleIT}.

\paragraph{Supporting results}
For large values, a coded realization preserves the paired-witness
protocol's $(2,3,3)$ guarantees and supported optimistic budgets with
$O(nL+\kappa n^2\log n)$ bits in every execution, where $L$ is the
value length and $\kappa$ is the hash length. It uses collision-resistant
hashing as its only cryptographic assumption and retains security
against adaptive corruption (Section~\ref{sec:extensions}).
We also determine how the totality delay of our confirm/ack construction
depends on the actual number of faults at $n=3f+1$.
Its exact worst-case delay is one step
with no faults, two with one fault, three with two through $f-b_*+1$
faults, and four above that threshold, omitting empty ranges
(Proposition~\ref{prop:confirm-ack-profile}). Thus our construction needs
four-step totality only when more than roughly half the fault tolerance is used.

\paragraph{Organization}
Section~\ref{sec:model} fixes the model and the latency conventions.
Section~\ref{sec:totality-lower} proves the lower bound, with the
complete joint schedule in Appendix~\ref{app:totality-lower}.
Section~\ref{sec:paired} gives the paired-witness protocol and its
proofs. Section~\ref{sec:extensions} states the applications and
extensions, Section~\ref{sec:related} discusses related work, and
Section~\ref{sec:conclusion} lists open problems.

\section{Model and definitions}
\label{sec:model}

\paragraph{Parties and faults}
There are $n\ge3f+1$ parties, one of which is the designated
broadcaster. An adaptive adversary may corrupt parties during the
execution, at most $f$ in total; corruption is permanent, and a corrupted
party may deviate arbitrarily. A party is \emph{honest} if it is never
corrupted. Validity assumes an honest broadcaster; agreement, totality,
and all latency guarantees concern honest parties. If the broadcaster is
Byzantine, at most $f-1$ of the $n-1$ non-broadcasters are, and fixing
the eventual corrupted set gives the quorum arguments below.

\paragraph{Inputs and initialization}
The value domain $\mathcal V$ has at least two elements. Only the
broadcaster initially knows its input. All other parties have public,
input-independent initial states. Each instance starts without private
correlated setup or transferable authentication from earlier instances.

\paragraph{Channels}
Parties are connected by pairwise authenticated channels: a receiver knows
which party sent each message, and a Byzantine party cannot send messages in
the name of another party.  Every message sent between honest parties is
eventually received.  Channels need not preserve order.  Authentication is
not transferable: a party cannot prove to a third party what it received.
These channels are the only authentication resource; no signatures or
additional setup are provided.

\paragraph{Messages}
In our constructions, every message carries the instance identifier, a
message type, and a value from a domain $\mathcal V$ with a fixed encoding
of at most $B$ bits; a party ignores messages outside this format.  A
party sends each message type at most once per instance.  When it does, it
sends the message to all parties and counts its own message immediately if
it is itself an eligible sender of that type.  A receiver ignores a
message whose type its sender is not eligible to send.  For each message
type and each eligible author, a receiver keeps a \emph{slot}: a cell that
is initially empty, is filled by the first message of that type it
receives from that author, and is never changed afterwards; later
messages of the same type from the same author are ignored.  Consequently
every quorum below is a set of distinct parties with one message each.  A
Byzantine party may appear in quorums for different values at different
receivers, but cannot take the place of another party.

\paragraph{Processing}
Local processing is finite and instantaneous. Activity is locally finite:
a party has finitely many local events in any bounded time interval.
A party processes received messages one at a time and applies every rule
that becomes enabled.  Delivering a value does not stop a party from sending
its remaining messages.  Agreement, validity, and totality need only eventual
receipt of messages and fair processing. Our guarantees do not include
local termination at delivery: a party must retain the instance and
process events that can enable its remaining one-shot sends. After it
has delivered and sent every type for which it is eligible, it may
discard the instance. An unsent type need not ever become enabled, so
we give no bounded-time stopping guarantee after delivery.

\paragraph{Communication}
Communication bounds count every message or bit that any party
sends while it is uncorrupted, including messages sent after delivery and
in executions in which no party delivers.  Traffic sent after corruption
is not charged.

\begin{definition}[Reliable broadcast]
\label{def:rbc}
The broadcaster is invoked with a value $v\in\mathcal V$.  Each honest party
delivers at most one value.  A protocol is a reliable broadcast protocol if
every execution satisfies the following.
\begin{description}
\item[Validity.] If the broadcaster is honest, every honest party eventually
  delivers the broadcaster's value, and no honest party delivers another
  value.
\item[Agreement.] No two honest parties deliver different values.
\item[Totality.] If an honest party delivers a value, every honest party
  eventually delivers a value.
\end{description}
\end{definition}

\begin{definition}[Latency]
\label{def:latency}
For latency statements only, let $\delta>0$ be an
\emph{actual network delay bound}: every message between honest parties
in the execution is received within $\delta$ time. One \emph{step} is
$\delta$ time. The protocol is not given $\delta$; its code and initial
states do not depend on it. A bound of $d$ steps must hold for the same
protocol for every $\delta>0$ and every execution satisfying that
honest-link bound. We reserve $\Delta$ for the known upper bound that
holds after stabilization in the consensus applications. Correctness of
the broadcast protocols itself requires only eventual receipt.
The \emph{good-case latency} is the largest time from an honest broadcaster's
invocation until every honest party has delivered, over all executions with
at most $f$ Byzantine parties~\cite{AbrahamRenXiang2021Latency}.  The
\emph{optimistic good-case latency} with budget $b$ is the same quantity
over executions with at most $b$ Byzantine
parties~\cite{ShresthaEtAl2026OptimisticRBC}.  The \emph{totality delay} is
the supremum, over all executions with any broadcaster, of the time between
the first delivery by an honest party and the
last~\cite[Definition~11]{YuEtAl2026SimpleIT}. We write
$(d_o,d_s,d_t)$ for these bounds in steps, in the order optimistic,
good-case, and totality. A totality lower bound of three means that
every bound $D<3$ is violated by some execution; the supremum need
not be attained by one execution.
\end{definition}

Write $d_t^{\mathrm H}$ and $d_t^{\mathrm B}$ for the same totality
supremum restricted to honest and Byzantine broadcasters, respectively.
Then $d_t=\max\{d_t^{\mathrm H},d_t^{\mathrm B}\}$ and
$d_t^{\mathrm H}\le d_s$: with an honest broadcaster, every delivery
occurs between invocation and the good-case deadline. Our lower bound
concerns $d_t^{\mathrm H}$, and hence also $d_t$; it does not lower-bound
$d_t^{\mathrm B}$. For protocols with $d_s=3$ it makes this elementary
honest-broadcaster bound tight under the stated optimism premise.

\paragraph{Latency conventions}
Our triples always use elapsed totality as their third entry. Abraham,
Ren, and Xiang report three good-case rounds and four bad-case rounds
for Bracha, charging one extra round for
totality~\cite[Definition~5]{AbrahamRenXiang2021Latency}.  Shrestha et
al.\ report bad-case latency five measured from invocation, given that some honest
party delivered within the good-case
latency~\cite[Definition~6]{ShresthaEtAl2026OptimisticRBC}; with a
Byzantine broadcaster there may be no delivery at all, so we make no
claim about the time from invocation in that case.  Following Yu et
al.~\cite[Table~3]{YuEtAl2026SimpleIT} we report elapsed totality
delay: two for Bracha and four for Opt-RBC. In a nonuniform-delay
execution, one party can receive a ready quorum early while the honest
readies in that quorum take a full step to reach the others; their
newly triggered readies can take a second step. Thus an early first
delivery does not turn Bracha's elapsed totality bound into one step.
We use the elapsed bounds directly, without assuming a general
conversion from round-based bounds. Table~\ref{tab:comparison} uses
this convention.

\paragraph{Scope of the lower bound}
\label{par:lower-scope}
The lower bound concerns deterministic protocols in the model above,
with latency measured against the unknown actual network delay bound.
The default interface requires the honest broadcaster to deliver, and
its delivery can start the totality clock.
A party's \emph{complete timed history} is the time-stamped sequence of
its receipts, with sender identity and content, its timer events, state
transitions, sends, and deliveries; its restriction to $[0,\tau)$ is its
history \emph{before} $\tau$.
All parties start the instance at time $0$ and can read real time;
this only strengthens the protocols quantified over.
Arbitrary recipient-specific packets, repeated sends, reactive timers, and
continued service after delivery are allowed.  The one-shot conventions
above restrict our constructions, not the protocols quantified over in
the lower bound.  The lower bound makes no claim for randomized
protocols or for protocols supplied with the actual network delay bound.
Theorem~\ref{thm:totality-lower} additionally covers a recipient-only
interface when $n\le4f-2$: require validity, agreement,
totality, and optimistic delivery for all honest non-broadcasters, and
measure the first-to-last delivery gap only over them. The broadcaster
remains one of the $n$ communicating parties and counts towards $f$.

\section{The totality lower bound}
\label{sec:totality-lower}

We prove Theorem~\ref{thm:totality-lower} for arbitrary deterministic
protocols in the model of Section~\ref{sec:model}, and also prove the
stated extension in which only non-broadcasters must deliver. Denote the
broadcaster by $0$. All
fault sets in the proof are static, so the bound also applies to
protocols that tolerate adaptive corruption.

\paragraph{Proof plan}
The six executions serve three purposes. Fast executions force an honest
party to deliver early. Forcing executions preserve that early
delivery while delaying other parties, so a hypothetical bound $D<3$
imposes a deadline on those parties. Bridge executions inherit opposite
deliveries but give their common honest parties identical histories
forever, contradicting agreement through eventual totality. Only the
forcing executions use the numeric deadline; the bridges use no latency
bound. We now construct the histories that make these transfers possible.

\paragraph{Partition and executions}
Write $s=n-3f-1$, so $0\le s\le f-2$ and $s+1\le b$.
Partition all parties into a \emph{source group} $\mathcal S$ containing $0$,
two \emph{endpoint} groups $E_A,E_B$, and two \emph{common} groups
$K_A,K_B$ with
\[
 |\mathcal S|=f-s-1,\qquad |E_A|=|E_B|=s+1,\qquad |K_A|=|K_B|=f.
\]
All five groups are nonempty and their sizes sum to $n$.
Only $0$ initially knows the input; $\mathcal S\setminus\{0\}$ consists of
ordinary recipients with input-independent initial states.
Choose an \emph{anchor} $h\in \mathcal S\setminus\{0\}$ if $|\mathcal S|\ge2$;
otherwise choose $h=0$. Only this last case, $n=4f-1$, requires the broadcaster to deliver.
Fix distinct inputs $A,B$. Table~\ref{tab:lower-worlds} in
Appendix~\ref{app:totality-lower} lists six executions: two \emph{fast}
executions $F_A,F_B$ forcing early anchor delivery; two
honest-broadcaster \emph{forcing} executions $X_A,X_B$; and two
Byzantine-broadcaster \emph{bridges} $W_A,W_B$. Their faulty sets are
$E_B,E_A,K_B,K_A,\mathcal S\cup E_B,\mathcal S\cup E_A$, respectively. Thus each fast
execution has $s+1\le b$ faults, and each forcing execution and bridge
has exactly $f$. The extra members of $\mathcal S$ are honest in the fast and
forcing executions but Byzantine in both bridges.
Figure~\ref{fig:lower} shows $W_A$.

Three terms recur. The \emph{stream} of a party $q$ to a party $p$ in
an execution is the timed sequence of packets $q$ sends to $p$, with
their arrival times; a Byzantine party \emph{copies} a stream of another
execution when it sends the same packets under its own identity with the
same arrival times, which nontransferable authentication permits. An
\emph{initial packet} is one sent during zero-time initialization,
before any external receipt. Non-broadcasters' initial states, and hence
their initial packets, are input-independent. All scheduled link delays below concern interparty
packets; self-messages are local events. Complete timed histories are as defined in
Section~\ref{sec:model}.

\begin{figure}[t]
\centering
\begin{tikzpicture}[>=Latex,font=\fontsize{7.5}{9}\selectfont,
  grp/.style={draw,rounded corners=2pt,minimum width=8.5mm,minimum height=5mm,inner sep=1pt},
  byz/.style={grp,fill=black!15},
  lbl/.style={fill=white,inner sep=1pt}]
\node[byz] (z) at (3.0,1.55) {$\mathcal S\ni0$};
\node[above=0.5mm of z,text=black!70] {$f-s-1$};
\node[grp] (ea) at (0,0) {$E_A$};
\node[grp] (ka) at (2.0,0) {$K_A$};
\node[grp] (kb) at (4.0,0) {$K_B$};
\node[byz] (eb) at (6.0,0) {$E_B$};
\node[below=0.5mm of ea,text=black!70] {$s+1$};
\node[below=0.5mm of eb,text=black!70] {$s+1$};
\draw[->] (z) -- node[lbl,pos=.5]{copy of $X_A$} (ea);
\draw[->] (z) -- node[lbl,pos=.5]{$X_A$} (ka);
\draw[->] (z) -- node[lbl,pos=.5]{$X_B$} (kb);
\draw[->,dashed] (eb) to[bend left=22] node[lbl,pos=.4,below=4pt]{copy of $W_B$} (ka);
\draw[->,dashed] (eb) to[bend left=22] (kb);
\draw[->,dashed] (eb) to[bend left=38] node[lbl,pos=.5]{late $E_B$ of $X_A$} (ea);
\draw[->] (ea) to[bend right=38] node[lbl,pos=.5]{delay $J$} (kb);
\draw[->] (ea) to[bend right=20] node[lbl,pos=.5]{$\eta$} (ka);
\draw[<->] (ka) -- node[lbl]{$\eta$} (kb);
\node[below=0.5mm of ka,text=black!70,fill=white,inner sep=1pt] {$f$};
\node[below=0.5mm of kb,text=black!70,fill=white,inner sep=1pt] {$f$};
\end{tikzpicture}

\medskip
\begin{tikzpicture}[>=Latex,font=\fontsize{7.5}{9}\selectfont]
\draw[->] (0,0) -- (7.4,0);
\foreach \x/\l in {0.25/$\varepsilon$,0.5/$2\varepsilon$,1.75/$\eta$,3.5/$2\eta$,4.6/$H$,5.25/$3\eta$,7.0/$J=4\eta$}
  \draw (\x,0.07) -- (\x,-0.07) node[below]{\l};
\node[above,align=center] at (0.5,0.12) {anchor\\delivers};
\node[above,align=center] at (1.75,0.12) {late\\receipts};
\node[above,align=center] at (3.5,0.12) {common\\cut};
\node[above,align=center] at (4.6,0.12) {forced\\delivery};
\node[above,align=center] at (5.25,0.85) {endpoint cut};
\node[above,align=center] at (7.0,0.12) {long-link\\delay};
\end{tikzpicture}
\caption{Top: the bridge $W_A$; shaded parties are Byzantine. The
members of $\mathcal S$ copy their $X_A$ streams towards the $A$ side and their
$X_B$ streams towards $K_B$. Only $0\in \mathcal S$ starts with an input.
The Byzantine $E_B$ copies the stream of honest $E_B$ in $W_B$ towards
the common groups, so its stream to $K_A$ inherits delay $J$, and copies
the late group's stream of $X_A$ towards $E_A$ (dashed). Arrow delays
apply to noninitial interparty packets: honest $E_A$ sends to $K_B$
with delay $J$, and other honest links have delay $\eta$.
Initial non-broadcaster packets arrive at $\varepsilon$.
$W_B$ is the mirror image. Bottom:
the three time scales and the deadlines of the proof, with
$2\varepsilon<\eta$ and $H=2\varepsilon+D\eta<3\eta$.}
\Description{A diagram of five party groups with the source group and the
right endpoint group shaded as Byzantine, arrows labelled by the copied
streams and delays, and a timeline marking epsilon, two epsilon, eta,
two eta, H, three eta, and J equal to four eta.}
\label{fig:lower}
\end{figure}

\paragraph{Three scales, one elapsed deadline}
Suppose a constant $D<3$ bounds totality delay in every
honest-broadcaster execution with at most $f$ faulty non-broadcasters.
Choose $\varepsilon,\eta>0$ with
\[
 2\varepsilon<\eta,\qquad
 H:=2\varepsilon+D\eta<3\eta,\qquad J:=4\eta.
\]
The second condition holds once $\varepsilon<(3-D)\eta/2$.
The protocol's code and initial states are independent of the actual
network delay bound (Definition~\ref{def:latency}). Its initialization
is therefore fixed as we vary $\varepsilon$ and $\eta$. We may take
$\eta$ small enough that $3\eta$ precedes every positive expiration
of a timer set during initialization. There are finitely many such timers
by finite initialization and locally finite activity; if there are none,
this condition is vacuous. For this cut, the noninitial packets in copied endpoint streams must
reach the opposite common group no earlier than $2\eta$; the initial
packets are matched separately. Thus any $J\ge2\eta$ suffices;
$J=4\eta$ keeps the deadlines distinct.

In the fast executions every interparty message delay is $\varepsilon$, and the
Byzantine endpoint group sends only the initial packets its members would
send and is then silent. Two-step optimistic delivery makes the honest
anchor $h$ deliver $A$ in $F_A$ and $B$ in $F_B$ by $2\varepsilon$.

In $X_A$ the honest group $E_B$, the \emph{late group}, receives every
initial interparty packet at time $\eta$, including on links within $E_B$;
all other initial packets arrive at $\varepsilon$; all later honest
messages take $\eta$, except that messages sent to $h$ by
time $\varepsilon$ arrive with delay $\varepsilon$ (the \emph{anchor
exception}). The Byzantine $K_B$ replays its $F_A$ packets to
$h$ for sends through $\varepsilon$ and otherwise copies the
stream of honest $K_B$ in $W_A$. Through time $\varepsilon$ each party in
$\mathcal S\cup E_A\cup K_A$ has received only initial interparty packets,
exactly as in $F_A$, so
it sends the same packets, and the exception delivers them to
$h$ by $2\varepsilon$ as in $F_A$; any interparty packet sent after
$\varepsilon$ reaches $h$ after $2\varepsilon$ in both
executions, and the late group sends nothing noninitial before
$\eta>2\varepsilon$. This includes source $0$'s sends to $h$:
its input and its initial receipts match the seed. Hence $h$'s history through
$2\varepsilon$ is the same in $X_A$ as in $F_A$, and it delivers $A$ by
$2\varepsilon$ in $X_A$ as well. All honest delays in $X_A$ are at most
$\eta$, the totality clock has started no later than $h$'s delivery,
and validity fixes the value, so every member of $E_A$ delivers $A$ by
$2\varepsilon+D\eta=H$. Symmetrically, $X_B$ delays $E_A$, has $K_A$
Byzantine, and forces every member of $E_B$ to deliver $B$ by $H$.

In the bridges, initial non-broadcaster packets arrive at $\varepsilon$.
Noninitial honest interparty packets take $\eta$, except that the
honest endpoint group sends to the opposite common group with delay $J$. In
both bridges every Byzantine member of $\mathcal S$ copies its own $X_A$ stream
towards $E_A\cup K_A$ and its own $X_B$ stream towards $E_B\cup K_B$. The Byzantine
endpoint group copies the stream of its honest counterpart in the other
bridge towards $K_A\cup K_B$, and copies the late group's stream of the
forcing execution towards the opposite endpoint group
(Figure~\ref{fig:lower}). Its copied stream to the opposite common group
therefore inherits delay $J$ and is silent before $J$ apart from initial
packets, matching the late group's silence in the forcing execution. The
four executions $X_A,X_B,W_A,W_B$ are constructed jointly by increasing
time: every Byzantine packet copies an honest packet of another execution
that was sent at least $\varepsilon$ before its arrival, so at each
instant honest processing in all four executions is determined before
the copies are formed (Appendix~\ref{app:totality-lower}).

\begin{lemma}[Causal cut]
\label{lem:causal-cut}
Party by party, the complete timed histories satisfy
\begin{align*}
 X_A|_{K_A}&=W_A|_{K_A}\ \text{ before }2\eta,&
 X_A|_{E_A}&=W_A|_{E_A}\ \text{ before }3\eta,\\
 X_B|_{K_B}&=W_B|_{K_B}\ \text{ before }2\eta,&
 X_B|_{E_B}&=W_B|_{E_B}\ \text{ before }3\eta.
\end{align*}
\end{lemma}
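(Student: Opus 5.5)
The plan is to prove all four equalities together by induction over time. The induction uses one basic fact about deterministic protocols: a party's complete timed history before $\tau$ is determined by three things. These are its initial state, its timer expirations before $\tau$, and the timed sequence of packets (sender, content, arrival time) it receives before $\tau$. Two further facts make this usable.

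\begin{itemize}
\item Timers set during initialization expire only after $3\eta$ by the choice of $\eta$. Every other timer is set by a state transition, so it is itself a function of the earlier history.
\item Initial states of non-broadcasters are input-independent, and none of $E_A,K_A,E_B,K_B$ contains $0$.
\end{itemize}

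It therefore suffices to show, for each $p\in K_A$ (resp.\ $E_A$), that the packets $p$ receives before $2\eta$ (resp.\ $3\eta$) agree between $X_A$ and $W_A$. By symmetry I treat only the $A$ side. Activity is locally finite, so I induct over the increasing sequence of event times. Equivalently, I take the earliest time $t$ at which some history in the claimed range first differs, and derive a contradiction. This parallels the joint construction of Appendix~\ref{app:totality-lower}, which guarantees that every copied Byzantine packet is well defined.

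Next I go through senders group by group, first for a recipient $p\in K_A$ before $2\eta$.
\begin{itemize}
\item From $\mathcal S$: in $W_A$ the Byzantine $\mathcal S$ copies its $X_A$ stream towards $K_A$ by construction, so the streams are identical.
\item From $K_B$: in $X_A$ the Byzantine $K_B$ copies honest $K_B$'s stream of $W_A$ to everyone except $h\in\mathcal S$, so again the streams are identical.
\item From $E_B$: in $W_A$ it copies the stream of honest $E_B$ in $W_B$. That stream carries initial packets at $\varepsilon$ and noninitial packets with delay $J=4\eta$. In $X_A$ the late group $E_B$ sends its input-independent initial packets to $K_A$, arriving at $\varepsilon$. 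It receives nothing until $\eta$, so its noninitial packets are sent at or after $\eta$ and arrive at or after $2\eta$. Before $2\eta$ both streams therefore consist of the same initial packets at $\varepsilon$.
\item From $E_A\cup K_A$: these are honest in both executions, with initial packets at $\varepsilon$ and later packets with delay $\eta$. A noninitial packet arriving before $2\eta$ was sent before $\eta$, so it is fixed by the sender's history before $\eta$. That history agrees by the induction hypothesis, since $\eta<2\eta\le3\eta$.
\end{itemize}

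For a recipient $p\in E_A$ before $3\eta$ the argument is the same.
\begin{itemize}
\item $\mathcal S$ and $K_B$ are handled as above.
\item $E_B$ in $W_A$ copies the late group's $X_A$ stream towards $E_A$ by construction.
\item Honest packets from $E_A\cup K_A$ arriving before $3\eta$ were sent before $2\eta$. They are fixed by histories before $2\eta$, which agree inductively for both $K_A$ (cut $2\eta$) and $E_A$ (cut $3\eta$).
\end{itemize}

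A few deviations from the uniform delay schedule need to be checked. The anchor exception in $X_A$ only affects packets addressed to $h\in\mathcal S$, so it is irrelevant here. The $J$-delay in $W_A$ applies only to packets from $E_A$ to $K_B$, whose receipts the lemma does not compare.

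I expect the main obstacle to be the bookkeeping at cut boundaries rather than any single case. First, the induction must be well founded despite mutual dependence between groups and the circular copying among $X_A,X_B,W_A,W_B$. I would handle this with the "sent at least $\varepsilon$ before arrival" property of the joint construction. Second, every copied stream must match including arrival times. This is where I would check carefully that $2\varepsilon<\eta$, that timers are not triggered before $3\eta$, and that strict inequalities ("before $2\eta$") exclude packets arriving exactly at the boundary.
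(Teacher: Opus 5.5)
Your proposal is correct and follows essentially the paper's argument. Both use a simultaneous induction over strictly earlier causal events and compare received streams one sender group at a time. $\mathcal S$, $K_B$, and $E_B$ (towards $E_A$) are handled by the explicit copying rules. For $E_B\to K_A$ before $2\eta$, the late group's silence and the choice that no initialization timer expires before $3\eta$ do the work. For honest $E_A\cup K_A$ traffic, the $\eta$ delay reduces the cuts at $2\eta$ and $3\eta$ to earlier matching histories.

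One imprecision is harmless: rule~4 copies towards $(\mathcal S\setminus\{h\})\cup E_A\cup K_A$, and uses different initial timing towards the late group $E_B$, rather than copying towards ``everyone except $h$''. Only $E_A\cup K_A$ recipients are compared, so this does not affect your argument.
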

\begin{proof}[Proof sketch]
Compare, sender by sender, the packets a member of $K_A$ receives before
$2\eta$ in $X_A$ and in $W_A$. From $E_B$: in $X_A$ the late group
receives nothing before $\eta$ and no initial timer fires before
$3\eta$, so its first noninitial packet reaches $K_A$ no earlier than
$2\eta$; in $W_A$ the copied stream is silent before $J$ apart from
initial packets, which agree. From $\mathcal S$ and from $K_B$: copied explicitly,
from $X_A$ into $W_A$ by the Byzantine members of $\mathcal S$ and from $W_A$ into
$X_A$ by the Byzantine $K_B$. From $E_A$ and from within $K_A$: a packet
arriving before $2\eta$ was sent before $\eta$, from a history
strictly earlier in the induction, which matches. Now compare the packets
a member of $E_A$ receives before $3\eta$. From $\mathcal S$, $E_B$, and $K_B$:
copied explicitly. From $K_A$: a packet arriving before $3\eta$ was
sent before $2\eta$, inside the interval just established. From within
$E_A$: sent from a strictly earlier matching history. Every nonlocal
dependency is on a strictly earlier event, so the simultaneous induction
over both cuts is well founded. The $B$ side is symmetric.
Appendix~\ref{app:totality-lower} carries out the induction, including
initialization traffic, reactive timers, and messages sent after
delivery.
\end{proof}

\begin{lemma}[Common indistinguishability]
\label{lem:common}
Every party in $K_A\cup K_B$ has the same complete timed history in $W_A$
and in $W_B$, for all time.
\end{lemma}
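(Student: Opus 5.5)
The plan is a well-founded induction on time, over the members of $K_A\cup K_B$ simultaneously, comparing the two bridges sender group by sender group. Fix $p\in K_A\cup K_B$. Its complete timed history is determined by three things: its input-independent initial state, which is the same in both bridges; the deterministic protocol code together with real time, which fixes its timer events and state transitions; and the timed sequence of packets it receives, with sender identities. So it suffices to show that, for each sender $q$, the stream of $q$ to $p$ has the same packets and arrival times in $W_A$ and in $W_B$. I would classify $q$ by group.

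\emph{Source group $\mathcal S$.} The source group is Byzantine in both bridges. By construction, in both bridges each member copies its own $X_A$ stream towards $K_A$ and its own $X_B$ stream towards $K_B$. These streams are therefore literally identical in $W_A$ and $W_B$. This includes those of the broadcaster $0$, whose input appears only through the copied $X_A$ and $X_B$ streams.

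\emph{Endpoint groups.} Consider $E_A$. In $W_A$ it is honest, and its stream to $K_A\cup K_B$ is its actual stream: noninitial packets reach $K_A$ after $\eta$ and $K_B$ after $J$, and initial packets arrive at $\varepsilon$. In $W_B$ it is Byzantine and, by definition, copies the stream of honest $E_A$ in $W_A$ towards $K_A\cup K_B$, with the same packets and arrival times. The streams therefore coincide. The case of $E_B$ is the mirror image. Note that this step does not require knowing the contents of these streams, only that one is defined as a copy of the other.

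\emph{Common groups.} Both $K_A$ and $K_B$ are honest in both bridges, with the same link delays: initial packets at $\varepsilon$, noninitial packets at $\eta$. A noninitial packet from $q\in K_A\cup K_B$ arriving at $p$ at time $t$ was sent at $t-\eta$. It is therefore determined by $q$'s history before $t-\eta$ (together with the events at that instant), which matches by the induction hypothesis. Initial packets match because initial states are input-independent.

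Combining the four cases, receipts at $p$ up to time $t$ depend only on copied streams, which match identically, and on histories of members of $K_A\cup K_B$ at strictly earlier times, earlier by at least $\eta$. Local processing is deterministic, and activity is locally finite, so events can be ordered. Hence induction on the ordered event times (or on $\lfloor t/\eta\rfloor$ with an inner induction on events) gives equality for all time.

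The step I expect to need the most care is well-foundedness of the joint construction. $W_A$'s Byzantine $E_B$ copies honest $E_B$ of $W_B$, whose behaviour depends on $W_B$'s common groups, while $W_B$'s Byzantine $E_A$ copies from $W_A$. I would appeal to the joint increasing-time construction described before Lemma~\ref{lem:causal-cut} (Appendix~\ref{app:totality-lower}). Every copied packet copies an honest packet sent at least $\varepsilon$ before its arrival, so at each instant all four executions' honest processing is fixed before the copies are formed. The comparison above then needs only the syntactic identity of the copied streams, not Lemma~\ref{lem:causal-cut}.
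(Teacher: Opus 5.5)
Your proposal is correct and follows the paper's own argument: the members of $\mathcal S$ send identical copied streams in both bridges; each endpoint identity is honest in one bridge and an exact copy of that honest stream in the other; and common-to-common packets have identical $\eta$ schedules, with contents matching by induction on time and well-foundedness supplied by the joint increasing-time construction of Appendix~\ref{app:totality-lower}. The paper also remarks that the differing private $E_A$--$E_B$ streams create no conflicting channel obligation, since each such channel has a Byzantine party at one end in each bridge, but that remark concerns the legitimacy of the bridges rather than this lemma.
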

\begin{proof}[Proof sketch]
Sender by sender: each member of $\mathcal S$ sends the same streams to the
common groups in both bridges; each endpoint identity is honest in one bridge
and an exact copy of that honest stream in the other; and
common-to-common packets have identical schedules and, by induction on
time, identical contents. The private $E_A$--$E_B$ streams differ, but
each such channel has a Byzantine party at one end in each bridge, so no
honest channel obligation is violated (Appendix~\ref{app:totality-lower}).
\end{proof}

Since $H<3\eta$, Lemma~\ref{lem:causal-cut} transfers the forced
deliveries: $W_A$ contains an honest delivery of $A$ by $E_A$, and $W_B$
an honest delivery of $B$ by $E_B$. Honest delays in the bridges are at
most $J$, so both are fair asynchronous executions, in which agreement
and eventual totality hold although the broadcaster is Byzantine and the
hypothesis $D<3$ says nothing. Totality forces each party in
$K_A\cup K_B$ to deliver in both bridges, agreement forces $A$ in $W_A$
and $B$ in $W_B$, and Lemma~\ref{lem:common} makes this impossible for a
deterministic party. The numeric bound $D$ was used only in the
honest-broadcaster executions $X_A,X_B$. This proves
Theorem~\ref{thm:totality-lower}.

\paragraph{Scope}
At $n=3f+1$ each endpoint is a singleton and $|\mathcal S|=f-1$, so the proof
needs only one optimistic fault, for every $f\ge2$. At $n=4f$, $\mathcal S$
would be empty, although it must contain the broadcaster. More generally,
this construction needs $n\le3f+b$: its common groups have total size
at most $2f$; $\mathcal S\cup E_A$ is faulty in $W_B$, so
$|\mathcal S|+|E_A|\le f$; and $E_B$ is faulty in $F_A$, so $|E_B|\le b$. This is a limit of the partition, not an
impossibility of proving a stronger bound for $b<n-3f$.

\section{Three-step totality with paired witnesses}
\label{sec:paired}

Fix $f\ge2$, $n=3f+1+s$, and a budget $b$ in the
range~\eqref{eq:paired-range}, and let $N=n-1=3f+s$ be the number of
non-broadcasters. Only non-broadcasters send echoes, votes, confirms,
and witnesses, and receivers ignore such messages from the broadcaster;
every party, including the broadcaster, can send ready and deliver. When
the broadcaster is Byzantine we write $t\le f-1$ for the number of
Byzantine non-broadcasters, so that $t+1\le f$ parties are corrupted in
total. Algorithm~\ref{alg:paired} is the complete protocol and
Figure~\ref{fig:paired-flow} in Appendix~\ref{app:paired-details}
shows its message flow. The thresholds are
\begin{equation}
\begin{aligned}
 Q&=N-b=3f+s-b,& C&=N-f=2f+s,\\
 M&=Q-f+1=2f+s-b+1,& P&=C-f+1=f+s+1,\\
 R&=N-f+1=2f+s+1,& R_{\mathrm w}&=2f-1,
\end{aligned}
\label{eq:paired-thresholds}
\end{equation}
and Table~\ref{tab:paired-thresholds} in Appendix~\ref{app:paired-details}
records the role of each. They
count non-broadcasters, of whom at most $t$ are Byzantine. The ready
thresholds $f+1$ and $2f+1$ count all parties, of whom at most $f$ are
Byzantine: $f+1$ readies contain an honest one, and the $n-f\ge2f+1$
honest parties can always fill a ready delivery quorum.

\begin{algorithm}[t]
\DontPrintSemicolon
\caption{Paired-witness broadcast; thresholds in~\eqref{eq:paired-thresholds}.}
\label{alg:paired}
\textbf{Broadcaster} with input $v$: multicast $\m{propose}{v}$.\;
\medskip\textbf{Each non-broadcaster:}\;
\Indp
On $\m{propose}{v}$ from the broadcaster, multicast $\m{echo}{v}$.\;
On $M$ echoes of $v$, multicast $\m{vote}{v}$.\;
On $C$ echoes of $v$, and having itself echoed $v$ (the \emph{own-echo
guard}), multicast $\m{confirm}{v}$.\;
On an echo and a confirm of $v$ from each of the same $P$ authors,
multicast $\m{witness}{v}$.\;
\Indm
\medskip\textbf{Every party:}\;
\Indp
On $R$ votes, or $R_{\mathrm w}$ witnesses, or $f+1$ readies of $v$,
multicast $\m{ready}{v}$.\;
On $Q$ echoes, or $C$ confirms, or $2f+1$ readies of $v$, deliver $v$.\;
\Indm
\medskip All multicasts and delivery are one-shot across values.
Continue processing and sending after delivery.\;
\end{algorithm}

Slots (Section~\ref{sec:model}) are never overwritten: once an echo slot
at an honest party holds $v$, it never held and will never hold another
value. The witness rule counts \emph{paired authors}, identities whose
echo slot and confirm slot at the receiver both hold $v$; a claim by
another party to have seen such a pair does not count. The own-echo
guard applies only to confirms. It makes the honest confirmers behind a
confirm delivery paired authors at every honest party, which drives the
witness path; votes and witnesses carry no such guard, so a party that
echoed another value can still help recovery. A ready sent on $R$ votes
or on $R_{\mathrm w}$ witnesses is a ready \emph{origin}; one sent on $f+1$ readies
is a \emph{relay}. Deliveries on $Q$ echoes or on $C$ confirms are
\emph{delivery origins}; a delivery on $2f+1$ readies contains an honest
ready.

\paragraph{Design rationale}
After an echo delivery, $M$ honest echoes reach everyone and drive the
vote/ready recovery path. After a confirm delivery, $P$ honest confirms
reach everyone. Letting these $P$ confirms directly trigger ready is
unsafe: for admissible $b\ge s+2$, an $R$-vote quorum for one value can
coexist with $P$ confirms for another (Appendix~\ref{app:paired-details}).
Opt-RBC and our confirm/ack protocol reconcile the paths through an
acknowledgment round. Paired witnesses instead bind each confirm to its
author's echo; Lemma~\ref{lem:paired-compatibility} proves that a witness
quorum excludes an opposite vote quorum, allowing both paths to trigger
ready directly.

\subsection{Compatibility of the recovery paths}
Let the broadcaster be Byzantine, with $t\le f-1$ Byzantine
non-broadcasters. For adaptive faults, take the eventual corrupted set;
``honest'' below means outside that set. Two sets of $X$ and $Y$ distinct
non-broadcasters share an honest party whenever $X+Y-N>t$. The budget
range gives
\begin{align}
 2Q-N&=3f+s-2b>t,& 2C-N&=f+s>t,\nonumber\\
 2R-N&=f+s+2>t,& Q+C-N&=2f+s-b>t,\nonumber\\
 Q+M-N&=2f+s-2b+1>t,& C+R-N&=f+s+1>t,
 \label{eq:paired-intersections}
\end{align}
where $Q+M-N>t$ holds uniformly for $t\le f-1$ exactly when
$b\le\lfloor n/2\rfloor-f$; the others use $b\le f$. In particular, honest confirms have a unique value. One
single-party fact is also used: $C+M-N=f+s-b+1>0$, so a party whose $C$
echo slots hold one value never has $M$ slots for another and never
votes for it, at any time. The following lemma is the additional
compatibility that lets witnesses bypass the acknowledgment round.

\begin{lemma}[Witness/vote compatibility]
\label{lem:paired-compatibility}
With a Byzantine broadcaster, if $R_{\mathrm w}$ witness slots at some honest party
hold $v$, then at no time do $R$ vote slots at any honest party hold a
value $w\ne v$.
\end{lemma}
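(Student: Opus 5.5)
The plan is a counting argument over honest parties. It rests on two facts: echo slots are immutable, and the single-party fact $C+M-N>0$ holds. Fix the eventual corrupted set, with $t\le f-1$ Byzantine non-broadcasters. Suppose, for contradiction, that $R_{\mathrm w}$ witness slots at an honest party $p$ hold $v$, and that $R$ vote slots at an honest party $q$ hold $w\ne v$, possibly at different times. Since no honest party sends two votes or two echoes, all statements below are about which honest parties ever sent which message. The time-independence claimed in the lemma then comes for free.

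\emph{Step 1: from witnesses to honest confirmers.} Among the $R_{\mathrm w}=2f-1$ witness authors at $p$, at least $2f-1-t\ge f$ are honest. Each such honest witness $x$ had echo and confirm slots for $v$ from a common set of $P=f+s+1$ authors, and at least $P-t=f+s+1-t$ of these are honest. An honest author whose confirm slot holds $v$ actually sent $\m{confirm}{v}$. By the own-echo guard it echoed $v$, and it had $C$ echo slots holding $v$. So the set $\mathcal C_v$ of honest confirmers of $v$ has size at least $P-t$. By the single-party fact, no member of $\mathcal C_v$ ever has $M$ echo slots for $w$, so none ever votes $w$. I would also record that each honest confirmer witnesses at least $C-t$ honest echoers of $v$. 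Since honest echoes are unique, these $C-t$ parties never echo $w$.

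\emph{Step 2: from the vote quorum to honest voters.} The $R$ vote slots at $q$ contain at least $R-t$ honest voters for $w$. Each of them saw $M$ echoes of $w$, hence at least $M-t$ honest $w$-echoers. These are disjoint from the honest $v$-echoers of Step 1, and the honest voters for $w$ are disjoint from $\mathcal C_v$. Putting the two disjoint families inside the $N-t$ honest non-broadcasters gives two constraints. The first is $(R-t)+(P-t)\le N-t$, that is, $t\ge s+2$. The second is $(C-t)+(M-t)\le N-t$, that is, $t\ge f+s-b+1$.

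\emph{Step 3: the main obstacle.} Steps 1 and 2 alone do not reach a contradiction for $t$ near $f-1$. The missing leverage must come from the fact that there are at least $f$ \emph{distinct} honest witnesses. These witnesses may see different paired sets, and Byzantine authors can be paired at one witness while contributing $w$-echoes at a voter. So I would not bound $\mathcal C_v$ by a single witness's view. Instead I would bound the honest parties that \emph{can} vote $w$. An honest voter for $w$ needs $M$ $w$-echo slots, but at most $N-|\text{honest }v\text{-echoers}|-t$ honest parties echo $w$. The goal is to show that fewer than $R-t$ honest parties can reach $M$ slots for $w$ once the $v$-echoer set is large.

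To make the $v$-echoer set large, I would strengthen Step 1. Each of the $\ge f$ honest witnesses certifies $\ge P-t$ honest parties that echoed \emph{and} confirmed $v$, and each such confirmer certifies $\ge C-t$ honest $v$-echoers. I would then argue, using the intersection inequalities~\eqref{eq:paired-intersections} (in particular $C+R-N>t$ and $Q+M-N>t$) and the budget constraint $b\le 2s+3$, that the honest $v$-echoers exceed $N-t-(M-t)$. At that point no honest party can see $M$ echo slots for $w$ using honest $w$-echoers plus the $t$ Byzantine ones, and the vote quorum cannot exist. I expect the case analysis in $t$ relative to $s$ and $b$ to be where the argument is tight, and where the hypotheses $R_{\mathrm w}=2f-1$ and $b\le 2s+3$ are genuinely needed. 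If that direct count falls short, I would first try pairing each honest $w$-voter's $M$-echo set against $C$ echo slots of a fixed honest confirmer of $v$ and applying $C+M-N>0$ across the two receivers.
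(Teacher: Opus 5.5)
Your Steps 1 and 2 are correct, but Step 3 aims at an intermediate claim that is false in the lemma's range. You want the honest $v$-echoers to exceed $N-M$, so that no honest party can ever hold $M$ echo slots for $w$. At $t=f-1$, one honest confirmer guarantees only $C-t=f+s+1$ honest $v$-echoers, and $f+s+1>N-M=f+b-1$ means $b\le s+1$. For $s+2\le b\le 2s+3$ the claim itself fails.

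Take $n=10$, $f=3$, $b=2$, $t=2$. The broadcaster proposes $v$ to a set $A$ of four honest non-broadcasters and $w$ to the other three, $B$. The two Byzantine non-broadcasters echo $v$ to $A$ and $w$ to $B$. Every member of $A$ then holds $C=6$ echo slots for $v$ and confirms it, and every honest party pairs the four members of $A$ and witnesses $v$. Yet every member of $B$ holds $5=M$ echo slots for $w$ and votes $w$. So a witness quorum does not forbid honest $w$-votes; it only forbids $R$ of them.

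Your fallback breaks on the same example. Across two receivers, the overlap $C+M-N=f+s-b+1$ can be at most $t$ and consist only of Byzantine identities; here it is exactly the two Byzantine parties.

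The missing idea is to split on the number $g$ of honest confirmers of $v$ and then show that the honest \emph{witnesses themselves} cannot vote $w$.

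If $g\ge f$, your Step 2 count with $f$ in place of $P-t$ already finishes. Then $f$ honest non-broadcasters never vote $w$, so at most $N-f=R-1$ parties ever do.

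If $g\le f-1$, fix an honest witness $i$:
\begin{itemize}
\item At most $g$ of its $P$ paired authors are honest, so at least $P-g\ge s+2$ are Byzantine. Their echo slots \emph{at $i$} are frozen at $v$. This is the leverage you correctly observed is lost at other receivers, but it is intact at $i$ itself.
\item Since $P>t$, some paired author is an honest confirmer. Its $C$ echo slots include at least $C-t\ge f+s+1$ honest $v$-echoers, whose slots at $i$ are empty or hold $v$.
\item These two disjoint sets give $f+2s+3\ge f+b$ slots at $i$ that never hold $w$; this is exactly where $b\le 2s+3$ enters.
\item That leaves at most $N-f-b=M-1$ slots for $w$, so $i$ never votes $w$.
\end{itemize}
There are at least $R_{\mathrm w}-t\ge f$ honest witnesses, so again at most $R-1$ non-broadcasters ever vote $w$.
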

\begin{proof}
Let $g$ be the number of honest non-broadcasters that ever confirm $v$.
At least $R_{\mathrm w}-t\ge f$ of the witness authors are honest. The conclusions
below are time-independent, so the order of events plays no role.

If $g\ge f$, any $f$ honest confirmers of $v$ hold $C$ echo slots for
$v$ and, by the single-party fact, never vote for $w$; the witness
quorum is not needed in this case. Only non-broadcasters vote, so at
most $N-f=R-1$ distinct parties ever send a vote for $w$, and $R$ vote
slots for $w$ cannot fill.

Suppose $g\le f-1$ and fix an honest witness author $i$. Among its $P$
paired authors at most $g$ are honest, so at least $P-g\ge P-f+1=s+2$
are Byzantine; each of their echo slots at $i$ holds $v$ by the pairing
and is never overwritten. Since $P>t$, at least one paired author is
honest. That honest confirmer holds $C$ echo slots for $v$, at least
$C-t\ge C-f+1=f+s+1$ of them from honest parties. Being honest, those
parties echoed only $v$, so their echo slots at $i$ are empty or hold
$v$, whether or not their echoes have reached $i$; being honest, they
are distinct from the $P-g$ Byzantine paired authors. Hence at least
\[
 (C-t)+(P-g)\ \ge\ (f+s+1)+(s+2)\ =\ f+2s+3\ \ge\ f+b
\]
echo slots at $i$ never hold $w$, using $b\le2s+3$. At most
$N-(f+b)=M-1$ slots at $i$ can hold $w$ at any time, so $i$ never votes
for $w$. Applying this to the at least $f$ honest witness authors, again
at most $R-1$ parties ever vote for $w$.
\end{proof}

\begin{table}[t]
\caption{Compatibility of delivery origins ($Q$ echoes, $C$ confirms)
and ready origins ($R$ votes, $R_{\mathrm w}$ witnesses). Entries are intersection
sizes exceeding $t$, or Lemma~\ref{lem:paired-compatibility}; the
$C$-confirm/$R$-vote case also uses $C+M>N$.}
\label{tab:agreement}
\small\setlength{\tabcolsep}{3.5pt}
\begin{tabular}{@{}lllll@{}}
\toprule
 & $Q$-echo & $C$-confirm & $R$-vote & $R_{\mathrm w}$-witness\\
\midrule
$Q$-echo & $2Q{-}N$ & $Q{+}C{-}N$ & $Q{+}M{-}N$ & $Q{+}C{-}N$\\
$C$-confirm & & $2C{-}N$ & $C{+}R{-}N$ & $2C{-}N$\\
$R$-vote & & & $2R{-}N$ & Lemma~\ref{lem:paired-compatibility}\\
$R_{\mathrm w}$-witness & & & & $2C{-}N$\\
\bottomrule
\end{tabular}
\end{table}

\begin{lemma}[Agreement and validity]
\label{lem:paired-safety}
Algorithm~\ref{alg:paired} satisfies agreement and validity.
\end{lemma}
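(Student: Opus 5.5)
Validity is the easy half, so I will treat it first. With an honest broadcaster, every honest non-broadcaster receives only $\m{propose}{v}$ and echoes only $v$. For any $w\ne v$, at most $t\le f$ echo slots at any honest party can hold $w$. Every origin threshold exceeds $f$: $Q,C,M\ge 2f+s-b+1>f$ because $b\le f$, and $R>f$. So no honest party votes or confirms $w$. A witness for $w$ needs $P=f+s+1>f$ paired authors, so no honest party witnesses $w$. No honest party then reaches $R$ votes or $R_{\mathrm w}=2f-1$ witness slots for $w$; the latter is not immediate, so I would instead show that no honest party ever sends a ready for $w$. No honest origin for $w$ exists, because $R_{\mathrm w}$ witness slots for $w$ would need a witness from an honest author, since $2f-1>f$ fails only at $f=1$, which is excluded by $f\ge2$. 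Induction on honest ready sends then shows that $f+1$ readies for $w$ would include an honest one, and $2f+1$ readies for $w$ likewise. Honest parties therefore never deliver $w$. Eventual delivery of $v$ follows because all $N-t\ge Q$ honest echoes of $v$ arrive, given $N-f\ge Q$ when $b\le f$. When the broadcaster is honest, agreement is implied by validity.

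For agreement with a Byzantine broadcaster, I will prove a stronger invariant: all honest \emph{delivery origins} and honest \emph{ready} messages carry a single value $v^*$. Every honest delivery is either a delivery origin, on $Q$ echoes or $C$ confirms, or a delivery on $2f+1$ readies, which contain an honest ready. So it suffices to show (i) that any two honest origins, delivery or ready, agree, and (ii) that relays propagate only values already readied by an honest party. Point (ii) is a standard induction on the time of honest ready sends, since $f+1$ readies contain an honest one, and the first honest ready for any value must be an origin. For (i) I go through the pairs in Table~\ref{tab:agreement}. Each quorum pair of distinct authors for $v\ne w$ would share an honest author by \eqref{eq:paired-intersections}; that author sent both, contradicting the one-shot rule. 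I need to be careful about which object each origin pins down.

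\begin{itemize}
\item A $Q$-echo origin fixes $Q$ echo slots at one party, hence $Q-t$ honest echoers.
\item A $C$-confirm origin fixes $C-t$ honest confirmers.
\item An $R$-vote origin fixes honest voters, each holding $M$ echo slots, containing $M-t$ honest echoes.
\item An $R_{\mathrm w}$-witness origin fixes at least $f$ honest witness authors, each seeing $P$ paired authors, among them an honest confirmer with $C$ echoes.
\end{itemize}

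The cases then go as follows. Echo/vote uses $Q+M-N>t$ on the echo sets. Confirm/vote uses $C+R-N>t$ to find an honest voter whose confirming peer set shares an honest echoer; more simply, confirms need own-echo and $C$ echoes, and by the single-party fact $C+M>N$ no party holding $C$ echo slots for $v$ votes $w$. Echo/witness and confirm/witness reduce to an honest confirmer of $v$ inside the witness, via its $C$ echoes or the confirm itself, using $Q+C-N>t$ and $2C-N>t$. Vote/witness is exactly Lemma~\ref{lem:paired-compatibility}. Witness/witness reduces to honest confirmers of each value, and honest confirms are unique because $2C-N>t$.

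The main obstacle is making the echo/vote and confirm/vote entries rigorous. These involve slots at different honest parties at different times, so I would phrase every origin as a time-independent set of honest senders' one-shot messages and use the immutability of slots. The witness/vote entry, the genuinely new case, is already handled by Lemma~\ref{lem:paired-compatibility}.
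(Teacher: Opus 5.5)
Your eventual-delivery sentence is the step that fails. Since $Q=N-b$ and $b\le f$, the inequality runs the other way: $Q\ge N-f$, with equality only at $b=f$. Throughout $3f+1\le n<4f$ the budget satisfies $b\le\lfloor n/2\rfloor-f<f$, so $N-f<Q$. Take an honest broadcaster and $t>b$ faulty non-broadcasters that stay silent. Then only $N-t<Q$ echoes of the input ever exist, and the echo-delivery rule never fires.

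Eventual delivery must instead go through confirms:
\begin{itemize}
\item The $N-t\ge N-f=C$ honest echoes of $v$ reach every party.
\item Each honest non-broadcaster has echoed $v$ itself, so the own-echo guard holds. By your own validity argument it has confirmed nothing else, so it confirms $v$.
\item These $C$ honest confirms reach every party, including the broadcaster, and each delivers on $C$ confirms.
\end{itemize}
This is the chain the paper uses in Lemma~\ref{lem:paired-latency}. The paper's proof of Lemma~\ref{lem:paired-safety} itself only shows that every delivered value is the input, and the theorem combines the two lemmas.

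Apart from this, your argument coincides with the paper's:
\begin{itemize}
\item Thresholds above $f$, with $R_{\mathrm w}=2f-1>f$ for $f\ge2$, give the no-other-value half of validity.
\item For a Byzantine broadcaster, origins are compared pairwise via the intersections of Table~\ref{tab:agreement}.
\item Honest confirm uniqueness handles the witness cases, and Lemma~\ref{lem:paired-compatibility} handles witness/vote.
\item Relays are traced back to honest origins.
\end{itemize}

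In the confirm/vote case, rely on your second formulation. There $C+R-N>t$ yields an honest party that both confirmed $v$ and voted $w$, and $C+M>N$ forbids that at a single party. Intersecting an honest voter's $M$ echo authors with a \emph{different} honest confirmer's $C$ echo authors would need $C+M-N=f+s-b+1>t$. That fails at $t=f-1$ once $b\ge s+2$.
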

\begin{proof}
First let the broadcaster be Byzantine. Honest confirms have a unique
value, since the echo quorums of two confirmers share an honest party by
$2C-N>t$, and that party echoed once. Every honest witness has an honest
confirmer among its $P>t$ paired authors, so honest witnesses have that
same value. Table~\ref{tab:agreement} lists, for each pair of origins,
the fact that forbids different values; all of them are intersection
counts on immutable slots and do not depend on the order of events.

Readies first. Two $R$-vote quorums share an honest party by $2R-N>t$,
and that party voted once. An $R_{\mathrm w}$-witness quorum contains an honest
witness, and Lemma~\ref{lem:paired-compatibility} excludes an opposite
vote quorum. Following relays backwards, each honest ready reaches in
finitely many steps an honest origin ready for the same value, since a
relay on $f+1$ readies contains an honest, strictly earlier ready; two
honest readies for different values would give two honest origins for
different values, which is excluded. Hence all honest readies carry one
value.

Deliveries next. A delivery on $Q$ echoes excludes an opposite honest
vote by $Q+M-N>t$ and an opposite honest confirm by $Q+C-N>t$; hence it
agrees with every ready origin, through the honest voter or the honest
paired confirmer that the origin contains, and with every confirm
delivery; two echo deliveries share an honest party by $2Q-N>t$. A
delivery on $C$ confirms excludes an opposite $R$-vote quorum, since the
two author sets share an honest party by $C+R-N>t$ and an honest
confirmer cannot vote oppositely by $C+M>N$; it agrees with every
witness origin by confirm uniqueness. A delivery on $2f+1$ readies
contains an honest ready and reduces to the ready case. These cases
cover every pair of origins.

Now let the broadcaster be honest, with at most $f$ Byzantine
non-broadcasters. Every honest echo carries the broadcaster's input.
The thresholds $M,C,P$ exceed $f$, and $R_{\mathrm w}=2f-1>f$ for $f\ge2$, so every vote,
confirm, and witness quorum contains an honest author, whose message
carries the input by induction along the message chain; the same holds
for the delivery quorums $Q,C>f$ and for every ready origin. Relays and
ready deliveries contain an honest ready and reduce to an honest origin.
Hence every delivered value is the broadcaster's input.
\end{proof}

\subsection{Delivery and communication}
\begin{lemma}[Good case and totality]
\label{lem:paired-latency}
Algorithm~\ref{alg:paired} has the latency guarantees of
Theorem~\ref{thm:paired} and satisfies eventual totality asynchronously.
\end{lemma}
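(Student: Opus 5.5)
The plan is to treat the three latency claims separately and then reuse the timing arguments, with delays removed, for eventual totality. Throughout, the latency arguments assume the actual network delay bound $\delta$.

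\textbf{Optimistic and good-case latency.} With an honest broadcaster and at most $b$ faults, the proposal reaches every honest non-broadcaster by $\delta$, and each of them echoes the input.
\begin{itemize}
\item At least $N-b=Q$ honest echoes then reach every party, including the broadcaster, by $2\delta$, so every honest party delivers on $Q$ echoes.
\item With up to $f$ faults, at least $N-f=C$ honest echoes reach every honest non-broadcaster by $2\delta$. Each such party has itself echoed the input, so the own-echo guard is met and it confirms by $2\delta$.
\item These $C$ honest confirms reach every party by $3\delta$, giving delivery on $C$ confirms.
\item Lemma~\ref{lem:paired-safety} ensures no honest party has already made a conflicting one-shot send.
\end{itemize}
Since $d_t^{\mathrm H}\le d_s=3$, totality with an honest broadcaster follows immediately. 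It remains to bound totality with a Byzantine broadcaster, with $t\le f-1$ Byzantine non-broadcasters.

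\textbf{Byzantine-broadcaster totality.} Let $\tau$ be the first honest delivery, of value $v$, and split on its origin.
\begin{itemize}
\item \emph{Ready delivery.} The $2f+1$ readies contain at least $f+1$ honest ones. These reach everyone by $\tau+\delta$, so every honest party relays. By $\tau+2\delta$ every honest party holds $n-f\ge 2f+1$ readies of $v$ and delivers.
\item \emph{Echo delivery.} At least $Q-t\ge M$ honest echoes of $v$ reach every honest non-broadcaster by $\tau+\delta$. No honest party can ever hold $M$ echo slots for some $w\ne v$, because $Q+M-N>t$. Hence every honest non-broadcaster votes $v$ by $\tau+\delta$, either newly or because it already did so. By $\tau+2\delta$ each honest party has $N-t\ge R$ votes and sends ready, or has already sent ready for $v$; by Lemma~\ref{lem:paired-safety} honest readies carry a single value. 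By $\tau+3\delta$ every honest party has $2f+1$ readies and delivers.
\item \emph{Confirm delivery.} At least $C-t\ge P$ honest confirmers of $v$ confirmed by time $\tau$. By the own-echo guard, each of them sent its echo of $v$ no later than its confirm. So by $\tau+\delta$ every honest non-broadcaster holds a paired echo and confirm of $v$ from $P$ common authors, and witnesses $v$. Honest witnesses have a unique value, so none is blocked. By $\tau+2\delta$ every party holds $N-t\ge 2f-1=R_{\mathrm w}$ witnesses and sends ready. By $\tau+3\delta$ every honest party delivers.
\end{itemize}
Eventual totality follows from the same three chains with ``by $\tau+k\delta$'' replaced by ``eventually'', using only eventual receipt.

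\textbf{Main obstacle.} The delicate step is the one-shot rule across values. An honest party might already have voted, witnessed, or readied for a different value, which would block the chain. I would discharge this explicitly in each case using the uniqueness facts already established: $Q+M-N>t$ for votes, confirm uniqueness and hence witness uniqueness, and the single-value property of honest readies in Lemma~\ref{lem:paired-safety}.

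A second subtlety is in the confirm case: the paired echo must itself arrive by $\tau+\delta$. This is exactly what the own-echo guard provides, since the echo is sent no later than the confirm and therefore arrives within $\delta$ of it.
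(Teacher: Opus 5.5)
Your proposal is correct and follows the paper's proof essentially step for step. You use the same good-case chain: an echo quorum by step two, own-echo-guarded confirms sent by step two and a confirm quorum by step three, with $d_t^{\mathrm H}\le d_s$. You make the same case split on the origin of the first Byzantine-broadcaster delivery into the vote, witness, and relay chains, using $Q+M-N>t$ and confirm uniqueness, and you reuse these finite chains for eventual totality in the same way. One point needs slightly more care: the common value of the honest readies must be $v$ itself, not merely unique. The cross-origin cases of Lemma~\ref{lem:paired-safety} give this, which is what the paper's ``safety excludes opposite readies'' refers to.
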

\begin{proof}
With an honest broadcaster, everyone receives the proposal by step one.
At most $b$ faults leave $Q$ honest echoes, giving delivery by step two;
at most $f$ leave $C$ honest echoes and then $C$ honest confirms, giving
delivery by step three. This also bounds honest-broadcaster totality.

For a Byzantine broadcaster, let the first honest delivery of $v$ be at
$T$. Echo delivery leaves $Q-t\ge M$ honest echoes already sent.
By $T+\delta$ everyone receives them and votes $v$: the intersection
$Q+M-N>t$ excludes any earlier opposite vote. At least $N-t\ge R$
honest votes arrive by $T+2\delta$, giving ready and delivery by
$T+3\delta$. Confirm delivery leaves $C-t\ge P$ honest authors that
sent both echo and confirm. Their paired messages reach everyone by
$T+\delta$, giving witness, ready, and delivery by $T+3\delta$.
Confirm uniqueness excludes an earlier opposite witness; there is no
own-echo guard on witnesses. Ready delivery leaves $f+1$ honest readies,
which reach everyone within one step, giving relay and delivery within
two. Safety excludes opposite readies in all cases. These finite chains
also prove eventual totality without a delay bound.
\end{proof}

\begin{proof}[Proof of Theorem~\ref{thm:paired}]
Combine Lemmas~\ref{lem:paired-safety} and~\ref{lem:paired-latency}.
Each non-broadcaster sends at most five multicasts and the broadcaster at
most two: $5(n-1)^2+2(n-1)$ messages, including failed executions and
service after delivery. Later corruptions do not increase pre-corruption
sends. Full values cost $O(n^2L)$ payload bits, in addition to message
headers; slot and matched-author counters
require polynomial work. For $n\ge4f$, $b=f$ satisfies
\eqref{eq:paired-range}, covering all honest-broadcaster executions.
\end{proof}

\begin{proof}[Proof of Corollary~\ref{cor:matching-totality}]
For $n<4f$, $n-3f\le b_*$. In the stated range
$b_*\le\min\{f,2s+3\}$, so Theorems~\ref{thm:paired}
and~\ref{thm:totality-lower} give matching totality bounds at $b=b_*$.
At $n=3f+1$, the lower bound applies to every $b\ge1$, and the
construction supports $b\le\min\{3,b_*\}$. Good-case latency three
is optimal below $4f$ by~\cite[Theorem~4]{AbrahamRenXiang2021Latency}.
\end{proof}

\section{Applications and further results}
\label{sec:extensions}

\subsection{Applications to BFT consensus}
\label{sec:applications}
BFT protocols that use reliable broadcast as a black box for block
dissemination can use paired witnesses when their RBC interface and
fault model match. This includes DAG-based protocols whose vertices
are disseminated by RBC, as in Bullshark's abstract
construction~\cite[Sections~3--4]{SpiegelmanEtAl2022Bullshark}.
Bullshark and Narwhal/Tusk~\cite{DanezisEtAl2022NarwhalTusk}
belong to the certified-DAG family surveyed by
Raikwar et al.~\cite[Section~III-D]{RaikwarEtAl2024SoKDAG}.
Replacing Opt-RBC's $(2,4,4)$ by $(2,3,3)$
reduces good-case latency and totality delay by one step each, with
the optimistic budget of Theorem~\ref{thm:paired}. At $n=3f+1$ with
$f\ge7$ this budget is three, below $b_*$: with between four and $b_*$
actual faults, Opt-RBC still guarantees two-step delivery and paired
witnesses guarantee three steps. Confirm/ack keeps the budget $b_*$.
Dissemination on the commit path can therefore finish earlier, and
recovery timers based on totality can be shorter; the resulting consensus
bounds depend on the surrounding protocol. An interface that additionally
requires transferable certificates needs a separate compatibility argument.

For concrete consequences, consider Sailfish++ and Simple-IT.
Let $\delta$ be the actual network delay bound after GST and $\Delta>\delta$
its known bound. On honest-leader commit paths with resolved dependencies,
replacing Opt-RBC by paired witnesses reduces commit latency from
$5\delta$ to $4\delta$ in both the DAG-based Sailfish++
protocol~\cite[Section~7]{ShresthaEtAl2026OptimisticRBC} and
base Simple-IT~\cite[Lemmas~6--8]{YuEtAl2026SimpleIT}.
Totality also bounds round-entry skew. In Simple-IT, if an honest party
enters round $r$ at time $t\ge\mathrm{GST}$, all honest parties have
entered it by $t+d_t\delta$~\cite[Lemma~12]{YuEtAl2026SimpleIT}.
Substituting paired witnesses reduces this bound from $4\delta$ to
$3\delta$. Simple-IT sets its round timer to
$(d_s+d_t)\Delta$ and bounds eventual worst-case view duration by
$(d_s+d_t)\Delta+d_t\delta$, including views with faulty leaders.
Substituting the paired-witness triple $(2,3,3)$ reduces these bounds from $8\Delta$ to
$6\Delta$ and from $8\Delta+4\delta$ to $6\Delta+3\delta$,
respectively~\cite[Lemma~8 and Corollary~5]{YuEtAl2026SimpleIT}.
These are consequences of the source analyses under their stated
conditions. Appendix~\ref{app:consensus-applications} gives the complete
comparison, the optimistic fault budgets and readiness premises, and the
full-proposal requirement for Sailfish++'s first-message commit rule.

\subsection{Further results}
We give a coded realization of paired witnesses and a full-value fallback
that retains maximal optimism throughout the resilience range.

\paragraph{Linear-communication paired witnesses}
\begin{proposition}[Coded paired witnesses]
\label{prop:paired-coded}
For the parameters of Theorem~\ref{thm:paired}, an MDS code and a
collision-resistant hash function give a $(2,3,3)$ reliable broadcast
protocol for $L$-bit values, conditional on commitment binding.
Against a computationally bounded adaptive adversary, binding fails only
with negligible probability. Communication sent while its authors are
uncorrupted is $O(nL+\kappa n^2\log n)$ bits in every execution,
where $\kappa$ is the hash length. Its payload traffic has
leading bounds $3nL$ with an honest broadcaster and $6nL$ in every
execution.
\end{proposition}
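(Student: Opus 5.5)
We obtain the coded protocol by running Algorithm~\ref{alg:paired} on short \emph{digests} in place of full values, with the value itself spread by an MDS code. The broadcaster encodes $v$ with an $[N,k]$ MDS code, where $k=f+1$ (or any $k\le C-f$ that keeps the reconstruction counts below valid). It commits to the $N$ fragments with a Merkle tree whose root $r$ has $\kappa$ bits. It then sends each non-broadcaster $i$ the root $r$, its fragment $c_i$, and a Merkle path.

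Every message type of Algorithm~\ref{alg:paired} now carries $r$ in place of $v$, using the same thresholds~\eqref{eq:paired-thresholds}. The only payload-carrying messages are the following. An echo additionally carries the sender's own verified fragment and path, so the echo costs $O(L/k+\kappa\log n)$ bits. A party that sends ready on $r$, or that is about to deliver on $r$, collects $k$ fragments that verify against $r$. It decodes, re-encodes, and checks that the recomputed root equals $r$. It outputs the decoded value if the check passes, and a fixed default $\bot$ otherwise. This is the standard AVID/``re-encode and check'' pattern. It makes all honest parties that deliver on the same $r$ output the same value, conditional on Merkle binding.

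The proof proceeds in three steps. \textbf{Safety.} Condition on binding holding for every root that appears; this fails with only negligible probability against a bounded adaptive adversary. Under that conditioning, fragments accepted for $r$ are determined by $r$. Agreement on roots then follows verbatim from Lemma~\ref{lem:paired-safety} and Lemma~\ref{lem:paired-compatibility}, applied to the abstract values $r$: those proofs use only slot immutability and intersection counts, never the content of the value. Agreement on delivered values then follows from decode-consistency. Validity follows because an honest broadcaster's root re-encodes correctly.

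\textbf{Latency.} This is the step we expect to be the main obstacle. Delivery on $r$ must not wait an extra step for fragments. In every delivery path of Lemma~\ref{lem:paired-latency}, the triggering quorum of echoes reaches everyone within the same step bound as before. This holds because echo delivery yields $Q-t\ge M$ honest echoes, and confirm or witness paths rest on at least $C-t\ge f+s+1\ge k$ honest echoes. We will therefore check, path by path, that each trigger event already guarantees that at least $k$ honest echoes (hence fragments) have been sent. If so, decoding is possible at the moment the digest rule fires, and the triple $(2,3,3)$ and the budgets are preserved exactly. The delicate case is ready-relay delivery at $T+2\delta$ after a ready origin. There we argue that each honest ready origin (on $R$ votes or $R_{\mathrm w}$ witnesses) was preceded by at least $M-t$ or $P-t$ honest echoes for $r$, and that these reach all parties by the same deadline. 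If a count falls short of $k$, we would instead have honest parties forward their own fragment once on first sending ready. This keeps per-party payload $O(L/k)$ and arrives within the existing step.

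\textbf{Communication.} We count bits. The broadcaster sends $N$ fragments, totalling $O(nL/k)$ bits, which is $O(L)$ for $k=\Theta(n)$. Each echo multicast costs $n\cdot O(L/k)$, so all echoes cost $O(n^2L/k)=O(nL)$. Every remaining multicast carries $O(\kappa\log n)$ bits, giving $O(\kappa n^2\log n)$. With a Byzantine broadcaster, the optional fragment forward at most doubles the payload. Tracking the constants for $k$ close to $N/3$ gives the leading $3nL$ and $6nL$ bounds. Only pre-corruption sends are charged, as in the proof of Theorem~\ref{thm:paired}.
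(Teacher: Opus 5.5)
Your proposal has a genuine gap in data availability. It comes from letting honest votes, and hence readies, form on roots whose committed vector is not a codeword, with $\bot$ as the output. Once such a root reaches an honest ready origin and a $\bot$-delivery, every other honest party must still collect $k$ fragments that verify against it, and nothing in your rules guarantees this.

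At $t=f-1$, the honest echoes behind a vote origin number only $M-t=f+s-b+2$. That is below your $k=f+1$ whenever $b\ge s+2$, so your fallback is the crux, not a contingency. Forwarding one's own fragment on ready cannot fill the deficit: a voter that never received a preload has no own-index fragment verifying against a noncodeword root.

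Concretely, take $n=10$, $f=3$, $b=2$, so $M=5$, $C=6$, $Q=R=7$ and $k=4$.
\begin{enumerate}
\item A Byzantine broadcaster commits to a vector that agrees with a codeword except at the indices of $h_4,\dots,h_7$, and preloads only $h_1,h_2,h_3$, who echo.
\item Byzantine $z_1,z_2$ send echoes with valid fragments only to $h_1,\dots,h_5$. These five hold $M$ echoes and vote. With two Byzantine votes they reach $R$ and send ready, and $h_1$ collects $2f+1$ readies and delivers $\bot$. No confirms exist, since $5<C$.
\item $h_6,h_7$ receive five honest readies. However your ready rule treats missing fragments, they only ever receive the three fragments of $h_1,h_2,h_3$. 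Re-encoding at $h_4,h_5$ yields a wrong value at their own index, which has no valid proof under binding.
\end{enumerate}
So $h_6,h_7$ never deliver. This violates totality itself, not just the three-step bound.

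Separately, your witness-origin count $P-t$ is too weak, since it can equal $s+2$. The usable count is the $C-t\ge k$ honest echoes behind the honest confirmer that every honest witness contains.

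The missing idea, which is the paper's, is to gate votes and confirms on successful reconstruction, not only delivery. Reconstruction here means decoding $k$ fragments, re-encoding the whole vector, and rebuilding the root. In addition, each vote carries the voter's re-encoded own-index fragment and proof, unless that exact fragment already went out with its echo. With $k=P=n-2f$ we have $M-k=f-b\ge0$ and $C,Q\ge k$, so this guard costs no step.

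Two consequences follow:
\begin{enumerate}
\item A noncodeword root can never obtain an honest vote or confirm, hence never an honest ready origin or delivery. This removes $\bot$ entirely.
\item Any vote origin at $T_{\mathrm{orig}}\le T$ has $R-t\ge k+1$ honest voters whose own-index fragments were already multicast. They reach everyone by $T+\delta$, so the ready-relay deadline is unaffected.
\end{enumerate}

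Choosing $k$ of order $n-2f$ rather than $f+1$ is also what keeps $N/k<3$ for every $s$, which your $3nL$ and $6nL$ constants require. With these changes your accounting, at most two fragment multicasts per non-broadcaster, matches the paper's.
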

Appendix~\ref{app:paired-coded} gives the packet rules, the exact
bounds, and the proof. Each recipient echoes its own indexed fragment; a
voter reconstructs and supplies its own fragment if it has not already
sent it; a confirm requires both reconstruction and the own-echo guard.
Every possible first delivery therefore leaves enough honest fragments
in transit to preserve the control protocol's deadlines. Since Merkle
membership alone does not certify a codeword, every reconstruction
re-encodes the vector and checks the committed root. The bound charges
pre-corruption sends and needs no static-fault assumption.

\paragraph{Maximal optimism for every $n$, at four totality steps}
\begin{theorem}[Confirm/ack broadcast]
\label{thm:main}
Let $f\ge1$, $n\ge3f+1$, and
$b=\min\{f,\lfloor n/2\rfloor-f\}$. The confirm/ack protocol is a
signature-free asynchronous reliable broadcast protocol against adaptive
corruption of at most $f$ parties. It guarantees $(2,3,4)$ at budget $b$,
and $(2,2,4)$ when $n\ge4f$, with $O(n^2)$ messages sent while their
authors are uncorrupted, including after delivery.
\end{theorem}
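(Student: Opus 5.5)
The protocol I would build is Opt-RBC's fast-echo/vote--ack--ready structure with a confirm layer added, as sketched in the introduction. Non-broadcasters echo the proposal. Each party votes on $M$ echoes, and confirms $v$ on $C=N-f$ echoes of $v$, under the own-echo guard. It sends an ack on $R$ votes or on $f+1$ confirms, sends ready on a quorum of acks or on $f+1$ readies, and delivers on $Q=N-b$ echoes, on $C$ confirms, or on $2f+1$ readies. The thresholds follow \eqref{eq:paired-thresholds}, with $b=\min\{f,\lfloor n/2\rfloor-f\}$. When $n\ge4f$ and $b=f$ we have $Q=N-f=C$, so echo delivery alone gives two steps under all $f$ faults, which is the $(2,2,4)$ claim.

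The proof has three parts, mirroring Lemmas~\ref{lem:paired-safety} and~\ref{lem:paired-latency}.

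First, safety. I would reuse the intersection inequalities \eqref{eq:paired-intersections} that do not involve the witness bound $b\le2s+3$. These are $2Q-N$, $2C-N$, $2R-N$, $Q+C-N$, $Q+M-N$, $C+R-N$, together with the single-party fact $C+M>N$, and they need only $b\le\lfloor n/2\rfloor-f$ and $b\le f$. They give: honest confirms are unique; an echo delivery excludes opposite honest votes and confirms; a confirm delivery excludes an opposite $R$-vote quorum. What Lemma~\ref{lem:paired-compatibility} supplied for witnesses must now be supplied for acks. Here $f+1$ confirms contain an honest confirm, so they carry the unique confirm value. The remaining conflict is an ack from $R$ votes for $w$ against an ack from $f+1$ confirms for $v$; this is exactly the unsafe pairing noted in the design rationale. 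I would resolve it by making the ready threshold on acks large enough, say $N-f+1$ acks or $2f+1$ acks among all parties, that two ack quorums share an honest party. Then only one ack value can reach a ready quorum, every honest ready traces back to a single origin value, and deliveries agree with ready origins through the table-style case analysis. Validity follows as before, since every quorum threshold exceeds $f$.

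Second, latency. With an honest broadcaster, echo delivery by step two under at most $b$ faults and confirm delivery by step three under $f$ faults are exactly as in Lemma~\ref{lem:paired-latency}. For totality with a Byzantine broadcaster, let the first honest delivery be at $T$. An echo delivery leaves at least $M$ honest echoes, so there are votes by $T+\delta$ and acks by $T+2\delta$. A confirm delivery leaves at least $f+1$ honest confirms, so there are acks by $T+\delta$. From acks we get readies by $T+3\delta$ and delivery by $T+4\delta$. A ready delivery gives relays and delivery within $2\delta$. The crux is checking that no honest party is blocked earlier by having acked or voted the opposite value, and the safety part rules that out. Message count is a constant number of multicasts per party, so $O(n^2)$ messages, counting pre-corruption sends as in the proof of Theorem~\ref{thm:paired}.

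The main obstacle is the ack layer, specifically proving that every honest party eventually acks the delivered value. An honest party whose vote slots already filled toward nothing useful must still be able to ack via the $f+1$ confirm path or the $R$-vote path. I also need the ack quorum threshold to be reachable by $N-t$ honest parties while still forcing intersections. I would first try acks counted among all $n$ parties, with ready on $2f+1$ acks; the check is that $n-f\ge2f+1$ honest acks all arrive within one step of $T+2\delta$.
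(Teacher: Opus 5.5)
Your proof works if you ready on $N-f+1$ non-broadcaster acks, and its skeleton is the paper's. A delivery on echoes or on confirms excludes every opposite honest ack (Lemma~\ref{lem:delivery-ack}). Ack quorums share an honest party, so all honest readies trace back to one value (Lemma~\ref{lem:ready}). The three totality branches give $+4$, $+3$, $+2$ steps (Lemma~\ref{lem:totality}).

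What differs is the instantiation, so strictly you prove the guarantees for a sibling of the paper's protocol. Algorithm~\ref{alg:rbc} has no own-echo guard. It uses one Bracha-style threshold $C=\lceil(N+f)/2\rceil$ for confirming, for confirm delivery, for acking on votes, and for readying on acks, together with $Q=N-b$ and $M=\min\{C,n-f-b\}$. The paper derives everything once from the abstract system~\eqref{eq:general-inequalities}, where $2C\ge N+f$ does the work of your $C+R-N>t$ and $2R-N>t$. Lemma~\ref{lem:admissible} then checks admissibility for every $b\le\min\{f,\lfloor n/2\rfloor-f\}$.

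You instead import $C=N-f$, $M=Q-f+1$ and $R=N-f+1$. The facts you need are then exactly the witness-free entries of~\eqref{eq:paired-intersections} plus $C+M>N$. These are pure arithmetic in $b\le f$ and $b\le\lfloor n/2\rfloor-f$, and they stay valid at $f=1$ and for $n\ge4f$; the theorem needs both, although Section~\ref{sec:paired} assumed $f\ge2$. Your route avoids new inequality checks. The paper's route uses smaller quorums when $n>3f+1$, and it isolates the inequalities that it reuses for the fault-dependent profile (Proposition~\ref{prop:confirm-ack-profile}).

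Do not use your other option, readying on $2f+1$ acks counted among all $n$ parties, which you say you would try first. You checked it only for liveness, and it is unsafe. Two such quorums overlap in $4f+2-n$ parties, which exceeds $f$ only at $n=3f+1$. Conflicting honest acks really occur, because $R$ votes for $w$ can coexist with $f+1$ confirms for $v$.

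Here is a concrete failure at $(n,f,b)=(14,4,3)$, where $Q=10$, $C=9$, $M=7$ and $R=10$:
\begin{enumerate}
\item Corrupt the broadcaster and three non-broadcasters. The broadcaster proposes $v$ to six honest parties and $w$ to the other four.
\item The faulty non-broadcasters echo $v$ to two of the $v$-echoers and $w$ to the other eight honest parties. The two then confirm $v$, and the eight vote $w$.
\item This gives $f+1=5$ confirms for $v$ and $11\ge R$ votes for $w$. Schedule five honest acks for each value, and complete each to $2f+1=9$ with the four faulty acks.
\item Each group of five readies its own value. Five honest plus four faulty readies then deliver $v$ at one party and $w$ at another.
\end{enumerate}
No echo or confirm delivery quorum ever forms, so agreement fails.

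Finally, state explicitly that $N-f+1$ acks may be unreachable with an honest broadcaster and $f$ faulty non-broadcasters. This is harmless: that case delivers on confirms by step three, and honest-broadcaster totality is at most $d_s$.
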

The protocol and proof are in Appendix~\ref{sec:baseline-results}.
At $n=3f+1$, Proposition~\ref{prop:confirm-ack-profile} gives its exact
fault-dependent totality profile. The matching executions concern this
protocol; four steps are not a lower bound for reliable broadcast in general.

\section{Related work}
\label{sec:related}

\paragraph{Optimistic broadcast and latency}
Shrestha et al.\ study optimistic reliable broadcast and use Opt-RBC in
Sailfish++~\cite{ShresthaEtAl2026OptimisticRBC}. We cite the corrected
arXiv v4, not the superseded CCS 2025 protocol; theorem and figure
numbers refer to v4. At
$n=3f+1$ we retain its fast echo threshold and its vote mechanism;
paired witnesses replace its acknowledgment round. Its Theorem~8 bounds
the optimistic budget by $\lfloor n/2\rfloor-f$ below $4f$, and
Theorem~4 of Abraham, Ren, and Xiang (Theorem~10 at OPODIS 2021)
requires $n\ge4f$ for two-step delivery under all $f$
faults~\cite{AbrahamRenXiang2021Latency}; Theorem~\ref{thm:totality-lower}
adds a totality bound already at budget $n-3f$, including budget one
at minimum resilience. Both cited lower bounds use
four-execution indistinguishability arguments. Our six-execution proof
preserves an early honest-party delivery under nonuniform delays to force
a relative totality deadline; its final bridges use eventual totality.
Yu et al.\ define elapsed totality
delay and tabulate it for existing protocols~\cite{YuEtAl2026SimpleIT}.

Abraham, Ren, and Xiang also study the cost of two-round delivery at
$n\ge4f$~\cite[Section~4]{AbrahamRenXiang2021Latency}. Their Theorem~5
rules out a two-round bad case for $f\ge2$; Theorem~7 gives a four-round
bad-case lower bound when $f\ge3$ and $n\le5f-2$, in their round
convention. Their Theorem~3 supplies a two-round good-case protocol at
$n\ge4f$ whose rules give elapsed totality at most three: after fast
delivery, honest acks reach everyone, followed by vote-1 and vote-2;
vote-2 delivery needs at most two further steps. This is a $(2,2,3)$
upper bound at full budget in our convention. Thus our
$b=f$ specialization at $n\ge4f$ recovers an existing guarantee.
Their Theorems~8 and~9 give the special $f=2$ construction and a
three-round bad case at $n\ge5f-1$. These are the direct precedents for
studying the totality cost of a fast path. We do not convert their lower
bounds into elapsed-time bounds: our theorem concerns restricted
optimistic budgets below $4f$ and is witnessed by honest-broadcaster
spread, not by a lower bound on Byzantine-broadcaster recovery.

Imbs and Raynal give a signature-free two-step protocol for $n>5f$;
the corrected archive states this resilience restriction
explicitly~\cite{ImbsRaynal2016TwoStep}; ARX's $n\ge5f-1$ construction
improves that resilience for its fast bad-case guarantee.
Abraham, Nayak, Ren, and
Xiang's good-case categorization treats authenticated broadcast
with signatures~\cite{AbrahamNayakRenXiang2021GoodCase}; it is distinct
from the signature-free lower bounds used here.

\paragraph{Communication and dispersal}
Opt-RBC v4 already includes a balanced coded realization with
$O(nL+\kappa n^2\log n)$ bits~\cite[Section~5]{ShresthaEtAl2026OptimisticRBC}.
DXR~\cite{DasXiangRen2021ADD} and
CC-BRB~\cite{AlhaddadEtAl2022HashBRB} use hashing without trusted setup
and have $O(nL+\kappa n^2)$ bits; CC-BRB also has a three-step
honest-broadcaster path. Appendix~\ref{app:paired-coded} compares their
payload and metadata constants with ours.
Unsigned OptMiniCast has leading payload $1.5nL$ with an honest
broadcaster and $2nL$ universally~\cite{LocherShoup2024MiniCast}, and
MiniCast* reaches $1.5nL$ universally with hash functions; its two-step
variants use signatures~\cite{LocherShoup2025MiniCastStar}. Both beat
the universal payload coefficient of our coded realization, which instead
combines a two-step optimistic path with three-step totality.
Our coefficients are upper bounds, not lower bounds.
Merkle-authenticated fragments and reconstruction by re-encoding are established in
dispersal~\cite{CachinTessaro2005AVID,AlhaddadReyzinVaria2024CommittingAVID,
LocherPignolet2026AVID,AlhaddadEtAl2022HashBRB}. Our coded proof shows
availability at each delivery deadline of the paired-witness protocol.

\paragraph{Other termination models}
Lewis-Pye and O'Grady study unconditional Byzantine broadcast termination
under partial synchrony~\cite{LewisPyeOGrady2026Carnot}; those bounds do
not transfer directly to asynchronous reliable broadcast. Availability-graded
broadcast omits totality and has a separate resolver~\cite{Polyanskii2026Vantage}.
Quit-resistant broadcast permits stopping behaviors~\cite{MizrahiErbesWattenhofer2024Gather}
that our constructions do not support: parties complete their bounded
sends after delivery.

\section{Conclusion and open problems}
\label{sec:conclusion}

For $3f+1\le n<4f$, confirm/ack achieves $(2,3,4)$ at maximal optimism,
preserving the optimal three-step good case. Our lower bound shows that
totality delay at least three is unavoidable for deterministic protocols
in our model whenever $b\ge n-3f$, even with an honest broadcaster.
Paired witnesses attain this bound in the matching parameter ranges
established above, including maximal optimism. Both constructions use
quadratic messages.

Two directions remain. First, can a quadratic-message $(2,3,3)$ protocol
achieve maximal optimism throughout $3f+1\le n<4f$? The principal gap is
at minimum resilience, $n=3f+1$, where paired witnesses support only
$b\le\min\{3,b_*\}$. Second, what is the optimal totality delay for
smaller optimistic budgets $1\le b<n-3f$, which our lower bound does
not cover? In particular, which parameters permit $(2,3,2)$?

\bibliographystyle{ACM-Reference-Format}
\bibliography{references}

%
%
%

\begin{thebibliography}{20}

%
%
%
%
%
%
%
%
%
%
%
%
%
%
%
%

\ifx \showCODEN    \undefined \def \showCODEN     #1{\unskip}     \fi
\ifx \showDOI      \undefined \def \showDOI       #1{#1}\fi
\ifx \showISBNx    \undefined \def \showISBNx     #1{\unskip}     \fi
\ifx \showISBNxiii \undefined \def \showISBNxiii  #1{\unskip}     \fi
\ifx \showISSN     \undefined \def \showISSN      #1{\unskip}     \fi
\ifx \showLCCN     \undefined \def \showLCCN      #1{\unskip}     \fi
\ifx \shownote     \undefined \def \shownote      #1{#1}          \fi
\ifx \showarticletitle \undefined \def \showarticletitle #1{#1}   \fi
\ifx \showURL      \undefined \def \showURL       {\relax}        \fi
%
%
\providecommand\bibfield[2]{#2}
\providecommand\bibinfo[2]{#2}
\providecommand\natexlab[1]{#1}
\providecommand\showeprint[2][]{arXiv:#2}

\bibitem[Abraham et~al\mbox{.}(2021a)]%
        {AbrahamNayakRenXiang2021GoodCase}
\bibfield{author}{\bibinfo{person}{Ittai Abraham}, \bibinfo{person}{Kartik
  Nayak}, \bibinfo{person}{Ling Ren}, {and} \bibinfo{person}{Zhuolun Xiang}.}
  \bibinfo{year}{2021}\natexlab{a}.
\newblock \showarticletitle{Good-case Latency of {Byzantine} Broadcast: A
  Complete Categorization}. In \bibinfo{booktitle}{\emph{Proceedings of the ACM
  Symposium on Principles of Distributed Computing (PODC)}}.
  \bibinfo{pages}{331--341}.
\newblock
\urldef\tempurl%
\url{https://arxiv.org/abs/2102.07240v1}
\showURL{%
\tempurl}
\newblock
\shownote{Cited archive: arXiv:2102.07240v1, February 14, 2021}.


\bibitem[Abraham et~al\mbox{.}(2021b)]%
        {AbrahamRenXiang2021Latency}
\bibfield{author}{\bibinfo{person}{Ittai Abraham}, \bibinfo{person}{Ling Ren},
  {and} \bibinfo{person}{Zhuolun Xiang}.} \bibinfo{year}{2021}\natexlab{b}.
\newblock \showarticletitle{{Good-Case and Bad-Case Latency of Unauthenticated
  Byzantine Broadcast: A Complete Categorization}}. In
  \bibinfo{booktitle}{\emph{25th International Conference on Principles of
  Distributed Systems (OPODIS 2021)}} \emph{(\bibinfo{series}{Leibniz
  International Proceedings in Informatics (LIPIcs)},
  Vol.~\bibinfo{volume}{217})}. \bibinfo{publisher}{Schloss Dagstuhl --
  Leibniz-Zentrum f{\"u}r Informatik}, \bibinfo{address}{Dagstuhl, Germany},
  \bibinfo{pages}{5:1--5:20}.
\newblock
\urldef\tempurl%
\url{https://doi.org/10.4230/LIPIcs.OPODIS.2021.5}
\showDOI{\tempurl}
\showeprint[arxiv]{2109.12454v1}
\newblock
\shownote{We cite arXiv:2109.12454v1 (2021-09-25); its Theorem 4 and Definition
  5 are Theorem 10 and Definition 7 in the OPODIS version}.


\bibitem[Alhaddad et~al\mbox{.}(2022)]%
        {AlhaddadEtAl2022HashBRB}
\bibfield{author}{\bibinfo{person}{Nicolas Alhaddad}, \bibinfo{person}{Sisi
  Duan}, \bibinfo{person}{Mayank Varia}, {and} \bibinfo{person}{Haibin Zhang}.}
  \bibinfo{year}{2022}\natexlab{}.
\newblock \bibinfo{title}{Practical and Improved Byzantine Reliable Broadcast
  and Asynchronous Verifiable Information Dispersal from Hash Functions}.
\newblock \bibinfo{howpublished}{ePrint 2022/171}.
\newblock
\urldef\tempurl%
\url{https://eprint.iacr.org/2022/171}
\showURL{%
\tempurl}


\bibitem[Alhaddad et~al\mbox{.}(2024)]%
        {AlhaddadReyzinVaria2024CommittingAVID}
\bibfield{author}{\bibinfo{person}{Nicolas Alhaddad}, \bibinfo{person}{Leonid
  Reyzin}, {and} \bibinfo{person}{Mayank Varia}.}
  \bibinfo{year}{2024}\natexlab{}.
\newblock \bibinfo{title}{{Committing AVID with Partial Retrieval and Optimal
  Storage}}.
\newblock \bibinfo{howpublished}{Cryptology ePrint Archive, Paper 2024/685}.
\newblock
\urldef\tempurl%
\url{https://eprint.iacr.org/2024/685}
\showURL{%
\tempurl}
\newblock
\shownote{May 4, 2024 submission}.


\bibitem[Bracha(1987)]%
        {Bracha1987}
\bibfield{author}{\bibinfo{person}{Gabriel Bracha}.}
  \bibinfo{year}{1987}\natexlab{}.
\newblock \showarticletitle{Asynchronous {Byzantine} Agreement Protocols}.
\newblock \bibinfo{journal}{\emph{Information and Computation}}
  \bibinfo{volume}{75}, \bibinfo{number}{2} (\bibinfo{year}{1987}),
  \bibinfo{pages}{130--143}.
\newblock
\urldef\tempurl%
\url{https://doi.org/10.1016/0890-5401(87)90054-X}
\showDOI{\tempurl}


\bibitem[Cachin and Tessaro(2005)]%
        {CachinTessaro2005AVID}
\bibfield{author}{\bibinfo{person}{Christian Cachin} {and}
  \bibinfo{person}{Stefano Tessaro}.} \bibinfo{year}{2005}\natexlab{}.
\newblock \showarticletitle{Asynchronous Verifiable Information Dispersal}. In
  \bibinfo{booktitle}{\emph{Proceedings of the IEEE Symposium on Reliable
  Distributed Systems (SRDS)}}. \bibinfo{pages}{191--202}.
\newblock
\urldef\tempurl%
\url{https://people.csail.mit.edu/tessaro/papers/dds.pdf}
\showURL{%
\tempurl}
\newblock
\shownote{Cited archive: author-hosted full version dated November 15, 2004}.


\bibitem[Danezis et~al\mbox{.}(2022)]%
        {DanezisEtAl2022NarwhalTusk}
\bibfield{author}{\bibinfo{person}{George Danezis}, \bibinfo{person}{Lefteris
  Kokoris-Kogias}, \bibinfo{person}{Alberto Sonnino}, {and}
  \bibinfo{person}{Alexander Spiegelman}.} \bibinfo{year}{2022}\natexlab{}.
\newblock \showarticletitle{Narwhal and Tusk: A {DAG}-based Mempool and
  Efficient {BFT} Consensus}. In \bibinfo{booktitle}{\emph{Proceedings of the
  Seventeenth European Conference on Computer Systems (EuroSys)}}.
  \bibinfo{pages}{34--50}.
\newblock
\urldef\tempurl%
\url{https://doi.org/10.1145/3492321.3519594}
\showDOI{\tempurl}


\bibitem[Das et~al\mbox{.}(2021)]%
        {DasXiangRen2021ADD}
\bibfield{author}{\bibinfo{person}{Sourav Das}, \bibinfo{person}{Zhuolun
  Xiang}, {and} \bibinfo{person}{Ling Ren}.} \bibinfo{year}{2021}\natexlab{}.
\newblock \showarticletitle{Asynchronous Data Dissemination and its
  Applications}. In \bibinfo{booktitle}{\emph{Proceedings of the ACM SIGSAC
  Conference on Computer and Communications Security (CCS)}}.
  \bibinfo{pages}{2705--2721}.
\newblock
\urldef\tempurl%
\url{https://eprint.iacr.org/2021/777}
\showURL{%
\tempurl}
\newblock
\shownote{Cited archive: ePrint 2021/777}.


\bibitem[Imbs and Raynal(2016)]%
        {ImbsRaynal2016TwoStep}
\bibfield{author}{\bibinfo{person}{Damien Imbs} {and} \bibinfo{person}{Michel
  Raynal}.} \bibinfo{year}{2016}\natexlab{}.
\newblock \showarticletitle{Trading off $t$-Resilience for Efficiency in
  Asynchronous {Byzantine} Reliable Broadcast}.
\newblock \bibinfo{journal}{\emph{Parallel Processing Letters}}
  \bibinfo{volume}{26}, \bibinfo{number}{4} (\bibinfo{year}{2016}),
  \bibinfo{pages}{1650017}.
\newblock
\urldef\tempurl%
\url{https://doi.org/10.1142/S0129626416500171}
\showDOI{\tempurl}
\newblock
\shownote{Cited archive: corrected arXiv:1510.06882v2, September 28, 2021;
  requires n > 5f}.


\bibitem[Lewis-Pye and O'Grady(2026)]%
        {LewisPyeOGrady2026Carnot}
\bibfield{author}{\bibinfo{person}{Andrew Lewis-Pye} {and}
  \bibinfo{person}{Patrick O'Grady}.} \bibinfo{year}{2026}\natexlab{}.
\newblock \bibinfo{title}{{The Carnot Bound: Limits and Possibilities for
  Bandwidth-Efficient Consensus}}.
\newblock
\newblock
\showeprint[arxiv]{2603.11797v3}
\urldef\tempurl%
\url{https://arxiv.org/abs/2603.11797v3}
\showURL{%
\tempurl}
\newblock
\shownote{arXiv:2603.11797v3; 2026-07-28}.


\bibitem[Locher and Pignolet(2026)]%
        {LocherPignolet2026AVID}
\bibfield{author}{\bibinfo{person}{Thomas Locher} {and}
  \bibinfo{person}{Yvonne-Anne Pignolet}.} \bibinfo{year}{2026}\natexlab{}.
\newblock \showarticletitle{{Asynchronous Verifiable Information Dispersal with
  Low Space and Communication Complexity}}. In
  \bibinfo{booktitle}{\emph{Proceedings of the 38th ACM Symposium on
  Parallelism in Algorithms and Architectures}}. \bibinfo{publisher}{ACM},
  \bibinfo{address}{New York, NY, USA}, \bibinfo{pages}{143--154}.
\newblock
\urldef\tempurl%
\url{https://doi.org/10.1145/3816782.3819202}
\showDOI{\tempurl}
\newblock
\shownote{arXiv:2608.24636v2, August 26, 2026}.


\bibitem[Locher and Shoup(2025)]%
        {LocherShoup2024MiniCast}
\bibfield{author}{\bibinfo{person}{Thomas Locher} {and} \bibinfo{person}{Victor
  Shoup}.} \bibinfo{year}{2025}\natexlab{}.
\newblock \showarticletitle{{MiniCast: Minimizing the Communication Complexity
  of Reliable Broadcast}}. In \bibinfo{booktitle}{\emph{Advances in Cryptology
  -- EUROCRYPT 2025, Part V}} \emph{(\bibinfo{series}{Lecture Notes in Computer
  Science})}. \bibinfo{publisher}{Springer}, \bibinfo{address}{Cham},
  \bibinfo{pages}{96--115}.
\newblock
\urldef\tempurl%
\url{https://doi.org/10.1007/978-3-031-91092-0_4}
\showDOI{\tempurl}
\newblock
\shownote{We cite ePrint 2024/571 (last revised 2024-04-26); section and figure
  numbers refer to that version}.


\bibitem[Locher and Shoup(2026)]%
        {LocherShoup2025MiniCastStar}
\bibfield{author}{\bibinfo{person}{Thomas Locher} {and} \bibinfo{person}{Victor
  Shoup}.} \bibinfo{year}{2026}\natexlab{}.
\newblock \showarticletitle{{Towards Reliable Broadcast with Optimal
  Communication and Round Complexity}}. In
  \bibinfo{booktitle}{\emph{Proceedings of the 38th ACM Symposium on
  Parallelism in Algorithms and Architectures (SPAA '26)}}.
  \bibinfo{publisher}{ACM}, \bibinfo{address}{New York, NY, USA},
  \bibinfo{pages}{155--166}.
\newblock
\urldef\tempurl%
\url{https://doi.org/10.1145/3816782.3819206}
\showDOI{\tempurl}
\newblock
\shownote{We cite ePrint 2025/779, revision 2025-10-03; section, figure, and
  equation numbers refer to that version}.


\bibitem[Mizrahi~Erbes and Wattenhofer(2024)]%
        {MizrahiErbesWattenhofer2024Gather}
\bibfield{author}{\bibinfo{person}{Mose Mizrahi~Erbes} {and}
  \bibinfo{person}{Roger Wattenhofer}.} \bibinfo{year}{2024}\natexlab{}.
\newblock \showarticletitle{{Quit-Resistant Reliable Broadcast and Efficient
  Terminating Gather}}. In \bibinfo{booktitle}{\emph{28th International
  Conference on Principles of Distributed Systems (OPODIS 2024)}}
  \emph{(\bibinfo{series}{Leibniz International Proceedings in Informatics},
  Vol.~\bibinfo{volume}{324})}. \bibinfo{publisher}{Schloss Dagstuhl --
  Leibniz-Zentrum f{\"u}r Informatik}, \bibinfo{address}{Dagstuhl, Germany},
  \bibinfo{pages}{15:1--15:22}.
\newblock
\urldef\tempurl%
\url{https://doi.org/10.4230/LIPIcs.OPODIS.2024.15}
\showDOI{\tempurl}


\bibitem[Polyanskii(2026)]%
        {Polyanskii2026Vantage}
\bibfield{author}{\bibinfo{person}{Nikita Polyanskii}.}
  \bibinfo{year}{2026}\natexlab{}.
\newblock \bibinfo{title}{{Vantage: Availability-Graded Broadcast for
  Signature-Free BFT}}.
\newblock
\newblock
\showeprint[arxiv]{2608.16504v2}
\urldef\tempurl%
\url{https://arxiv.org/abs/2608.16504v2}
\showURL{%
\tempurl}
\newblock
\shownote{August 30, 2026}.


\bibitem[Raikwar et~al\mbox{.}(2024)]%
        {RaikwarEtAl2024SoKDAG}
\bibfield{author}{\bibinfo{person}{Mayank Raikwar}, \bibinfo{person}{Nikita
  Polyanskii}, {and} \bibinfo{person}{Sebastian M{\"u}ller}.}
  \bibinfo{year}{2024}\natexlab{}.
\newblock \showarticletitle{{SoK}: {DAG}-based Consensus Protocols}. In
  \bibinfo{booktitle}{\emph{IEEE International Conference on Blockchain and
  Cryptocurrency (ICBC)}}. \bibinfo{pages}{1--18}.
\newblock
\urldef\tempurl%
\url{https://doi.org/10.1109/ICBC59979.2024.10634358}
\showDOI{\tempurl}
\newblock
\shownote{Revised version: arXiv:2411.10026v2, April 22, 2025}.


\bibitem[Shrestha et~al\mbox{.}(2025a)]%
        {ShresthaEtAl2026OptimisticRBC}
\bibfield{author}{\bibinfo{person}{Nibesh Shrestha}, \bibinfo{person}{Qianyu
  Yu}, \bibinfo{person}{Aniket Kate}, \bibinfo{person}{Giuliano Losa},
  \bibinfo{person}{Kartik Nayak}, {and} \bibinfo{person}{Xuechao Wang}.}
  \bibinfo{year}{2025}\natexlab{a}.
\newblock \showarticletitle{{Optimistic, Signature-Free Reliable Broadcast and
  Its Applications}}. In \bibinfo{booktitle}{\emph{Proceedings of the 2025 ACM
  SIGSAC Conference on Computer and Communications Security (CCS '25)}}.
  \bibinfo{publisher}{ACM}, \bibinfo{address}{New York, NY, USA}.
\newblock
\urldef\tempurl%
\url{https://doi.org/10.1145/3719027.3765220}
\showDOI{\tempurl}
\showeprint[arxiv]{2505.02761v4}
\newblock
\shownote{Distinguished paper. We cite the revised arXiv version
  arXiv:2505.02761v4 (2026-05-24), which fixes a termination issue in the
  earlier design; theorem and definition numbers refer to v4}.


\bibitem[Shrestha et~al\mbox{.}(2025b)]%
        {ShresthaEtAl2025OptimisticRBCSuperseded}
\bibfield{author}{\bibinfo{person}{Nibesh Shrestha}, \bibinfo{person}{Qianyu
  Yu}, \bibinfo{person}{Aniket Kate}, \bibinfo{person}{Giuliano Losa},
  \bibinfo{person}{Kartik Nayak}, {and} \bibinfo{person}{Xuechao Wang}.}
  \bibinfo{year}{2025}\natexlab{b}.
\newblock \bibinfo{title}{{Optimistic, Signature-Free Reliable Broadcast and
  Its Applications}}.
\newblock
\newblock
\showeprint[arxiv]{2505.02761v2}
\urldef\tempurl%
\url{https://arxiv.org/abs/2505.02761v2}
\showURL{%
\tempurl}
\newblock
\shownote{arXiv:2505.02761v2; 2025-08-29; SUPERSEDED}.


\bibitem[Spiegelman et~al\mbox{.}(2022)]%
        {SpiegelmanEtAl2022Bullshark}
\bibfield{author}{\bibinfo{person}{Alexander Spiegelman}, \bibinfo{person}{Neil
  Giridharan}, \bibinfo{person}{Alberto Sonnino}, {and}
  \bibinfo{person}{Lefteris Kokoris-Kogias}.} \bibinfo{year}{2022}\natexlab{}.
\newblock \showarticletitle{Bullshark: {DAG BFT} Protocols Made Practical}. In
  \bibinfo{booktitle}{\emph{Proceedings of the ACM SIGSAC Conference on
  Computer and Communications Security (CCS)}}. \bibinfo{pages}{2705--2718}.
\newblock
\urldef\tempurl%
\url{https://doi.org/10.1145/3548606.3559361}
\showDOI{\tempurl}


\bibitem[Yu et~al\mbox{.}(2026)]%
        {YuEtAl2026SimpleIT}
\bibfield{author}{\bibinfo{person}{Qianyu Yu}, \bibinfo{person}{Juan Villacis},
  \bibinfo{person}{Giuliano Losa}, \bibinfo{person}{Zhuolun Xiang}, {and}
  \bibinfo{person}{Xuechao Wang}.} \bibinfo{year}{2026}\natexlab{}.
\newblock \bibinfo{title}{{Simple-IT: Practical Low-Latency Signature-Free BFT
  Consensus}}.
\newblock
\newblock
\showeprint[arxiv]{2606.14404v1}
\urldef\tempurl%
\url{https://arxiv.org/abs/2606.14404v1}
\showURL{%
\tempurl}
\newblock
\shownote{June 12, 2026}.


\end{thebibliography}

\appendix
\section{Complete six-execution lower-bound proof}
\label{app:totality-lower}

\subsection{Two steps are necessary}
\begin{lemma}[Folklore]
\label{lem:two-step-minimum}
In the model of Section~\ref{sec:model}, a reliable broadcast protocol tolerating
one Byzantine broadcaster cannot have an optimistic good-case latency
strictly below two steps, even if that latency is required only in
fault-free executions.
\end{lemma}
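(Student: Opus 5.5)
We argue by contradiction. Suppose a protocol has optimistic good-case latency $d<2$ steps in fault-free executions. We build two executions that some honest party cannot tell apart but in which it must deliver different values. Fix distinct inputs $A,B\in\mathcal V$. Since only the broadcaster knows its input, and all other parties have public, input-independent initial states, every non-broadcaster's initial packets are the same regardless of the input.

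\textbf{Step 1: the fast executions.} Let $F_A$ be the fault-free execution with input $A$ in which every interparty message takes exactly $\delta$. Pick an honest non-broadcaster $p$. By the latency hypothesis, $p$ delivers $A$ by some time $d\delta<2\delta$. Any packet $p$ receives before $2\delta$ was sent before $\delta$. A packet sent by a non-broadcaster $q\ne p$ before $\delta$ was sent while $q$ had received nothing except possibly self-events, because everything sent to $q$ arrives no earlier than $\delta$. We handle sends at exactly time $\delta$ by taking delays slightly larger, or equivalently by using a strict comparison at $d\delta<2\delta$. Hence $q$'s packets to $p$ before $2\delta$ depend only on $q$'s input-independent initial state and its timers, which are fixed by the code. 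They are therefore identical across inputs. So $p$'s history before $2\delta$ depends only on the broadcaster's packets to $p$.

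\textbf{Step 2: the bridge execution.} Let $F_B$ be the analogous execution with input $B$, and take $p'$ to be an honest non-broadcaster distinct from $p$. This requires $n\ge3$, which holds since $n\ge3f+1\ge4$. We now build an execution $W$ with a Byzantine broadcaster and no other faults. In $W$ the broadcaster sends $p$ exactly its $F_A$ stream and sends $p'$ exactly its $F_B$ stream. All delays are $\delta$ until time $2\delta$, so every non-broadcaster behaves as in Step 1 up to that point. Consequently $p$'s history before $2\delta$ in $W$ equals its history in $F_A$, and $p'$'s equals its history in $F_B$. Thus $p$ delivers $A$ and $p'$ delivers $B$ in $W$. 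Both are honest, so this violates agreement. $W$ has only one fault, the broadcaster, which is tolerated by hypothesis. After $2\delta$ we simply let the execution continue fairly.

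\textbf{Main obstacle.} The delicate point is the causal cut. We must argue carefully that non-broadcaster packets reaching $p$ before $2\delta$ are genuinely input-independent even though timers and zero-time initialization traffic are allowed. Initial packets sent at time $0$ arrive at $\delta$, and packets triggered by them are sent at time at least $\delta$ and arrive at or after $2\delta$. Timer events of a non-broadcaster before it receives any external message depend only on its fixed initial state. So the induction over time, with the strict inequality $d\delta<2\delta$, closes the argument.

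One subtlety remains. In $W$, the broadcaster's packets to other non-broadcasters influence their later sends, but those sends arrive only after $2\delta$, so they cannot affect $p$ or $p'$ before their deliveries. The broadcaster's copying of two streams is legitimate because it is Byzantine and channels are merely authenticated.
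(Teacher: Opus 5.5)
Your proposal is correct and follows essentially the paper's proof: two fault-free executions with every interparty delay exactly $\delta$, a causal cut at $2\delta$ before which no input-dependent non-broadcaster traffic can arrive, and a Byzantine broadcaster replaying its $A$-stream and $B$-stream to different recipients, which contradicts agreement. The paper splits the recipients into two nonempty groups where you use single parties $p\ne p'$, an inessential difference; your strict comparison $d\delta<2\delta$ already handles sends at time $\delta$, so the alternative of enlarging delays is not needed.
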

\begin{proof}
Take two fault-free executions with different broadcaster inputs $A,B$
and every interparty message delayed by exactly $\delta$. Before
$2\delta$, a non-broadcaster receives no input-dependent message from
another non-broadcaster: such a message cannot be sent before
$\delta$. All other non-broadcaster traffic received before the cut
is input-independent and identical in the two executions, including
timer-driven traffic. A Byzantine broadcaster can therefore reproduce
its $A$-execution stream to one nonempty group and its $B$-execution
stream to another. Keep every honest interparty delay equal to
$\delta$. Before $2\delta$, each recipient has precisely its local
history from the corresponding fault-free execution. If all honest
recipients had to deliver strictly before this cut, validity in the
two fault-free executions would force different deliveries in the
mixed execution, contradicting agreement.
\end{proof}

\subsection{Setup}
We give the joint schedule underlying Theorem~\ref{thm:totality-lower}
and prove Lemmas~\ref{lem:causal-cut} and~\ref{lem:common}. The
notation and partition are those of Section~\ref{sec:totality-lower}:
endpoint groups $E_A,E_B$ of size $s+1$, common groups $K_A,K_B$ of
size $f$, and a source group $\mathcal S$ of size $f-s-1$ containing broadcaster
$0$. Every member of $\mathcal S\setminus\{0\}$ is an ordinary recipient.
Choose the anchor $h\in \mathcal S\setminus\{0\}$ if $|\mathcal S|\ge2$, and $h=0$
otherwise. For the recipient-only interface, all occurrences of delivery
and totality below concern non-broadcasters. The choice $h\ne0$ is
available precisely when $n\le4f-2$; only $h=0$ requires the broadcaster to deliver.
Suppose a constant $D<3$ bounds totality delay whenever the broadcaster
is honest and at most $f$ non-broadcasters are faulty. Choose $\varepsilon,\eta>0$ with
\[
 2\varepsilon<\eta,\qquad H=2\varepsilon+D\eta<3\eta,\qquad J=4\eta.
\]
The code and initial states do not depend on the actual network delay
bound. With these fixed for the two inputs $A,B$, we may take $\eta$
small enough that $3\eta$ precedes every positive expiration of a timer
set during initialization. There are finitely many parties and finitely
many initialization events; if no such timer exists, the condition is
vacuous.

The construction never discards a packet or a timer: initial packets
and reactive timers are retained in every execution. Initial packets
are those sent during finite zero-time initialization before any
external receipt; non-broadcasters' initial packets are
input-independent and may be recipient-specific. Self-messages are
local events and are handled identically in compared histories.
Simultaneous events are ordered by one deterministic rule, depending
only on time, sender identity, and content, in all six executions. We
call $F_A$ and $F_B$ the \emph{seeds} of $X_A$ and $X_B$: the forcing
executions are built so that $h$ has its seed's history
through $2\varepsilon$.

\begin{table}[t]
\caption{Executions in the totality lower bound. The last column gives
the honest delivery forced by two-step optimism (fast executions), by the
hypothetical totality bound (forcing executions), or by copied histories
(bridges); ``$E_A{:}A$ by $H$'' means every member of $E_A$ delivers $A$
by time $H$.}
\label{tab:lower-worlds}
\small
\begin{tabular}{@{}llll@{}}
\toprule
Execution & Byzantine set & Largest honest delay & Forced delivery\\
\midrule
$F_A$ (fast) & $E_B$ & $\varepsilon$ & $h{:}A$ by $2\varepsilon$\\
$F_B$ (fast) & $E_A$ & $\varepsilon$ & $h{:}B$ by $2\varepsilon$\\
$X_A$ (forcing) & $K_B$ & $\eta$ & $E_A{:}A$ by $H$\\
$X_B$ (forcing) & $K_A$ & $\eta$ & $E_B{:}B$ by $H$\\
$W_A$ (bridge) & $\mathcal S\cup E_B$ & $J$ & $E_A{:}A$ by $H$\\
$W_B$ (bridge) & $\mathcal S\cup E_A$ & $J$ & $E_B{:}B$ by $H$\\
\bottomrule
\end{tabular}
\end{table}

\subsection{Fast seeds and forcing executions}
In $F_A$, the faulty group $E_B$ sends the initial packets its members
would send, with delay $\varepsilon$, and then remains silent. All honest
messages have delay $\varepsilon$. Define $F_B$ symmetrically, with $E_A$
faulty. The optimistic premise forces the honest anchor $h$ to deliver
$A$ in $F_A$ and $B$ in $F_B$ by $2\varepsilon$.

In $X_A$, delay all initial packets addressed to the honest group $E_B$
until $\eta$, including interparty links within $E_B$. Initial packets
to other honest parties arrive at $\varepsilon$, processed in the same
order as in $F_A$. In particular, $E_B$'s outgoing initial packets arrive
early at early recipients. All noninitial honest packets have delay
$\eta$, except packets addressed to $h$ and sent by time
$\varepsilon$, which have delay $\varepsilon$ (the anchor
exception). Faulty members of $K_B$ replay their $F_A$ streams to $h$
for sends through $\varepsilon$; honest $K_B$ in $W_A$ reacts to the
broadcaster's $B$ stream, so its early packets to $h$ would not match
$F_A$. Their other streams are fixed jointly with the bridges below.
$X_B$ is symmetric, with $E_A$ late and $K_A$ faulty.

Every honest party in $\mathcal S\cup E_A\cup K_A$ has its
seed history on
$[0,\varepsilon]$: only initial packets have arrived. Its receipts after
$\varepsilon$ differ from $F_A$, but any interparty packet it sends after
$\varepsilon$ reaches $h$ after $2\varepsilon$ in both executions.
When $h\ne0$, this reasoning includes broadcaster $0$, whose input and
initial receipts are identical in the seed and forcing execution. The
late group's initial outward packets are preserved, and its noninitial
packets cannot arrive at $h$ this early. The Byzantine seed copies supply
the remaining identities. Thus $h$ has exactly its seed history on
$[0,2\varepsilon]$, including zero-time reactions at the times $0$ and
$2\varepsilon$ themselves. It delivers $A$ in $X_A$ and $B$ in $X_B$ by
$2\varepsilon$. The first honest delivery is no later than this.
Validity and the assumed totality bound therefore force every member of
$E_A$ in $X_A$ to deliver $A$ by $H$, and every member of $E_B$ in $X_B$
to deliver $B$ by $H$.

\subsection{The joint bridge schedule}
The following rules complete $X_A,X_B,W_A,W_B$ simultaneously. When
copying a stream, a Byzantine party sends under its own identity and
preserves arrival times; no honest direct channel is forged. All initial
non-broadcaster packets arrive at $\varepsilon$ at honest bridge
recipients.

\begin{enumerate}
\item In both bridges, every faulty identity $z\in \mathcal S$ copies its own
$X_A$ stream to $E_A\cup K_A$ and its own $X_B$ stream to
$E_B\cup K_B$, preserving arrival times at honest recipients. This
includes $0$ and each ordinary recipient in $\mathcal S\setminus\{0\}$.
\item Noninitial packets sent by common parties in $K_A\cup K_B$ have
delay $\eta$ to honest recipients. In $W_A$, honest members of $E_A$
send with delay $\eta$ to $E_A\cup K_A$ and with delay $J$ to $K_B$. In
$W_B$, honest members of $E_B$ send with delay $\eta$ to $E_B\cup K_B$
and with delay $J$ to $K_A$. Packets to faulty recipients may use delay
$\eta$.
\item For each $q\in E_B$, its Byzantine copy in $W_A$ reproduces honest
$q$'s $W_B$ stream towards every common recipient, with exactly the same
arrival schedule. Towards each $p\in E_A$ it instead reproduces that
recipient's private stream from the late helper $q$ in $X_A$, again with
the same arrival schedule. Each $p\in E_A$ in $W_B$ does the symmetric
copying from $W_A$ towards common recipients and from $X_B$ towards
$E_B$.
\item In $X_A$, each faulty member of $K_B$ copies the stream of honest
$K_B$ in $W_A$ towards the early honest parties
$(\mathcal S\setminus\{h\})\cup E_A\cup K_A$, with initial packets arriving
at $\varepsilon$ and later packets taking $\eta$. To late recipients in $E_B$, its initial
packets arrive at $\eta$ and its subsequent copied packets have delay
$\eta$. To $h$, retain the seed-prefix exception for sends through
$\varepsilon$; later sends copy the $W_A$ counterpart with delay
$\eta$. These cases are disjoint: when $h\ne0$, source $0$ uses the
ordinary copied stream, not the exception. The faulty $K_A$ in $X_B$
behaves symmetrically.
\end{enumerate}

All honest-to-honest delays in the forcing executions are at most $\eta$,
and in the bridges at most $J$. A copy depends on an honest send already
determined at its send time. Every nonlocal delivery has a positive delay
at least $\varepsilon$. Fix initialization first and construct the worlds
by increasing time, or in successive $\varepsilon$-width intervals,
retaining the locally finite timer events and local reactions. Thus the
mutual copying has no dependence on a future receipt. The protocol may
embed entire histories or destination-specific information in a packet:
we copy complete packets and their event order, not summaries of them.

\subsection{Proof of the causal-cut lemma}
We prove simultaneously that $K_A$ has the same histories in $X_A,W_A$
before $2\eta$, $K_B$ in $X_B,W_B$ before $2\eta$, $E_A$ in
$X_A,W_A$ before $3\eta$, and $E_B$ in $X_B,W_B$ before $3\eta$. We
compare, sender identity by sender identity, the packets a party
receives before the cut in the two executions.

Consider $K_A$. In $X_A$, every initial interparty packet to the late group
$E_B$ is delayed, including those on links within $E_B$. No member of $E_B$
receives external information before $\eta$, and no initial positive
timer expires before $3\eta$. Hence it sends no noninitial packet
before $\eta$, and its first such packet reaches $K_A$ no earlier than
$2\eta$. In $W_A$, the copied opposite-bridge stream from $E_B$ to
$K_A$ has delay $J$ for noninitial packets. The initial packets match.
Packets from every member of $\mathcal S$ to $K_A$ are copied from $X_A$
into $W_A$ by rule~1,
and $K_B$'s packets from $W_A$ into $X_A$ by rule~4. A packet from $E_A$
arriving at $K_A$ before $2\eta$ depends on an $E_A$ history before
$\eta$. Within-$K_A$ packets depend on strictly earlier histories. The
claim for $K_B$ is symmetric.

At an endpoint $p\in E_A$, packets from $\mathcal S$ and the direct private
$E_B$ stream are copied from $X_A$ into $W_A$ by rules~1 and~3, and
packets from $K_B$ from $W_A$ into $X_A$ by rule~4. A packet from $K_A$
arriving before $3\eta$ was sent before $2\eta$, during the matching
interval just established. A packet from another member of $E_A$ has
delay $\eta$ and depends on that member's strictly earlier history.
Thus induction on causal events establishes the entire $E_A$-group
match; there is no instantaneous internal group communication. The
$E_B$ argument is symmetric. These dependencies are all to strictly
earlier events, so the simultaneous induction is well founded rather
than circular.

The statements concern complete timed histories. Determinism therefore
preserves timer schedules, every local state transition, and messages
triggered by delivery. A short reactive timer cannot precede its
triggering receipt. The construction does not discard any packet or
stop a party at delivery. The cut is strict: it makes no claim that
endpoint histories still agree at $3\eta$.

Since $H<3\eta$, all forced $E_A$ deliveries are inherited in $W_A$ and
all forced $E_B$ deliveries in $W_B$. In particular, the bridges contain
opposite honest deliveries.

\subsection{Indefinite common histories}
Every member of $K_A\cup K_B$ receives the same stream from each
identity in $\mathcal S$ in both bridges. For each identity in $E_A$, it receives an honest $W_A$
stream and its Byzantine $W_B$ copy; for each identity in $E_B$, an
honest $W_B$ stream and its Byzantine $W_A$ copy. Common-to-common
packets have identical schedules and, inductively, identical contents.
Therefore the common parties have identical complete histories forever,
identity by identity, which is Lemma~\ref{lem:common}. The differing
private $E_A$--$E_B$ streams impose no conflicting channel obligation:
every such channel has a Byzantine party at one end in each bridge.

Continue the causal construction indefinitely, preserving every queued
packet. Honest bridge delays remain at most $J$, so the executions are
fair. Agreement and eventual totality force each common honest party to
deliver $A$ in $W_A$ and $B$ in $W_B$. Identical deterministic histories
make this impossible. We have contradicted every constant $D<3$ already
on honest-broadcaster executions, proving Theorem~\ref{thm:totality-lower}.

\section{Paired-witness details and fast-echo baseline}
\label{app:paired-details}

\subsection{Bracha with fast echo delivery}
\label{sec:fast-bracha}
The following elementary baseline separates honest-broadcaster spread
from Byzantine-broadcaster recovery; it is a modification of
Bracha~\cite{Bracha1987}, not an additional paired-witness result.
Let $N=n-1$, $Q=N-b$, and $E=\lceil(N+f)/2\rceil$.
Non-broadcasters echo the first proposal. Every party sends ready on
$E$ echoes or $f+1$ readies, and delivers on $Q$ echoes or $2f+1$
readies. Sends are one-shot across values, slots are immutable, and
parties continue processing after delivery.

\begin{proposition}[Fast-echo baseline]
\label{prop:fast-bracha}
For $3f+1\le n<4f$, $s=n-3f-1$, and
$1\le b\le\lfloor s/2\rfloor+1$, this protocol has quadratic messages
and guarantees $(2,3,3)$, with $d_t^{\mathrm B}\le2$.
At $n=3f+1,b=1$, Theorem~\ref{thm:totality-lower} gives
$d_t^{\mathrm H}=d_t=3$.
\end{proposition}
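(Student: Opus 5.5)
We prove Proposition~\ref{prop:fast-bracha} the same way as Lemmas~\ref{lem:paired-safety} and~\ref{lem:paired-latency}. The work splits into quorum intersections for safety, step counting for latency, and a direct appeal to Theorem~\ref{thm:totality-lower} for the last sentence. Throughout, $N=3f+s$, $Q=N-b$, $E=\lceil(N+f)/2\rceil$, and $t$ is the number of Byzantine non-broadcasters: $t\le f$ with an honest broadcaster and $t\le f-1$ otherwise.

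\emph{Safety.} Two $E$-echo quorums intersect in $2E-N\ge f>t$ parties when the broadcaster is Byzantine. They therefore share an honest echoer, so all ready origins carry one value. Relays contain an honest earlier ready, and following them backwards (as in Lemma~\ref{lem:paired-safety}) gives that all honest readies share that value.

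Next, a $Q$-echo delivery against an $E$-echo ready origin for a different value needs $Q+E-N>t$. This is $\lceil(N+f)/2\rceil-b>f-1$, which follows from $(N+f)/2=2f+s/2$ and $b\le\lfloor s/2\rfloor+1$; I expect this to be the binding constraint and where the budget $\lfloor s/2\rfloor+1$ comes from. Two $Q$-deliveries agree by $2Q-N=3f+s-2b>t$. Ready deliveries reduce to the ready case.

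Validity holds because $Q,E>f$, so every quorum contains an honest echo of the input. Also $E\le N-f$, since $N\ge3f$, so honest echoes alone can fill an $E$-quorum.

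\emph{Latency.} With an honest broadcaster, the proposal arrives by step one. With at most $b$ faults there are $Q$ honest echoes by step two, which gives $d_o=2$. With $f$ faults there are $N-f\ge E$ honest echoes by step two, so every honest party sends ready by two and all $n-f\ge2f+1$ honest readies arrive by three, which gives $d_s=3$ and hence $d_t^{\mathrm H}\le3$.

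With a Byzantine broadcaster, let the first honest delivery be at $T$. If it is a $Q$-echo delivery, at least $Q-t$ honest echoes of $v$ have been sent, and $Q-t\ge E$. This inequality needs $N-b-(f-1)\ge\lceil(N+f)/2\rceil$, i.e.\ $(N-f)/2+1\ge b$ up to rounding, i.e.\ $s/2+1\ge b$, which again is where the budget enters. So by $T+\delta$ everyone has received $E$ echoes of $v$, and no honest party sent an opposite ready by the safety argument. Every honest party therefore readies, and the $n-f\ge2f+1$ honest readies arrive by $T+2\delta$.

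If it is a ready delivery, at least $f+1$ honest readies arrive everywhere by $T+\delta$, triggering relays that reach everyone by $T+2\delta$. In both cases $d_t^{\mathrm B}\le2$. The same chains give eventual totality asynchronously. Each party sends at most one echo and one ready, so there are $O(n^2)$ messages.

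\emph{Final claim.} At $n=3f+1$ we have $s=0$, and $b=1\ge n-3f$ lies in the range of Theorem~\ref{thm:totality-lower}. That theorem gives $d_t^{\mathrm H}\ge3$, while $d_t^{\mathrm H}\le d_s=3$, so $d_t^{\mathrm H}=3$. Combined with $d_t^{\mathrm B}\le2$, this yields $d_t=\max\{d_t^{\mathrm H},d_t^{\mathrm B}\}=3$.

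The main obstacle is checking the ceiling and floor arithmetic in $Q+E-N>t$ and $Q-t\ge E$ for both parities of $s$.
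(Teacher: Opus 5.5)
Your proof is correct and follows the paper's own argument. It uses the same intersections $2Q-N$, $Q+E-N$, $2E-N$ exceeding $t=f-1$ with the first-honest-ready argument, the same bounds $Q-t\ge E$ and $f+1$ honest readies for $d_t^{\mathrm B}\le2$, and the same combination of $d_t^{\mathrm H}\le d_s=3$ with Theorem~\ref{thm:totality-lower}.

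One small correction: $Q+E-N=E-b\ge f+\lceil s/2\rceil>t$ holds for every $b\le f$, so it is not binding. The budget $\lfloor s/2\rfloor+1$ enters only through $Q-t\ge E$, as the paper notes. In that step your intermediate condition should read $(N-3f)/2+1\ge b$ rather than $(N-f)/2+1\ge b$.
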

\begin{proof}
With a Byzantine broadcaster there are at most $t=f-1$ faulty echo
authors. The stated range gives
$2Q-N>t$, $Q+E-N>t$, and $2E-N>t$.
Thus two fast quorums, a fast quorum and a ready-origin echo quorum,
or two ready-origin echo quorums share an honest echo author. These
intersections and the $f+1$ ready relay ensure agreement by the usual
first-honest-ready argument. With an honest broadcaster, $E,Q>f$
ensure validity. The $N-b$ honest echoes suffice for two-step delivery
under $b$ faults; $E\le N-f$ gives ready and delivery by step three
under all $f$ faults.

After a fast delivery at $T$ with a Byzantine broadcaster,
$Q-t\ge N-b-f+1\ge E$ honest echoes are already sent; the last
inequality is exactly $b\le\lfloor s/2\rfloor+1$ at $t=f-1$.
They reach everyone by $T+\delta$, causing ready, and delivery follows
by $T+2\delta$. A ready delivery leaves $f+1$ honest readies in
transit and has the same bound by relay. Safety excludes earlier
opposite readies. These arguments also give eventual totality.
For an honest broadcaster, $d_t^{\mathrm H}\le d_s\le3$.
Each party sends at most an echo and a ready multicast, besides the
source proposal, proving the message bound, including pre-corruption
and post-delivery sends.
\end{proof}

The baseline reaches the lower-bound premise $b\ge s+1$ only at $s=0$.
Thus the seven-party matching instance already follows from this simple
protocol. Paired witnesses support larger budgets, including two and
three at minimum resilience when these are admissible.

\subsection{Paired-witness thresholds and message flow}

\begin{table}[t]
\caption{Thresholds of Algorithm~\ref{alg:paired}, with $N=n-1$ and
$t\le f-1$ Byzantine non-broadcasters when the broadcaster is Byzantine.}
\label{tab:paired-thresholds}
\small\setlength{\tabcolsep}{4pt}
\begin{tabular}{@{}lp{\dimexpr\columnwidth-4.55cm\relax}@{}}
\toprule
Threshold (use) & Why this value\\
\midrule
$Q=N-b$ (echo delivery) & every honest echo has arrived after two
  steps when at most $b$ parties are faulty\\
$C=N-f$ (confirm, its delivery) & every honest echo, or confirm,
  has arrived when at most $f$ parties are faulty\\
$M=Q-f+1$ (vote) & honest echoes guaranteed behind an echo delivery:
  $Q-t\ge M$\\
$P=C-f+1$ (witness) & honest confirmers guaranteed behind a confirm
  delivery: $C-t\ge P$\\
$R=N-f+1$ (ready) & votes of all honest non-broadcasters:
  $N-t\ge R$\\
$R_{\mathrm w}=2f-1$ (ready) & at least $R_{\mathrm w}-t\ge f$ honest witnesses; $R_{\mathrm w}<R\le N-t$,
  so honest witnesses alone reach $R_{\mathrm w}$ after a confirm delivery\\
\bottomrule
\end{tabular}
\end{table}

\begin{figure}[t]
\centering
\begin{tikzpicture}[>=Latex,font=\scriptsize,
  m/.style={draw,rounded corners=2pt,minimum height=4.5mm,minimum width=9.5mm,inner sep=2pt},
  d/.style={align=center,text=black!70}]
\node[m] (p) at (0,0) {\msg{propose}};
\node[m] (e) at (1.7,0) {\msg{echo}};
\node[m] (c) at (3.5,0.75) {\msg{confirm}};
\node[m] (v) at (3.5,-0.75) {\msg{vote}};
\node[m] (w) at (5.3,0.75) {\msg{witness}};
\node[m] (r) at (7.0,0) {\msg{ready}};
\draw[->] (p) -- (e);
\draw[->] (e) -- node[above,sloped,pos=.55]{$C$} (c);
\draw[->] (e) -- node[below,sloped,pos=.55]{$M$} (v);
\draw[->] (c) -- node[below,yshift=-1mm,align=center]{$P$ authors with\\echo and confirm} (w);
\draw[->] (w) -- node[below,sloped,pos=.45]{$R_{\mathrm w}$} (r);
\draw[->] (v) -- node[below,sloped,pos=.45]{$R$} (r);
\draw[->] (r.south) to[out=-60,in=-120,looseness=5] node[below]{$f{+}1$} (r.south);
\node[d,below=1.2mm of e] {deliver on\\$Q$ echoes};
\node[d,above=1.2mm of c] {deliver on\\$C$ confirms};
\node[d,above=1.2mm of r] {deliver on\\$2f{+}1$ readies};
\end{tikzpicture}
\caption{Message flow of Algorithm~\ref{alg:paired}. Arrow labels give
the number of distinct authors that trigger the next message; the
witness rule counts authors whose echo and confirm slots both hold $v$,
so it reads echo slots as well as confirm slots. Confirm also requires
the author's own echo. With an honest
broadcaster every honest party has $Q$ echoes after two steps when at
most $b$ parties are Byzantine, and $C$ confirms after three. After a
first honest delivery, the vote path (behind an echo delivery) or the
witness path (behind a confirm delivery) produces readies within two
steps and delivery within three.}
\Description{Propose leads to echo; echo leads to confirm and to vote;
confirm paired with echo leads to witness; witness and vote lead to
ready; ready amplifies itself. Delivery happens on echo, confirm, or
ready quorums.}
\label{fig:paired-flow}
\end{figure}

\paragraph{Design rationale}
Suppose the broadcaster is Byzantine and an honest party delivers $v$.
If it delivered on $Q$ echoes, at least $Q-t\ge M$ honest parties echoed
$v$; within one step every honest party holds those echoes and votes,
the votes trigger ready, and ready triggers delivery, three steps in
all. If it delivered on $C$ confirms, at least $C-t\ge P$ honest parties
confirmed $v$, and the natural analogue would let $P$ confirms trigger
ready. That rule is unsafe. Take $t=f-1$ and $b\ge s+2$. The $M-t$
honest echoes behind a vote for $v$ and the $C-t$ honest echoes behind a
confirm for $w\ne v$ fit disjointly among the $N-t$ honest
non-broadcasters, because $(M-t)+(C-t)\le N-t$ is exactly $b\ge s+2$;
Byzantine echo slots complete both quorums; and the $P-t$ honest
confirmers of $w$ leave $N-P=2f-1\ge R-t$ other honest parties free to
vote for $v$, since $b\ge s+2$ forces $f\ge s+3$. So $R$ votes for $v$
and $P$ confirms for $w$ can both be assembled before any delivery.
Opt-RBC inserts an acknowledgment round between votes and
ready~\cite{ShresthaEtAl2026OptimisticRBC}, and our confirm/ack protocol
(Appendix~\ref{sec:baseline-results}) routes both paths through such a
round; either way the acknowledgment costs a step. The witness message
instead certifies, at its own author, $P$ identities that have both
echoed and confirmed $v$. Lemma~\ref{lem:paired-compatibility} shows
that a quorum of such witnesses rules out any $R$-vote quorum for
another value, so witnesses may trigger ready directly.

The budget cap $b\le\lfloor n/2\rfloor-f$, which Theorem~8 of Shrestha
et al.\ shows necessary for two-step optimism, enters precisely in
$Q+M-N>t$. The hypothesis $b\le2s+3$ enters precisely in the displayed
inequality of Lemma~\ref{lem:paired-compatibility}; it is a restriction
of this construction, not an impossibility bound on optimistic
broadcast.

The lower endpoint $f=2$ is permitted throughout the construction.
The validity argument requires $R_{\mathrm w}=2f-1>f$, which fails at $f=1$:
one faulty non-broadcaster can send a witness for a value different from
an honest broadcaster's input and trigger honest readies for that value.
Thus the present rules do not extend to $f=1$.

\subsection{Failure beyond the paired-witness budget}
\begin{proposition}[Boundary of the fixed rules]
\label{prop:paired-boundary}
Keep Algorithm~\ref{alg:paired} and thresholds~\eqref{eq:paired-thresholds}.
For every $2s+4\le b\le\min\{f,\lfloor n/2\rfloor-f\}$, these rules
admit an execution violating totality. Thus increasing their budget
beyond $2s+3$ requires changing the protocol, not only its proof.
\end{proposition}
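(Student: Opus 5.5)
The plan is to build one explicit asynchronous execution with a Byzantine broadcaster and $t=f-1$ Byzantine non-broadcasters, so $f$ corrupted parties in total. In it, exactly one honest party delivers $v$, on $2f+1$ readies. The remaining honest parties are split between a ready for $v$ and a ready for $w\neq v$. Since ready is one-shot across values, neither value can afterwards gather $2f+1$ readies at them, and no echo or confirm quorum ever forms at them. The execution therefore violates totality while respecting agreement.

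The witness/vote conflict that Lemma~\ref{lem:paired-compatibility} excludes is the only available source of split readies. Relays need an honest origin, and the other origin pairs are excluded by the intersections in~\eqref{eq:paired-intersections}. So the execution must realise a ready origin on $R_{\mathrm w}$ witnesses for $v$ together with one on $R$ votes for $w$. This rules out $g\ge f$ honest confirmers of $v$, because then no honest party could vote $w$. It also rules out any echo or confirm delivery: those would force $g\ge C-t\ge f$, or $Q-t\ge M$ honest $v$-echoes that exclude $w$-voters. The delivery must therefore be a ready delivery.

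\textbf{Step 1 (echo and confirm layout).} Choose $g\le f-1$ honest parties, say $g=1$ or small, that receive $C$ echo slots for $v$ and confirm $v$. Each of them sees $C-t=f+s+1$ honest $v$-echoes plus all $t$ Byzantine echoes for $v$; the remaining honest parties echo $w$. Choose $f$ honest witness authors $i$. At each $i$, pair $P$ authors, namely the $g$ honest confirmers and $P-g\ge s+2$ Byzantine ones, whose echo and confirm slots at $i$ both hold $v$. Now count, as in the lemma's displayed inequality, the echo slots at $i$ that never hold $w$: there are only $(C-t)+(P-g)$ of them, which is about $f+2s+3<f+b$ when $b\ge2s+4$. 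Hence up to $N-(f+2s+3)\ge M$ slots at $i$ may hold $w$. Fill them with $w$ using honest $w$-echoes and Byzantine echoes of $w$ sent to $i$; Byzantine slots may hold $v$ at one receiver and $w$ at another. Then every $i$ votes $w$, while also witnessing $v$.

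\textbf{Step 2 (ready split).} At least $R-t=f+s+2$ honest parties must vote $w$, including the witness authors. Deliver $R$ such votes, together with Byzantine votes, to a set $Y$ of honest parties, which then send ready for $w$. Deliver $R_{\mathrm w}=2f-1$ witnesses for $v$, namely the $f$ honest witnesses plus $f-1$ Byzantine ones, to a disjoint set $Z$ of $f+1$ honest parties, which send ready for $v$ first. Schedule these origins before any relay can cross between the sets.

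\textbf{Step 3 (the single delivery).} One member $p\in Z$ receives the $f+1$ honest $v$-readies and $f$ Byzantine $v$-readies, counting the broadcaster, and delivers $v$. Every other honest party receives Byzantine messages that never complete a quorum, and all honest messages are eventually delivered. Honest $v$-readies number at most $|Z|=f+1<2f+1$, and honest $w$-readies number at most $n-f-(f+1)=f+s<2f+1$. Echo counts stay below $Q$ and confirm counts below $C$, since there are only $g+t<C$ confirmers. Any relay for $v$ at a party in $Y$ is blocked by its one-shot $w$-ready, and likewise in the other direction. So no honest party other than $p$ ever delivers.

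The main obstacle is Step~1, the simultaneous slot assignment: the $f$ witness authors must each see $P$ paired $v$-authors and also $M$ echo slots for $w$, and this must be consistent with honest parties echoing only once. The count must also leave $R-t$ honest $w$-voters while $|Z|=f+1$. I would first try $g=1$ and $s=0$ with $b=4$ to fix the group sizes, then scale. The tight inequality to verify is $(C-t)+(P-g)+M\le N$, which should reduce exactly to $b\ge2s+4$ after choosing $g$ optimally, as the converse of the lemma's computation.
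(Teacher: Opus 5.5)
Your plan follows the paper's construction closely. The shared ingredients are: a Byzantine broadcaster with $f-1$ Byzantine non-broadcasters; an honest $v$-echo group of size $C-t=f+s+1$ and a $w$-echo group of size $f$; honest confirmers of $v$ paired at witness authors with Byzantine authors whose echo slots there hold $v$; the remaining Byzantine echo slots set to $w$ so that witness authors also vote $w$; a one-shot ready split; and a single ready delivery.

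\textbf{The gap: the choice of $g$.} Step~1 fails as written, because the number $g$ of honest confirmers cannot be ``$1$ or small''.
\begin{itemize}
\item Some honest witness author must vote $w$. Otherwise at most $(2f+s+1-g)-(f-g)=f+s+1<R-t$ honest parties vote $w$.
\item At such an author at most $N-(C-t)-(P-g)=f+g-s-2$ echo slots can hold $w$. This reaches $M=2f+s-b+1$ only if $g\ge f+2s+3-b$.
\item Independently, $P-g\le t$ forces $g\ge s+2$.
\end{itemize}
Since $b<f$ in the range, $g=1$ is never feasible. At $b=2s+4$, including your trial $s=0$, $b=4$ (where $f\ge7$), the only choice is $g=f-1$. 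Your ``optimal'' $g$ is therefore the largest admissible one. It is also the value already implicit in your count $(C-t)+(P-g)\approx f+2s+3$. This is exactly the paper's $U_0$: $f-1$ honest confirmers, $s+2$ Byzantine authors that echo and confirm $v$ to everyone, and the other $f-s-3$ Byzantine authors echoing $v$ to $U_0$ and $w$ elsewhere.

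\textbf{After the fix.} With $g=f-1$ the rest of your argument goes through, and the counts are tight. All $f+s+2$ honest non-confirmers must vote $w$; together with the $f-1$ Byzantine votes this is exactly $R$. That includes the $s+2$ honest $v$-echoers outside $U_0$, which is allowed since votes carry no own-echo guard. At every non-confirmer, the $w$-echoes must be delivered before $M$ $v$-echoes accumulate.

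\textbf{Your split versus the paper's.} Your final split differs harmlessly from the paper's.
\begin{itemize}
\item The paper lets only $U_0$ ready $v$, on witnesses, with the $w$-readies delayed towards $U_0$. One outsider then delivers $w$ on $f+s+2$ honest plus $f$ Byzantine readies.
\item You let $f+1$ parties ready $v$ and deliver $v$ using the $f$ Byzantine readies, leaving $f+s$ honest $w$-readies.
\end{itemize}
Either way, no other honest party reaches $2f+1$ readies, every echo count is at most $C<Q$, and every confirm count stays below $C$.
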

\begin{proof}
The range implies $f\ge3s+7$. Corrupt the source and $f-1$
non-broadcasters. Partition the $2f+s+1$ honest non-broadcasters into
$U$ of size $f+s+1$ and $V$ of size $f$, and choose $U_0\subset U$ of
size $f-1$. The source proposes $v$ to $U$ and $w\ne v$ to $V$.
Partition faulty non-broadcasters into $Y$ of size $s+2$ and $X$ of
size $f-s-3$. Authors in $Y$ echo and confirm $v$ to everyone; authors
in $X$ echo $v$ to $U_0$ and $w$ elsewhere.

Delivering the echoes makes $U_0$ hold $C$ echoes for $v$, vote and
confirm $v$. Outside $U_0$, there are $f+|X|=2f-s-3\ge M$ echoes for
$w$ and only $f+2s+3<C$ for $v$, so schedule their votes for $w$.
There are $f+s+2$ such honest voters. Supply the $f-1$ Byzantine
$w$-votes to them; they reach $R$ and send ready for $w$.
Delay those readies towards $U_0$. Next deliver the confirms: every
honest party has $P$ paired authors in $U_0\cup Y$ and witnesses $v$.
These witnesses make $U_0$ send ready for $v$ before the delayed readies
arrive. All honest parties have now spent their unique ready: $f-1$
for $v$ and $f+s+2$ for $w$.

Give Byzantine readies for $w$ to just one honest party outside $U_0$;
it reaches $2f+1$ readies and delivers. Deliver all remaining honest
traffic and let faulty parties remain silent. Other parties cannot
reach $2f+1$ readies for either value, since $f+s+2<2f+1$.
No other delivery rule applies: $Q>C$, every echo count is at most
$C$, and every confirm count is at most $P<C$. Exactly one honest
party delivers. All delays used before draining can be finite, so this
is an eventual-totality failure. The first instance is $(22,7,4)$.
\end{proof}

\section{Coded paired-witness broadcast}
\label{app:paired-coded}

We prove Proposition~\ref{prop:paired-coded}. Use a deterministic $[N,k]$
MDS encoder over $\mathrm{GF}(2^{\omega})$, where $k=P=n-2f$ and $2^{\omega}>N$.
Bind the instance identifier, payload length, code parameters, and padding
rule into a domain-separated Merkle commitment to the ordered fragment
vector. A fragment has $F=w\lceil L/(kw)\rceil$ bits. A successful
reconstruction erasure-decodes $k$ valid distinct-index fragments, checks
padding, re-encodes the complete vector, and rebuilds the received root.
Individually valid Merkle paths do not establish that a vector is a codeword.
All claims below are conditional on commitment binding.
Algorithm~\ref{alg:paired-coded} gives the event loop; the packet rules
below define admission and reference resolution. Counters $e,v,c,w,r$
count distinct authors of the indicated message type and root, and $p$
counts matched echo/confirm authors. Only echoes with available fragments
enter $e$ or $p$.

\begin{algorithm}[t]
\DontPrintSemicolon
\caption{Coded paired witnesses at party $i$. Thresholds are those of Algorithm~\ref{alg:paired}.}
\label{alg:paired-coded}
\textbf{State:} immutable per-author/type root slots; first valid source
preload; fragment cache; tested roots and successful decodings; per-type
sent flags; own-fragment multicast flags indexed by root.\;
\textbf{Source initialization:} encode the value, commit to its fragment
vector, and send each non-source $j$ its indexed fragment and proof.\;
\textbf{On a source preload:} accept only the first valid packet for
index $i$; cache it and, if $i$ is a non-source, emit echo for its root.\;
\textbf{On an eligible control packet:} reserve its author's first
well-formed root for its type; ignore conflicting later roots. For
echo/vote only, admit an attached fragment for this reserved root, at the author's index,
with a valid proof. Repeated matching packets can supply missing data.\;
\textbf{After either event:} resolve references from the fragment cache
and update counters. For each affected root $\rho$, apply the following
enabled rules until none remain.\;
Disable each send rule once its type was sent, and disable the delivery
rule after delivery.\;
\Indp
If $\rho$ has $k$ distinct cached fragments and is untested, erasure-decode,
check padding, re-encode and compare roots; cache success or failure.\;
If $i$ is a non-source and $e(\rho)\ge M$ and $\rho$ decoded successfully,
emit vote for $\rho$.\;
If $i$ is a non-source and $e(\rho)\ge C$, its own echo is $\rho$, and $\rho$
decoded successfully, emit confirm for $\rho$.\;
If $i$ is a non-source and $p(\rho)\ge P$, emit witness for $\rho$.\;
If $v(\rho)\ge R$ or $w(\rho)\ge R_{\mathrm w}$ or $r(\rho)\ge f+1$, emit ready for $\rho$.\;
If $\rho$ decoded successfully and at least one of $e(\rho)\ge Q$,
$c(\rho)\ge C$, or $r(\rho)\ge2f+1$ holds, deliver its payload once.\;
\Indm
\textbf{Emit $(\tau,\rho)$:} if type $\tau$ was already sent, do nothing.
Otherwise mark it sent and multicast. Echo/vote attaches $i$'s fragment
and proof unless already multicast for $\rho$, in which case it references
that fragment. Confirm/witness/ready carries only the root.\;
Count eligible self-messages immediately. Continue after delivery.\;
\end{algorithm}

\paragraph{Packet rules}
Use Algorithm~\ref{alg:paired} with roots as control values. The broadcaster
preloads each non-broadcaster with that recipient's own-index fragment
and proof. Keep only the first valid preload from the authenticated
broadcaster. Echo only after validating this packet. An echo supplies its author's
own-index fragment and proof, or references the identical fragment already
multicast with its vote. Reserve the first well-formed echo notice per
author; count it only once its matching valid fragment is available.
A conflicting later root never replaces the reserved notice.

Vote requires $M$ counted echoes and successful reconstruction. It supplies
the voter's re-encoded own-index fragment and proof unless that exact
fragment was already multicast in its echo. Confirm requires $C$ counted
echoes, the own-echo guard, and successful reconstruction. Witness and ready
retain their original rules and carry roots only; they need not wait for
local reconstruction. Every delivery additionally requires successful
reconstruction and returns the resulting payload.

Deduplication works in both directions: a party can vote before its broadcaster
preload arrives and echo later. Keep a multicast-sent flag for each
own-index/root pair, separately from the local cache. Buffer references
that overtake their data and reconsider them when data arrives. An honest
echo and its fragment were both sent by the echo's send time, so both
arrive within one delay of that time. Thus counted echo receipt preserves
the full-value delay bound. A vote notice may be counted before its
referenced fragment arrives. Reserve the first well-formed vote notice
per author, as for echo. Admit a received fragment only for the retained
broadcaster preload or a reserved echo/vote root. A preload uses the
recipient's index; an echo/vote fragment uses its authenticated author's
index. Require fixed length, valid symbols, and a valid proof, and cache
at most one fragment per index/root pair. A fragment-bearing echo/vote
packet also supplies its notice, so data that overtakes a reference can
reserve the corresponding slot. Use immutable first-root slots for the
other message types as well.

\paragraph{Safety and malformed codewords}
The full-value safety proof applies to the root-level first-message ledgers;
extra reconstruction guards restrict honest sends without enabling any new
one. A noncodeword commitment cannot pass reconstruction and full root
checking. It cannot obtain an honest vote or confirm. Every honest witness
has an honest confirmer among its paired authors. Every honest ready
origin has an honest vote or an honest witness behind it; ready relays
preserve this fact by causal induction. Thus a noncodeword cannot originate
honest ready, and it cannot be delivered on raw echoes because delivery
checks reconstruction separately. For a valid root, binding and encoder
injectivity make all successful decoders return the same payload. Root
agreement therefore gives payload agreement.

\paragraph{Availability at every deadline}
Since $M-k=f-b\ge0$ and $C,Q\ge k$, honest-broadcaster reconstruction
requires no extra message step. The two- and three-step good cases follow
as in Lemma~\ref{lem:paired-latency}.

For a Byzantine broadcaster, let $t\le f-1$ be the faulty non-broadcaster
count and $T$ the first honest delivery time. An echo delivery leaves
$Q-t\ge M\ge k$ honest own-index fragments already multicast. By
$T+\delta$ everyone has them, can reconstruct and vote, and the usual
vote/ready chain completes by $T+3\delta$.
A confirm delivery leaves $C-t\ge P=k$ honest confirmers, each of whom
already echoed its own fragment. Their fragments and paired messages
reach everyone by $T+\delta$, enabling reconstruction and witness;
delivery follows by $T+3\delta$.

For a ready delivery, trace the first honest ready of its root to a
nonrelay origin at time $T_{\mathrm{orig}}\le T$. A vote origin has at least
$R-t\ge k+1$ honest vote authors, each of whom already multicast its
own-index fragment. A witness origin has an honest witness and hence an
honest confirmer; that confirmer's $C$ echoes contain at least $C-t\ge k$
honest own-index fragments already multicast. In either case $k$ honest
fragments were sent before $T_{\mathrm{orig}}$. They reach every honest party by
$T_{\mathrm{orig}}+\delta\le T+\delta$, so the ready-relay delivery by $T+2\delta$
is not delayed by payload reconstruction. The argument also proves
asynchronous eventual availability without a known delay bound.

\paragraph{Accounting}
There are $N$ broadcaster fragments. With an honest broadcaster, every honest
echo and vote has the same root, so symmetric deduplication permits one
fragment multicast per non-broadcaster. Payload traffic is at most
$NF+N^2F=nNF$. Universally, each non-broadcaster can multicast at most
two fragments, one for each of its one-shot echo/vote roots. Even charging
an honest preload before a later broadcaster corruption gives
$N(2N+1)F$. Since $N/k<3$, the leading coefficients are below $3nL$
and $6nL$. Merkle proofs, control roots, and $F-L/k<\omega=O(\log n)$ padding
add $O(\kappa n^2\log n)$ bits. No repair message type is added; the
message cap remains $5N^2+2N$ and includes all post-delivery traffic.

First-root discipline permits at most $2N+1$ accepted distinct fragment
entries at a receiver. Hence fewer than six roots can collect $k$ entries,
since $(2N+1)/k<6$. Cache failed as well as successful reconstruction
checks: a failure for a binding root cannot be repaired by receiving more
fragments of that same root. At most five full decoding/re-encoding
attempts per receiver suffice, in addition to per-packet proof checks.
This bounds work on admitted records and the number of expensive
reconstructions; proof validation is polynomial per packet. It does not
bound total processing of an unbounded Byzantine packet stream or
measure decoding time in practice.

\paragraph{Communication baselines}
The linear payload order is established prior work. Opt-RBC v4 already
gives a balanced coded protocol with
$O(nL+\kappa n^2\log n)$ bits~\cite[Section~5 and Figure~3]{ShresthaEtAl2026OptimisticRBC}.
At $n=3f+1$, its outer dimension is $f+1$; counting both echo and vote
fragment multicasts as written gives a leading upper bound $6nL$ plus
$O(L)$ payload. Deduplicating identical echo/vote fragments reduces the
honest-broadcaster leading bound to $3nL$, while retaining the universal
$6nL$ upper bound. These are our direct counts of those rules, not
communication lower bounds or additional source theorems. Our corresponding
$3nL/6nL$ bounds therefore do not improve that optimized payload comparison;
the gain is preserving the paired-witness latency guarantees.

Other hash-only broadcasts have smaller metadata terms. Das, Xiang, and
Ren report $7nL+2\kappa n^2+2n^2$ bits for their RBC, including its
four-round version~\cite[Section~8]{DasXiangRen2021ADD}.
CC-BRB reports $3nL+9\kappa n^2$, omitting lower-order terms, with a
three-step honest-broadcaster path~\cite[Table~I and Section~V]{AlhaddadEtAl2022HashBRB}.
These source-reported constants use minimum resilience and their own
accounting conventions; they are not claims of two-step optimistic
delivery. Our $O(\kappa n^2\log n)$ Merkle remainder is larger
asymptotically. The purpose of this appendix is timely availability for
$(2,3,3)$, not an improvement to the communication frontier.

\section{Consensus latency consequences}
\label{app:consensus-applications}
Reliable broadcast latency enters a consensus protocol in two ways: the
good-case latency bounds how long an honest proposal takes to arrive,
and the totality delay bounds how far honest parties can drift apart,
which sizes timeouts. Two existing signature-free protocols state
their latencies as functions of the broadcast they use, so the effect
of substituting Algorithm~\ref{alg:paired} for Opt-RBC can be read off.
Table~\ref{tab:applications} does so. The Sailfish++ formulas are from
Section~7 of Shrestha et al.~\cite{ShresthaEtAl2026OptimisticRBC} and
the two non-leader classes of Yu et al.'s Table~2; the Simple-IT
formulas are from its Lemmas~6--8 and~12 and
Corollaries~3--5~\cite{YuEtAl2026SimpleIT}. Following Simple-IT,
$\delta$ is the actual network delay bound after GST and $\Delta>\delta$ its
known bound; the broadcast protocols use neither. The entries in the
last column are our substitutions, not results reported by those
authors, and every commit path assumes the required honest leaders,
resolved dependencies, and no intervening timeout.

\begin{table*}[t]
\caption{Consensus latency with three reliable broadcast protocols.
Commit and block-time rows are multiples of $\delta$; skew, timer, and view
rows are as written. In an $a/b$ entry, $a$ is the optimistic value with
$d=d_o$ and $b$ the good-case value with $d=d_s$. Starred optimistic
entries assume the predecessor-readiness premise stated in the text. The
paired-witness column requires $f\ge2$ and has budget $\min\{b_*,2s+3\}$
in place of Opt-RBC's $b_*$; at $n=3f+1$ this is at most
$\min\{3,\lfloor(f+1)/2\rfloor\}$ faults.}
\label{tab:applications}
\small\setlength{\tabcolsep}{4pt}
\centering
\begin{tabular}{@{}lrrr@{}}
\toprule
Metric (formula) & Bracha & Opt-RBC & Paired witness\\
\midrule
RBC $(d_o,d_s,d_t)$ & $(3,3,2)$ & $(2,4,4)$ & $(2,3,3)$\\
\midrule
\multicolumn{4}{@{}l}{Sailfish++ commit paths, $n=3f+1$}\\
Leader, rule 1 ($d+1$) & 4 & 3 / 5 & 3 / 4\\
Leader, rule 2 ($2d$) & 6 & 4 / 8 & 4 / 6\\
Non-leader, next-leader path ($2d+1$) & 7 & 5 / 9 & 5 / 7\\
Non-leader, later-reference path ($3d+1$) & 10 & 7 / 13 & 7 / 10\\
\midrule
\multicolumn{4}{@{}l}{Base Simple-IT (propose upon delivery)}\\
Commit ($d_o+1$ / $d_s+1$) & 4 & $3^*/5$ & $3^*/4$\\
Optimistic block time ($d_o$) & 3 & $2^*$ & $2^*$\\
Round-entry skew ($d_t\delta$) & $2\delta$ & $4\delta$ & $3\delta$\\
Round timer ($(d_s+d_t)\Delta$) & $5\Delta$ & $8\Delta$ & $6\Delta$\\
View duration ($\Delta_{\mathrm{to}}+d_t\delta$) & $5\Delta+2\delta$ & $8\Delta+4\delta$ & $6\Delta+3\delta$\\
\bottomrule
\end{tabular}
\end{table*}

\paragraph{Sailfish++}
Sailfish++ commits the leader vertex of round $r$ under its first rule
after one RBC plus one step, the step being the arrival of the first
RBC message of $2f+1$ round-$(r+1)$ vertices, and under its second rule
after two RBCs~\cite[Section~7]{ShresthaEtAl2026OptimisticRBC}.
Non-leader vertices referenced by the next committed leader need one
further RBC, and the remaining ones two~\cite[Table~2]{YuEtAl2026SimpleIT}.
In this honest-broadcast commit path ``one RBC'' is the optimistic or
good-case latency, not the totality delay, and in
Algorithm~\ref{alg:paired} the first RBC message is the proposal itself,
as in Opt-RBC. This first-message argument uses the full-proposal
interface; a coded instantiation must additionally expose the DAG
metadata required by the commit rule. Substituting $(2,3,3)$ lowers good-case leader commit
from five steps to four and the two non-leader paths from nine and
thirteen steps to seven and ten, while the optimistic entries are
unchanged. This is path accounting for honest broadcasts; it is not a
bound for arbitrary Byzantine vertices.

The source's timeout remark and Claim~23 analyze Bracha explicitly:
$5\Delta$ suffices after entry via the preceding leader's delivery,
and $8\Delta$ after timeout-based entry~\cite[Section~7 and
Appendix~F, Claims~19--23]{ShresthaEtAl2026OptimisticRBC}.
These constants include both round-entry skew and subsequent honest
broadcasts. They are not generic timeout formulas for an arbitrary RBC.
We therefore do not substitute our latency triple into those numerical
constants or claim a Sailfish++ timeout improvement here; the timer and
view-duration improvements in Table~\ref{tab:applications} concern
Simple-IT, whose analysis explicitly parameterizes the RBC.

\paragraph{Simple-IT}
Lemma~12 of Yu et al.~\cite{YuEtAl2026SimpleIT} bounds round-entry skew:
if an honest party enters round $r$ at time $t\ge\mathrm{GST}$,
every honest party has entered that round by $t+d_t\delta$.
Entry requires each preceding round to be marked safe through proposal
delivery or disabled through timeout notification. The evidence already
obtained by the first party reaches all others within the larger of
the two totality delays. Timeout notification takes at most $2\delta$,
so $d_t\ge2$ makes RBC totality the controlling bound. With paired
witnesses, this gives $3\delta$ in place of Opt-RBC's $4\delta$.
This bounds the time to reach the same round, not the difference in
round numbers or an unconditional rate of DAG growth.

Simple-IT's base variant proposes upon delivery and sets its round
timer to $\Delta_{\mathrm{to}}=(d_s+d_t)\Delta$, with eventual worst-case
view duration $\Delta_{\mathrm{to}}+d_t\delta$~\cite[Lemma~8 and
Corollary~5]{YuEtAl2026SimpleIT}. Substituting $(2,3,3)$ gives $6\Delta$
and $6\Delta+3\delta$ in place of $8\Delta$ and $8\Delta+4\delta$;
Bracha gives $5\Delta$ and $5\Delta+2\delta$. For the optimistic entries
we state the premise under which we claim them: every honest party can
validate the proposal's predecessor chain within $2\delta$ of the
honest proposal time, for instance because the chain is already
resolved, and the round is eligible and has not timed out; block time
additionally assumes consecutive honest leaders. Under this premise
commit takes $(d_o+1)\delta=3\delta$ and block time
$d_o\delta=2\delta$~\cite[Lemma~6 and Corollary~4]{YuEtAl2026SimpleIT}.
Without it, combining proposal delivery with predecessor propagation
gives $(\max\{d_o,d_t\}+1)\delta=4\delta$ for paired witnesses, which is
also the good-case bound since $d_s=d_t=3$. The same premise is marked
for the Opt-RBC column. The two-step block time concerns the base
variant, not the speculative pipelining of its Lemma~9.

\paragraph{Optimistic budgets}
Simple-IT's optimistic fault budget $f_o=\lfloor n/2\rfloor-f$ (its
Definition~9 and Table~3) equals $b_*$ and corresponds to
$\lceil n/2\rceil+f$ correct parties including the leader. Paired
witnesses keep this budget where $b_*\le2s+3$ and reduce it otherwise.

\section{Confirm/ack construction and analysis}
\label{sec:baseline-results}

This appendix gives our confirm/ack construction, which retains maximal
optimism for every $n\ge3f+1$. It uses a separate ack layer and can require
four steps after the first honest delivery. We give the construction,
prove safety and latency, and then establish its exact fault-dependent
totality profile at $n=3f+1$, including smaller optimistic budgets.

\subsection{Construction and thresholds}
\label{sec:protocol}

Let $n\ge3f+1$, $f\ge1$, and $N=n-1$.  We describe one protocol with three
thresholds: $M$ for votes, $C$ for confirms and for the ack and ready
quorums, and $Q$ for delivery on echoes.  Fix an integer optimistic
threshold $0\le b\le f$ and assume
\begin{equation}
\label{eq:general-inequalities}
\begin{aligned}
 f<M\le C\le N-f,&\qquad 2C\ge N+f,\\
 Q+M\ge N+f,&\qquad Q-(f-1)\ge M,\\
 C+M>N,&\qquad C-(f-1)\ge f+1,
\end{aligned}
\qquad Q\le N-b.
\end{equation}
These are the hypotheses of the safety and latency lemmas in
Sections~\ref{sec:safety}--\ref{app:totality-tightness}; the last condition provides enough honest echoes for
two-step delivery with at most $b$ Byzantine parties.  They imply $Q\ge M>f$
and $2Q\ge Q+M\ge N+f$.  Table~\ref{tab:thresholds} says what each
inequality is used for.

\begin{lemma}[Admissible thresholds]
\label{lem:admissible}
For every $f\ge1$, $n\ge3f+1$, and integer
$0\le b\le\min\{f,\lfloor n/2\rfloor-f\}$, the thresholds
\begin{equation}
\label{eq:general-thresholds}
 C=\left\lceil\frac{N+f}{2}\right\rceil,\qquad Q=N-b,\qquad
 M=\min\{C,\,n-f-b\}
\end{equation}
satisfy~\eqref{eq:general-inequalities}.  At $n=3f+1$ and
$b=\lfloor(f+1)/2\rfloor$ they equal $C=2f$, $Q=3f-b$, and $M=2f+1-b$.
\end{lemma}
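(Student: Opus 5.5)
The proof is a direct arithmetic verification: substitute the thresholds~\eqref{eq:general-thresholds} into each inequality of~\eqref{eq:general-inequalities} in turn. The quantity $M=\min\{C,n-f-b\}$ is a minimum, so I will split into the cases $M=C$ and $M=n-f-b=N+1-f-b$ wherever $M$ enters from the lower side. Two basic facts will be established first and reused throughout.
\begin{itemize}
\item From $N\ge3f$ we get $(N+f)/2\ge2f$, and hence $C\ge2f\ge f+b$, using $b\le f$.
\item The hypothesis $b\le\lfloor n/2\rfloor-f$ is equivalent to $2f+2b\le n=N+1$, because $2f+2b$ is an integer.
\end{itemize}

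The order of checks is as follows.
\begin{itemize}
\item $f<M$. We have $C\ge2f>f$, and $N+1-f-b\ge 2f+1-b\ge f+1$.
\item $M\le C$. This holds by definition of $M$.
\item $C\le N-f$. Since $N-f$ is an integer, this is equivalent to $(N+f)/2\le N-f$, that is, to $N\ge3f$.
\item $2C\ge N+f$. This is immediate from the ceiling.
\item $Q-(f-1)\ge M$. The left side is $N-b-f+1=n-f-b\ge M$.
\item $Q\le N-b$. This holds with equality.
\item $C-(f-1)\ge f+1$. This is $C\ge2f$.
\end{itemize}

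The two inequalities that involve $M$ from below need the case split.
\begin{itemize}
\item $Q+M\ge N+f$. If $M=C$, it reduces to $C\ge f+b$. If $M=N+1-f-b$, it reduces to $N+1\ge 2f+2b$.
\item $C+M>N$. If $M=C$, then $2C\ge N+f>N$ because $f\ge1$. Otherwise it reduces again to $C\ge f+b$.
\end{itemize}
The inequality $N+1\ge 2f+2b$ is the one place where the optimistic cap $b\le\lfloor n/2\rfloor-f$ is used. This matches the role that $Q+M-N$ plays in the paired-witness analysis. I do not expect any step to be a genuine obstacle; the only care needed is handling the integrality in the ceiling and in the floor of the budget cap.

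For the specialization $n=3f+1$, we have $N=3f$, so $C=\lceil 4f/2\rceil=2f$ and $Q=3f-b$. Admissibility of the budget is checked by $\lfloor(3f+1)/2\rfloor-f=\lfloor(f+1)/2\rfloor$, so $b=\lfloor(f+1)/2\rfloor$ is allowed. For $f\ge1$ this budget satisfies $b\ge1$. Therefore $n-f-b=2f+1-b\le 2f=C$, which gives $M=2f+1-b$ as claimed.
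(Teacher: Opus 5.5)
Your proposal is correct and follows the paper's route: a direct check of each inequality in~\eqref{eq:general-inequalities}, built on $C\ge 2f\ge f+b$ and on the budget cap in the form $2f+2b\le n$, with the same case split on which term attains $M$ for $C+M>N$. Your differences are cosmetic and do not change the argument. You use the integer-ceiling equivalence for $C\le N-f$ rather than splitting on $N=3f$, you split cases for $Q+M\ge N+f$ rather than first showing $M\ge f+b$, and you note explicitly that $b\ge1$ is what makes $M=2f+1-b$ at $n=3f+1$.
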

\begin{proof}
By definition $M\le C$, $2C\ge N+f$, and $Q=N-b$.  If $N=3f$ then
$C=2f=N-f$; otherwise $C\le(N+f+1)/2\le N-f$ because $N\ge3f+1$.  Also
$C\ge2f$, hence $C-(f-1)\ge f+1$.  Since $C\ge2f>f$ and
$n-f-b\ge n-2f\ge f+1$, we have $M>f$, including when $b=0$.
The bound on $b$ gives $C\ge f+b$ and $n-f-b\ge f+b$, so
$M\ge f+b$ and $Q+M\ge(N-b)+(f+b)=N+f$.  Also
$M\le n-f-b=Q-f+1$.  If $M=C$ then $C+M=2C\ge N+f>N$; otherwise
$M=n-f-b$ and $C+M-N=C-f-b+1\ge1$.  The values at $n=3f+1$ follow by
substitution.
\end{proof}

Algorithm~\ref{alg:rbc} lists the rules and Figure~\ref{fig:flow} shows the
message flow.  Only non-broadcasters send \msg{echo}, \msg{vote},
\msg{confirm}, and \msg{ack}; these quorums count distinct non-broadcasters.
Every party, including the broadcaster, sends \msg{ready} and delivers;
ready quorums count all $n$ parties.  Each rule fires at most once per
party, and ``send'' means send to all parties.  The examples below use
$n=3f+1$ with the maximal $b$.

\begin{algorithm}[t]
\DontPrintSemicolon
\caption{Reliable broadcast with thresholds satisfying~\eqref{eq:general-inequalities}.}
\label{alg:rbc}
\textbf{Broadcaster} with input $v$: send $\m{propose}{v}$.\;
\medskip\textbf{Non-broadcaster:}\;
\Indp
\nl\textbf{upon} the first $\m{propose}{v}$ from the broadcaster:
  send $\m{echo}{v}$.\;
\nl\textbf{upon} $\m{echo}{v}$ from $M$ non-broadcasters:
  send $\m{vote}{v}$.\;
\nl\textbf{upon} $\m{echo}{v}$ from $C$ non-broadcasters:
  send $\m{confirm}{v}$.\;
\nl\textbf{upon} $\m{vote}{v}$ from $C$ non-broadcasters,
  or $\m{confirm}{v}$ from $f{+}1$ non-broadcasters:
  send $\m{ack}{v}$.\;
\Indm
\medskip\textbf{Every party:}\;
\Indp
\nl\textbf{upon} $\m{ack}{v}$ from $C$ non-broadcasters,
  or $\m{ready}{v}$ from $f{+}1$ parties:
  send $\m{ready}{v}$.\;
\nl\textbf{upon} $\m{echo}{v}$ from $Q$ non-broadcasters,
  or $\m{confirm}{v}$ from $C$ non-broadcasters,
  or $\m{ready}{v}$ from $2f{+}1$ parties:
  deliver $v$.\;
\Indm
\end{algorithm}

\begin{figure}[t]
\centering
\begin{tikzpicture}[>=Latex, font=\scriptsize,
  m/.style={draw, rounded corners=2pt, minimum height=4.5mm,
            minimum width=9.5mm, inner sep=2pt},
  d/.style={align=center, text=black!75}]
\node[m] (p) at (0,0) {\msg{propose}};
\node[m] (e) at (1.7,0) {\msg{echo}};
\node[m] (c) at (3.4,0) {\msg{confirm}};
\node[m] (v) at (3.4,-1.1) {\msg{vote}};
\node[m] (a) at (5.1,0) {\msg{ack}};
\node[m] (r) at (6.8,0) {\msg{ready}};
\draw[->] (p) -- (e);
\draw[->] (e) -- node[above]{$C$} (c);
\draw[->] (e.south) |- node[pos=.75,below]{$M$} (v.west);
\draw[->] (c) -- node[above]{$f{+}1$} (a);
\draw[->] (v.east) -| node[pos=.25,below]{$C$} (a.south);
\draw[->] (a) -- node[above]{$C$} (r);
\draw[->] (r.south) to[out=-60,in=-120,looseness=5]
  node[below]{$f{+}1$} (r.south);
\node[d, above=1.5mm of e] {deliver on\\$Q$ echoes};
\node[d, above=1.5mm of c] {deliver on\\$C$ confirms};
\node[d, above=1.5mm of r] {deliver on\\$2f{+}1$ readies};
\end{tikzpicture}
\caption{Message flow of Algorithm~\ref{alg:rbc}.  Arrow labels give the
number of distinct senders that trigger the next message.  The \msg{vote},
\msg{ack}, \msg{ready} path is the recovery path of Shrestha et al.;
\msg{confirm} and delivery on $C$ confirms are new.  With an honest
broadcaster, every honest party has the echoes after two steps and the
confirms after three, and delivers on the echoes already after two steps
when at most $b$ parties are Byzantine.}
\Description{The broadcaster proposes, then non-broadcasters echo. Echoes
trigger votes and confirms; votes or confirms trigger acknowledgments;
an acknowledgment quorum or ready amplification triggers ready. Delivery
uses an echo quorum, a confirm quorum, or a ready quorum.}
\label{fig:flow}
\end{figure}

\begin{table}[t]
\caption{Roles of the threshold inequalities.  When the broadcaster is
Byzantine, any $f$ non-broadcasters include an honest party.}
\label{tab:thresholds}
\small
\begin{tabular}{@{}>{\raggedright\arraybackslash}p{2.3cm}>{\raggedright\arraybackslash}p{5.6cm}@{}}
\toprule
Condition & Role in the proof\\
\midrule
$f<M\le C\le N-f$ &
Votes and confirms rest on an honest echo; honest echoes and confirms
suffice in the good case.\\
\addlinespace
$2C\ge N+f$ &
Two confirm quorums, or two ack quorums, share an honest party; a delivery
on confirms excludes every conflicting vote quorum.\\
\addlinespace
$Q+M\ge N+f$ &
A delivery on echoes excludes conflicting honest votes and confirms.\\
\addlinespace
$Q-f+1\ge M$ &
A delivery on echoes leaves enough honest echoes to trigger votes
everywhere.\\
\addlinespace
$C+M>N$ &
Counting only the first echo from each party, no honest party can confirm
one value and vote for another.\\
\addlinespace
$C-f+1\ge f+1$ &
A delivery on confirms leaves enough honest confirms to trigger acks
everywhere.\\
\addlinespace
$Q\le N-b$ &
Honest echoes suffice for two-step delivery with at most $b$ Byzantine
parties.\\
\bottomrule
\end{tabular}
\end{table}

\paragraph{Why an ack is triggered by $f+1$ confirms}
Two natural alternatives for the confirm trigger of the \msg{ack} rule are
the $2f$ echoes used by Shrestha et al.\ and the requirement that the ack
sender itself voted for $v$.  Both break totality.  Let $n=10$ and $f=3$, so
$b=2$ and the thresholds are $5$ for \msg{vote}, $6$ for \msg{confirm},
\msg{ack}, and \msg{ready}, and $7$ for delivery on echoes.  The broadcaster
and two non-broadcasters are Byzantine.  The broadcaster proposes $A$ to four
honest parties and $B$ to the other three.  The Byzantine non-broadcasters
echo $A$ to the first group and $B$ to the second.  The first group receives
six echoes for $A$ and votes and confirms $A$.  The second group receives
five echoes for $B$ and votes for $B$, but has only four echoes for $A$ and
never confirms.  Deliver the four honest confirms for $A$ everywhere, and the
two Byzantine confirms for $A$ to one party of the first group, which
delivers $A$ on six confirms.  The Byzantine parties now send nothing.  Under
either alternative rule only the four members of the first group can ack
$A$, fewer than six, so no honest party ever sends \msg{ready}, and the six
other honest parties never deliver.  In Algorithm~\ref{alg:rbc} all seven
honest non-broadcasters ack $A$ upon the four honest confirms, including the
three that voted for $B$, and every honest party delivers within three
further steps.

\paragraph{Why the ack layer cannot be bypassed}
One might remove \msg{ack} and send \msg{ready} directly upon $2f$ votes or
$f+1$ confirms, keeping ready amplification and the three delivery rules.
This modification also fails at $n=10$, $f=3$.  Let the broadcaster and two
non-broadcasters be Byzantine, and split the honest echoes as four for $A$
and three for $B$.  Choose two of the four $A$ echoers.  To these two, the
Byzantine non-broadcasters echo $A$; delivering six echoes for $A$ makes
both vote and confirm $A$.  To the other five honest parties, the Byzantine
non-broadcasters echo $B$; delivering the three honest and two Byzantine
echoes for $B$ makes all five vote $B$.  Each receiver counts only the
first echo from each sender, so delivering every remaining honest echo
changes none of these choices.

There are now two honest confirms for $A$ and five honest votes for $B$.
The two Byzantine non-broadcasters send confirms for $A$, and one sends a
vote for $B$.  Selective scheduling gives four honest parties the four
confirms for $A$ first, and the other three the six votes for $B$ first.
In the modified protocol they respectively send \msg{ready} for $A$ and
$B$.  The three Byzantine parties send ready for $A$ only to one of the
four $A$-ready parties, which delivers on seven readies.  They send nothing
further.  After all honest messages are delivered, the other parties have
only four readies for $A$ and three for $B$, and each has already sent its
one ready.  No echo or confirm delivery quorum exists there.  Totality
fails.  The ack layer permits these conflicting preliminary choices but
requires a unique quorum of six acks before any ready is seeded.  This
example concerns that particular shortcut, not a lower bound on the number
of message types of every reliable broadcast protocol.

\subsection{Safety}
\label{sec:safety}

Throughout this subsection, the thresholds satisfy the shared hypotheses
of Section~\ref{sec:protocol}.  If the broadcaster is Byzantine, at most
$f-1$ of the $N$ non-broadcasters are Byzantine.  Thus the inequalities
$2C\ge N+f$, $Q+M\ge N+f$, and $2Q\ge N+f$ make the corresponding
quorums intersect in an honest party.  Honest means never corrupted;
Section~\ref{sec:model} permits adaptive choice of the corrupted set.

\begin{lemma}[Honest broadcaster]
\label{lem:honest-broadcaster}
If the broadcaster is honest and invoked with $v$, every honest \msg{echo},
\msg{vote}, \msg{confirm}, \msg{ack}, and \msg{ready} message is for $v$, no
honest party delivers before the invocation, and every honest party that
delivers delivers $v$.
\end{lemma}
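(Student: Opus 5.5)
The plan is a single induction over the causal order of honest sends, using only the first inequality of~\eqref{eq:general-inequalities}: $f<M\le C$, which also gives $C\ge f+1$ and $2f+1>f$. The broadcaster is honest, so at most $f$ of the $N$ non-broadcasters are Byzantine, and any set of more than $f$ distinct authors contains an honest one. Since slots are filled by authenticated first messages, no Byzantine party can occupy an honest author's slot. The claim I would prove is: every honest message of type \msg{echo}, \msg{vote}, \msg{confirm}, \msg{ack}, or \msg{ready} carries $v$, and every such message is sent no earlier than the invocation.

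I would order the honest sends by send time, breaking simultaneous zero-time reactions by processing order, and check each rule against its trigger.
\begin{description}
\item[Echo.] An honest echo is triggered by a propose received from the broadcaster. The broadcaster is honest and sends only $\m{propose}{v}$, and only at invocation.
\item[Vote.] The trigger is $M>f$ echoes of a value, so at least one is honest and hence for $v$.
\item[Confirm.] The trigger is $C>f$ echoes, so the same argument applies.
\item[Ack.] The trigger is either $C>f$ votes or $f+1$ confirms. Either set contains an honest vote or confirm, which is strictly earlier in the order and so carries $v$ by induction.
\item[Ready.] The trigger is either $C>f$ acks or $f+1$ readies. Either set contains an honest, strictly earlier ack or ready, which carries $v$.
\end{description}
Every honest message in this chain traces back causally to the broadcaster's propose. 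Therefore none is sent before the invocation.

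For deliveries, each rule needs $Q\ge M>f$ echoes, $C>f$ confirms, or $2f+1>f$ readies. Each quorum contains an honest message of the corresponding type. That message is for $v$ and was sent after the invocation, so the delivery is of $v$ and occurs after the invocation.

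The main subtlety is well-foundedness of the induction. Relays, ready on $f+1$ readies, could in principle form cycles, and several events can be simultaneous. I would handle this as in the proof of Lemma~\ref{lem:paired-safety}. Each honest message cited as part of a trigger was received before the triggering event, and every interparty delivery has positive delay. Self-messages are counted immediately, but only after they are sent. Hence the "strictly earlier honest message" relation has no infinite descending chains, because activity is locally finite. The base case is the broadcaster's propose at invocation, since the broadcaster sends nothing else and uninvoked parties send nothing input-dependent.
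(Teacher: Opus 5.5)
Your proposal is correct and matches the paper's proof: an induction over the order of honest sends in which every vote, confirm, ack, ready, and delivery trigger has more than $f$ distinct authors, so it contains an honest message for $v$ that traces back to the proposal sent at invocation. One small correction to your well-foundedness remark: positive interparty delay is not a model assumption (the paper's own tightness executions use zero-delay transmissions), but you do not need it, because a message is received only after it is sent and each honest party sends each type at most once, so the causal order on honest sends is finite and acyclic.
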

\begin{proof}
An honest echo answers the broadcaster's proposal, so it is for $v$ and
follows the invocation.  Consider the honest messages in the order in which
they are sent.  A vote needs $M>f$ echoes and a confirm needs $C>f$
echoes, so each includes an honest echo, which is for $v$.  An ack needs
$C>f$ votes or $f+1$ confirms, so it includes an honest vote or confirm.  A
ready needs $C$ acks, including an honest ack, or $f+1$ readies, including
an earlier honest ready.  By induction all these messages are for $v$ and
follow the invocation.  Each delivery rule needs more than $f$ messages of
one type, hence an honest one, so every honest delivery is of $v$ and follows
the invocation.
\end{proof}

For the rest of this section the broadcaster is Byzantine.  The lemmas
concern the entire execution, not only the events after a delivery.

\begin{lemma}[A confirm excludes a vote]
\label{lem:confirm-vote}
An honest party that sends $\m{confirm}{v}$ never sends $\m{vote}{w}$ for
$w\ne v$, whichever of the two messages it sends first.
\end{lemma}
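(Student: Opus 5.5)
The plan is a single-party counting argument on the receiver's own immutable echo slots. Fix an honest non-broadcaster $i$ that sends $\m{confirm}{v}$, and suppose for contradiction that it also sends $\m{vote}{w}$ with $w\ne v$, in either order.

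\textbf{Step 1 (what each send requires).} By the confirm rule of Algorithm~\ref{alg:rbc}, at the moment $i$ confirms, at least $C$ distinct non-broadcasters have echo slots at $i$ holding $v$. By the vote rule, at the moment $i$ votes, at least $M$ distinct non-broadcasters have echo slots at $i$ holding $w$.

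\textbf{Step 2 (immutability makes time irrelevant).} Slots are filled by the first echo received from each author and are never changed afterwards (Section~\ref{sec:model}). So a slot holding $v$ at one time holds $v$ at every later time and never held anything else before. Let $\tau$ be the later of the two send times. At $\tau$, the echo slots at $i$ contain at least $C$ entries equal to $v$ and at least $M$ entries equal to $w$. Each slot holds one value, so these two author sets are disjoint. This is why the order of the two sends does not matter.

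\textbf{Step 3 (count).} Only the $N$ non-broadcasters have echo slots, so disjointness gives $C+M\le N$. This contradicts the hypothesis $C+M>N$ in~\eqref{eq:general-inequalities}, which Lemma~\ref{lem:admissible} verifies for the chosen thresholds.

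No honesty of the echo authors is used: the argument concerns only $i$'s own slot contents. It therefore holds whether the broadcaster is honest or Byzantine, and also under adaptive corruption, since $i$ is honest throughout. The only point needing care is Step~2, namely that "$C$ echoes of $v$" and "$M$ echoes of $w$" are counted over first-message slots, so no author is counted in both quorums. That follows from the slot convention and the one-message-per-author quorum counting.
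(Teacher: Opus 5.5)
Your proof is correct and is essentially the paper's argument: the $C$ echo slots holding $v$ and the $M$ echo slots holding $w$ at the confirming party are disjoint because only the first echo per author is counted, so sending both messages would force $C+M\le N$, contradicting $C+M>N$. Evaluating both author sets at the later send time via slot immutability makes explicit why the order of the two sends does not matter, a point the paper leaves implicit.
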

\begin{proof}
It holds $\m{echo}{v}$ from $C$ non-broadcasters and counts only the
first echo from each party.  A vote for $w$ would need $\m{echo}{w}$ from
$M$ further non-broadcasters, and $C+M>N$.
\end{proof}

\begin{lemma}[Confirms agree; delivery on echoes excludes conflicting votes]
\label{lem:confirm-agree}
All honest confirms are for the same value.  If some honest party has
received $\m{echo}{v}$ from $Q$ non-broadcasters, then no honest party
sends $\m{vote}{w}$ or $\m{confirm}{w}$ for $w\ne v$.
\end{lemma}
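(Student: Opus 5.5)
We argue by quorum intersection on immutable echo slots, using only the shared hypotheses~\eqref{eq:general-inequalities} and the fact that, with a Byzantine broadcaster, at most $f-1$ of the $N$ non-broadcasters are Byzantine. Two sets of $X$ and $Y$ distinct non-broadcasters therefore share an honest member whenever $X+Y\ge N+f$, since $X+Y-N\ge f>f-1$. An honest non-broadcaster echoes at most once, and its echo slot at every receiver holds at most that one value. Every conclusion below is a statement about slot contents, so it holds at all times and independently of event order.

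\emph{Confirm uniqueness.} Suppose honest parties $i$ and $j$ confirm $v$ and $w$ respectively. Then $i$ holds $C$ echo slots for $v$ and $j$ holds $C$ echo slots for $w$. Because $2C\ge N+f$, the two author sets share an honest non-broadcaster $k$. That party's single echo fills both slots, so $v=w$.

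\emph{Echo delivery excludes conflicting votes and confirms.} Suppose an honest party $p$ holds $Q$ echo slots for $v$, and some honest $i$ sends $\m{vote}{w}$ with $w\ne v$. Then $i$ holds $M$ echo slots for $w$. Since $Q+M\ge N+f$, the author sets share an honest $k$. Then $k$ echoed $v$, as seen at $p$, and also echoed $w$, as seen at $i$. This contradicts the single honest echo. For confirms, $C\ge M$ gives $Q+C\ge Q+M\ge N+f$, and the same argument applies with $C$ slots at the confirmer.

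The main subtlety is not arithmetic. It is making sure the intersection argument is about what honest parties ever echo, not about what a particular receiver has received. Honest slots at different receivers may be empty or filled at different times, but never with different values. So the shared honest author pins down one echo value across both receivers, regardless of timing or of Byzantine equivocation toward different receivers.
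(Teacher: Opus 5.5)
Your proof is correct and matches the paper's argument. Both use $2C\ge N+f$ and $Q+M\ge N+f$ (with $C\ge M$ extending the latter to confirms) to find a common honest echo author when at most $f-1$ non-broadcasters are Byzantine, and that author's single echo forces the two values to coincide.
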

\begin{proof}
Two honest confirms rest on two sets of $C$ echo senders, which share at least $f$
non-broadcasters by~\eqref{eq:general-inequalities}, including an honest one.  That
party sends one echo, so the two confirms are for the same value.  An echo
quorum of $Q$ for $v$ and a vote quorum of $M$ for $w$ likewise share at least
$f$ non-broadcasters, including an honest one, so $v=w$.  A confirm for $w$
rests on $C\ge M$ echoes for $w$, so the same argument applies to it.
\end{proof}

\begin{lemma}[Early delivery excludes conflicting honest acks]
\label{lem:delivery-ack}
Suppose some honest party has received $\m{echo}{v}$ from $Q$
non-broadcasters or $\m{confirm}{v}$ from $C$ non-broadcasters.  Then every
honest ack in the execution, whether sent before or after that event, is for
$v$.
\end{lemma}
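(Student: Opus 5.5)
An honest ack has exactly two possible triggers: $C$ votes for its value, or $f+1$ confirms for its value. I will show that, under the hypothesis, neither trigger can fire for a value $w\ne v$. This holds at any time, because both arguments are intersection counts on immutable first-message slots. I first reduce the two hypotheses to one consequence: no honest party ever sends a vote or a confirm for $w\ne v$. Then I check both ack triggers against that consequence.

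\emph{Echo hypothesis.} Suppose some honest party holds $\m{echo}{v}$ from $Q$ non-broadcasters. Lemma~\ref{lem:confirm-agree} then says directly that no honest party sends $\m{vote}{w}$ or $\m{confirm}{w}$ for $w\ne v$. This conclusion does not depend on the time at which the quorum forms.

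\emph{Confirm hypothesis.} Suppose some honest party holds $\m{confirm}{v}$ from $C$ non-broadcasters.
\begin{itemize}
\item These $C$ confirms include at least $C-(f-1)\ge f+1\ge 1$ honest ones. By the first part of Lemma~\ref{lem:confirm-agree}, every honest confirm in the execution is therefore for $v$.
\item For votes, consider any $C$-quorum of votes for $w$. It and the $C$ confirm authors share at least $2C-N\ge f$ non-broadcasters, by $2C\ge N+f$. Hence they share an honest party, which would send both $\m{confirm}{v}$ and $\m{vote}{w}$. Lemma~\ref{lem:confirm-vote} forbids this regardless of order.
\item This shows there is no $C$-vote quorum for $w$. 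It does not rule out individual honest votes for $w$, but that weaker statement is all the ack rule needs.
\end{itemize}

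\emph{Ack triggers.} Now take an honest ack for $w\ne v$.
\begin{itemize}
\item If it was triggered by $f+1$ confirms for $w$, at least one of them is honest, since at most $f-1$ non-broadcasters are Byzantine. This contradicts the fact that all honest confirms are for $v$, which holds under both hypotheses (under the echo hypothesis through $Q+M\ge N+f$ and $C\ge M$).
\item If it was triggered by $C$ votes for $w$, at least $C-(f-1)>0$ of them are honest. Under the echo hypothesis this contradicts the absence of honest $w$-votes. Under the confirm hypothesis it contradicts the quorum intersection in the previous step.
\end{itemize}

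The main point needing care is the order of events: the ack may precede the delivery quorum. This is harmless, because every step compares sets of immutable first slots and uses Lemmas~\ref{lem:confirm-vote} and~\ref{lem:confirm-agree}, whose conclusions hold over the whole execution. No causal induction is required.
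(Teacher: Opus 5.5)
Your proof is correct and follows the paper's argument. The echo case reduces to Lemma~\ref{lem:confirm-agree}; in the confirm case you exclude the $f+1$-confirm trigger by confirm uniqueness and the $C$-vote trigger by combining $2C\ge N+f$ with Lemma~\ref{lem:confirm-vote}. The paper instead counts the $N-(C-f+1)<C$ non-broadcasters left after removing the honest confirmers, which is the same count as your intersection of size $2C-N\ge f$.
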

\begin{proof}
In the echo case, Lemma~\ref{lem:confirm-agree} excludes honest votes and
confirms for other values.  Both triggers of the \msg{ack} rule contain an
honest vote or confirm, so no honest party acks another value.

In the confirm case, at least $C-(f-1)$ of the confirm senders are
honest.  By Lemma~\ref{lem:confirm-vote} none of them ever votes for
$w\ne v$, so a vote quorum for $w$ would have to consist of $C$ of the
remaining $N-(C-f+1)$ non-broadcasters.  This is impossible because
$N-(C-f+1)<C$ follows from $2C\ge N+f$.  A set of
$f+1$ confirms for $w$ would contain an honest confirm for $w$, contradicting
Lemma~\ref{lem:confirm-agree}.  Hence neither trigger of the \msg{ack} rule
fires for $w$ at an honest party, at any time.
\end{proof}

\begin{lemma}[Ack quorums and readies agree]
\label{lem:ready}
If honest parties receive $\m{ack}{v}$ from $C$ non-broadcasters and
$\m{ack}{w}$ from $C$ non-broadcasters, then $v=w$.  Every honest ready is
for this value, and every honest ready is preceded, in causal order, by an
honest ready sent upon $C$ acks.
\end{lemma}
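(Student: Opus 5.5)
The statement to prove is Lemma~\ref{lem:ready}, and the broadcaster is Byzantine throughout, as fixed before Lemma~\ref{lem:confirm-vote}. The plan has three parts: uniqueness of ack quorums by intersection, then uniqueness of ready values by tracing relays back to ack-triggered origins, and finally the causal-order claim, which comes out of the same tracing.

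\emph{Ack quorums.} Suppose one honest party holds $C$ ack slots for $v$ and another holds $C$ ack slots for $w$. The two author sets contain at least $2C-N\ge f$ common non-broadcasters, using $2C\ge N+f$ from~\eqref{eq:general-inequalities}. At most $f-1$ non-broadcasters are Byzantine, so some common author is honest. The ack rule fires at most once per party, so that author sent a single ack value, and therefore $v=w$. This step needs no timing and works directly from immutable first-message slots.

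\emph{Readies and causal order.} Consider the honest readies in the order they are sent and apply strong induction. The induction claim is that each honest ready is causally preceded by (or is itself) an honest ready sent upon $C$ acks, and that its value is the unique ack-quorum value $v$. There are two cases.
\begin{itemize}
\item If the ready was sent upon $C$ acks, it is itself such an origin, and its value is $v$ by the previous step.
\item If the ready was sent upon $f+1$ readies, those include an honest ready for the same value. That ready was received, so it was sent strictly earlier. The induction hypothesis then gives the earlier origin and the value $v$.
\end{itemize}
Locally finite activity and strictly positive delays make the backward chain finite, so the induction is well founded. The same argument shows that "this value" is well defined whenever any honest ready exists. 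It also covers the degenerate case where no ack quorum ever forms: then no origin exists, and consequently no honest ready exists either.

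\emph{Main obstacle.} The delicate point is making the relay induction rigorous with respect to the order of events, not the counting. A relay's $f+1$ quorum may contain Byzantine readies for other values, so the argument must use an honest member of the quorum, whose ready was sent strictly earlier, rather than the quorum's value as a whole. Uniqueness of the value then rests entirely on the ack-quorum step.
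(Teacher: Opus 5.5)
Your proof is correct and is essentially the paper's argument: the two $C$-ack quorums share at least $2C-N\ge f$ non-broadcasters, hence an honest one-shot acker, and you trace every relay ready through an honest, causally earlier ready in its $f+1$ quorum back to an honest ready sent upon $C$ acks. One small correction to your well-foundedness justification: the model does not assume strictly positive delays, and you do not need them, because each honest party sends at most one ready, so the backward chain passes through finitely many distinct honest readies in strict causal order.
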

\begin{proof}
The two ack quorums share at least $f$ non-broadcasters, including an honest one, who
sends a single ack.  Now consider an honest ready.  If it was sent upon $f+1$
readies, one of those is an honest ready sent earlier.  Following such
predecessors, which are strictly earlier in the execution, we reach an honest
ready sent upon $C$ acks, and all readies along the way carry its value.
Every such starting ready carries the unique ack-quorum value.
\end{proof}

\begin{theorem}[Agreement and validity of deliveries]
\label{thm:safety}
Algorithm~\ref{alg:rbc} satisfies agreement, and each honest party
delivers at most once.  If the broadcaster is honest, every honest delivery
is of its input value and follows its invocation.  Eventual delivery is
proved in Section~\ref{sec:liveness}.
\end{theorem}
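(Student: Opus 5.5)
The claim to prove is Theorem~\ref{thm:safety}: agreement, at most one delivery per honest party, and validity with an honest broadcaster. At-most-once delivery and honest-broadcaster validity are immediate. The delivery rule is one-shot across values. Lemma~\ref{lem:honest-broadcaster} already shows that every honest delivery with an honest broadcaster is of the input and follows the invocation.

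The substance is agreement with a Byzantine broadcaster. I plan to classify every honest delivery by its trigger, $Q$ echoes, $C$ confirms, or $2f+1$ readies, and show that every trigger pins the same value. A ready delivery contains at least $f+1$ honest readies. By Lemma~\ref{lem:ready} these carry the unique ack-quorum value, and they trace back to an honest ready sent upon $C$ acks. So every ready-delivered value equals some value $u$ that gathered an ack quorum, and all such quorums agree.

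I then compare the origin types pairwise. Two echo deliveries share an honest echo author, since $2Q\ge N+f$ and at most $f-1$ non-broadcasters are faulty. An echo delivery of $v$ and a confirm delivery of $w$ agree: the confirm quorum rests on an honest confirm, which rests on $C$ echoes of $w$, and Lemma~\ref{lem:confirm-agree} excludes honest confirms of $w\ne v$ once some honest party has $Q$ echoes of $v$. Two confirm deliveries share an honest confirmer by $2C\ge N+f$, and honest confirms agree by Lemma~\ref{lem:confirm-agree}.

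For an echo or confirm delivery of $v$ against a ready delivery of $u$, I use Lemma~\ref{lem:delivery-ack}. It says every honest ack in the whole execution, before or after the delivery event, is for $v$. The ack quorum of size $C>f-1$ behind $u$'s originating ready contains an honest ack, so $u=v$.

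The main care point is the time-independence of these cross comparisons. A ready origin may predate the echo or confirm delivery, so the argument must rely on the time-free statements of Lemmas~\ref{lem:confirm-agree} and~\ref{lem:delivery-ack}, which concern immutable first-message slots, not causal order. Once each pair of the three trigger types is covered, all honest deliveries carry one value, which proves agreement.
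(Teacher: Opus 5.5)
Your proposal is correct and follows essentially the same case analysis as the paper. At-most-once delivery comes from the one-shot rule and validity from Lemma~\ref{lem:honest-broadcaster}. For a Byzantine broadcaster you then compare echo, confirm, and ready deliveries pairwise, using $2Q\ge N+f$, Lemma~\ref{lem:confirm-agree}, Lemma~\ref{lem:ready}, and Lemma~\ref{lem:delivery-ack} for the mixed echo/confirm-versus-ready case. The only difference is cosmetic: for two confirm deliveries you invoke the $2C\ge N+f$ intersection, whereas the paper notes that each confirm quorum contains an honest confirm and that honest confirms agree. Either argument suffices there.
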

\begin{proof}
The delivery rule fires at most once per party.
Lemma~\ref{lem:honest-broadcaster} gives validity of deliveries and agreement with an
honest broadcaster.  Let the broadcaster be Byzantine and consider two honest
deliveries, of $v$ and $w$.  Two echo quorums share an honest party
by~\eqref{eq:general-inequalities}, so two deliveries on echoes agree.  A delivery
on confirms includes an honest confirm, so two deliveries on confirms agree
by Lemma~\ref{lem:confirm-agree}, and a delivery on echoes and one on
confirms agree by the same lemma.  Two deliveries on readies include honest
readies and agree by Lemma~\ref{lem:ready}.  Finally, let $v$ be delivered on
echoes or confirms and $w$ on readies.  By Lemma~\ref{lem:delivery-ack} every
honest ack is for $v$; an ack quorum contains an honest ack, so by
Lemma~\ref{lem:ready} every honest ready, including those in the ready
quorum for $w$, is for $v$.  Hence $v=w$.
\end{proof}

\subsection{Liveness, latency, and communication}
\label{sec:liveness}

Throughout this subsection the thresholds satisfy~\eqref{eq:general-inequalities}.
The number $t$ of Byzantine parties counts every party corrupted during the
execution.

\begin{lemma}[Good case]
\label{lem:good-case}
If the broadcaster is honest and invoked at time $S$, every honest party
delivers by $S+3\delta$.  If moreover at most $b$ parties are Byzantine,
every honest party delivers by $S+2\delta$.
\end{lemma}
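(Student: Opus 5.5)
The plan is a direct forward timing argument for the honest-broadcaster case of Algorithm~\ref{alg:rbc}, using only the threshold inequalities~\eqref{eq:general-inequalities} and Lemma~\ref{lem:honest-broadcaster}. Fix an honest broadcaster invoked with $v$ at time $S$. By Lemma~\ref{lem:honest-broadcaster}, no honest party ever sends a message for any other value. Since slots are filled by the first message and the honest broadcaster sends a single proposal, every honest non-broadcaster's echo slot at every honest receiver eventually holds $v$. In particular, no honest echo is ``used up'' on another value. This is the point to state explicitly, because the own-value counting relies on it.

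\textbf{Step 1: echoes.} The proposal reaches every honest non-broadcaster by $S+\delta$, and each echoes $v$ at once. These echoes reach every honest party by $S+2\delta$. Let $t$ be the number of parties corrupted during the execution, so at least $N-t$ non-broadcasters are honest.

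\textbf{Step 2: the optimistic case.} If $t\le b$, then at least $N-b\ge Q$ honest echoes for $v$ arrive at every honest party by $S+2\delta$, using $Q\le N-b$. The echo delivery rule then fires by $S+2\delta$, and by Lemma~\ref{lem:honest-broadcaster} it delivers $v$. The broadcaster also counts as a receiver here, since every party can deliver.

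\textbf{Step 3: the general case.} If $t\le f$, at least $N-f\ge C$ honest echoes arrive everywhere by $S+2\delta$, using $C\le N-f$. Every honest non-broadcaster therefore sends $\m{confirm}{v}$ by $S+2\delta$. The at least $N-f\ge C$ honest confirms reach every honest party by $S+3\delta$, which triggers delivery on $C$ confirms. A party that delivered earlier through another rule still delivered $v$, and one-shot delivery is harmless.

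I expect no real obstacle. The only delicate point is that the broadcaster itself has no echo to send and must count non-broadcaster quorums. This is handled because the counts $N-t$ refer to honest non-broadcasters, and when the broadcaster is honest all $t$ faults lie among the non-broadcasters. One should also note that corruption after the relevant sends does not matter, because honest means never corrupted, so the honest senders counted above send their messages on schedule.
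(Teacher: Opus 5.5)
Your proof is correct and follows essentially the same argument as the paper. Honest echoes go out by $S+\delta$. Then $N-b\ge Q$ honest echoes arrive everywhere by $S+2\delta$ for the optimistic bound, and $N-f\ge C$ honest echoes yield honest confirms by $S+2\delta$ and a $C$-confirm delivery by $S+3\delta$, with Lemma~\ref{lem:honest-broadcaster} ruling out an earlier one-shot confirm for another value; handling the optimistic case first, as you do, is immaterial.
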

\begin{proof}
Every honest non-broadcaster receives the proposal and echoes by $S+\delta$.
There are at least $N-f\ge C$ honest non-broadcasters, and their echoes reach
every party by $S+2\delta$.  By Lemma~\ref{lem:honest-broadcaster} no honest
party has voted or confirmed another value, so every honest non-broadcaster
confirms $v$ by $S+2\delta$.  These at least $C$ confirms reach every party,
including the broadcaster, by $S+3\delta$, and the delivery rule on $C$
confirms fires.  If at most $b$ parties are Byzantine, at least $N-b\ge Q$
honest echoes reach every party by $S+2\delta$, and the delivery rule on
echoes fires.
\end{proof}

\begin{table}[t]
\caption{Upper bounds after the first honest delivery at $T$, with a
Byzantine broadcaster.  ``$+k$'' means by $T+k\delta$. All entries
concern full-value delivery in the confirm/ack protocol.}
\label{tab:totality}
\small\setlength{\tabcolsep}{3pt}
\begin{tabular}{@{}llcc@{}}
\toprule
First delivery & Honest messages sent by $T$ & General & $t=1$\\
\midrule
$Q$ echoes & $Q-f+1\ge M$ echoes & $+4$ & $+1$\\
$C$ confirms & $C-f+1\ge f+1$ confirms & $+3$ & $+1$\\
$2f+1$ readies & $f+1$ readies & $+2$ & $+2$\\
\bottomrule
\end{tabular}
\end{table}

\begin{lemma}[Totality delay]
\label{lem:totality}
If the first honest delivery happens at time $T$, every honest party
delivers by $T+4\delta$.  Table~\ref{tab:totality} gives the bound for each
delivery rule.
\end{lemma}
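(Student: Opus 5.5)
We prove Lemma~\ref{lem:totality} by a case split on the rule behind the first honest delivery at time $T$. Since Lemma~\ref{lem:good-case} already covers the honest broadcaster (every honest delivery lies in $[S,S+3\delta]$), we assume the broadcaster is Byzantine. In each case we exhibit a chain of honest messages that is already in transit at $T$. We then use Theorem~\ref{thm:safety} and Lemmas~\ref{lem:delivery-ack} and~\ref{lem:ready} to argue that no honest party can have spent a one-shot send on a conflicting value, so every honest party fires the next rule for $v$ within one step of receiving the chain.

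\emph{Echo case.} At least $Q-(f-1)\ge M$ honest parties have already sent $\m{echo}{v}$ by $T$, and these echoes reach every party by $T+\delta$. By Lemma~\ref{lem:confirm-agree} no honest party has voted for $w\ne v$, so every honest non-broadcaster votes $v$ by $T+\delta$ (or has done so earlier). The $N-(f-1)\ge C$ honest votes arrive by $T+2\delta$ and trigger acks. By Lemma~\ref{lem:delivery-ack} no honest ack conflicts, so every honest non-broadcaster has acked $v$ by $T+2\delta$. The $\ge C$ honest acks arrive by $T+3\delta$, giving ready. By Lemma~\ref{lem:ready} no honest ready conflicts, so the $n-f\ge2f+1$ honest readies arrive by $T+4\delta$ and every honest party delivers.

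\emph{Confirm case.} At least $C-(f-1)\ge f+1$ honest confirms were sent by $T$. They reach everyone by $T+\delta$ and trigger an ack for $v$; here we need Lemma~\ref{lem:delivery-ack} to rule out an earlier conflicting ack. Ready follows by $T+2\delta$ and delivery by $T+3\delta$, as in the echo case.

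\emph{Ready case.} The $2f+1$ readies contain $f+1$ honest ones. They arrive everywhere by $T+\delta$ and trigger relays; no honest party has sent a conflicting ready, by Lemma~\ref{lem:ready}. The $n-f\ge 2f+1$ honest readies then reach everyone by $T+2\delta$.

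The $t=1$ column of Table~\ref{tab:totality} needs its own argument. When $t=1$ and the delivery was on echoes or confirms, the single faulty party is the broadcaster, so every non-broadcaster is honest. We then show that the quorum behind the first delivery already forces each party's matching delivery quorum once those honest messages arrive, giving $+1$. For echoes, $Q$ honest echoes for $v$ were sent by $T$ and reach everyone by $T+\delta$. For confirms, $C$ honest confirms were likewise sent by $T$. The ready row stays at $+2$.

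We expect the main obstacle to be the ``no earlier conflicting one-shot send'' step, especially for acks and readies that honest parties may have sent before $T$. This is exactly where the time-independent statements of Lemmas~\ref{lem:delivery-ack} and~\ref{lem:ready} are needed rather than causal reasoning from $T$. The counting bounds $N-(f-1)\ge C$ and $C-(f-1)\ge f+1$ come directly from~\eqref{eq:general-inequalities}.
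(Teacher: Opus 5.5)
Your proposal is correct and follows the paper's proof essentially step for step. It uses the same three-way case split and the same honest-message chains, with Lemmas~\ref{lem:confirm-agree}, \ref{lem:delivery-ack}, and~\ref{lem:ready} excluding conflicting one-shot sends, and your separate argument for the $t=1$ column is the one the paper gives later in Lemma~\ref{lem:fault-sensitive}. (One small attribution point: in the honest-broadcaster case, $T\ge S$ comes from Lemma~\ref{lem:honest-broadcaster}, while Lemma~\ref{lem:good-case} supplies only the $S+3\delta$ deadline.)
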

\begin{proof}
With an honest broadcaster, $T\ge S$ by Lemma~\ref{lem:honest-broadcaster}
and every honest party delivers by $S+3\delta\le T+3\delta$ by
Lemma~\ref{lem:good-case}.  Let the broadcaster be Byzantine and let $v$ be
the value delivered at $T$.

\emph{Delivery on echoes.}  At least $Q-f+1\ge M$ honest parties have
echoed $v$ by $T$, and these echoes reach every honest party by $T+\delta$.
By Lemma~\ref{lem:confirm-agree} no honest party has voted for another
value, so all honest non-broadcasters, at least $n-f$ of them, vote for $v$
by $T+\delta$.  Their votes reach everyone by $T+2\delta$; by
Lemma~\ref{lem:delivery-ack} no honest ack went to another value, so all
honest non-broadcasters ack $v$ by $T+2\delta$.  These at least $n-f>C$
acks reach everyone by $T+3\delta$; by Lemma~\ref{lem:ready} no honest ready
went to another value, so every honest party sends $\m{ready}{v}$ by
$T+3\delta$, and at least $n-f\ge2f+1$ honest readies reach every honest
party by $T+4\delta$.

\emph{Delivery on confirms.}  At least $C-f+1\ge f+1$ honest parties have
confirmed $v$ by $T$.  Their confirms reach every honest non-broadcaster by
$T+\delta$ and trigger $\m{ack}{v}$ there, whatever that party voted for;
Lemma~\ref{lem:delivery-ack} excludes an earlier ack for another value.  All
honest parties therefore send \msg{ready} by $T+2\delta$ and deliver by
$T+3\delta$.

\emph{Delivery on readies.}  At least $f+1$ honest parties have sent
$\m{ready}{v}$ by $T$.  These readies reach every honest party by $T+\delta$
and trigger its ready if it has not sent one; by Lemma~\ref{lem:ready} no
honest ready went to another value.  At least $n-f\ge2f+1$ honest readies
reach everyone by $T+2\delta$.

By Theorem~\ref{thm:safety} all these deliveries are of $v$.  If the first
delivery satisfies several rules, each applicable bound holds.
\end{proof}

\begin{theorem}[Full-value protocol]
\label{thm:bounded}
Under~\eqref{eq:general-inequalities}, Algorithm~\ref{alg:rbc} is an
asynchronous reliable broadcast protocol against adaptive corruption of at
most $f$ parties, and it uses no cryptography beyond authenticated
channels.  Its optimistic good-case latency is at most two steps with
threshold $b$, its good-case latency at most three, and its totality delay
at most four.  The messages sent by parties while they follow the protocol
number at most $(5(n-1)+2)(n-1)$ and carry $O(n^2(B+\log n))$ bits in every
execution, including messages after delivery and executions without
delivery.
\end{theorem}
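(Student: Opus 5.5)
The proof assembles results already in hand: agreement and validity of deliveries from Theorem~\ref{thm:safety}, and the good-case and totality bounds from Lemmas~\ref{lem:good-case} and~\ref{lem:totality}. Three things remain to be supplied: eventual delivery and totality in fully asynchronous executions, the conversion of the step bounds into the latency triple, and the message count.

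\emph{Validity and totality.} Eventual validity is the delivery argument of Lemma~\ref{lem:good-case} with ``by $S+k\delta$'' replaced by ``eventually.'' Every honest non-broadcaster eventually receives the proposal and echoes. The at least $N-f\ge C$ honest echoes then reach everyone, and Lemma~\ref{lem:honest-broadcaster} ensures no conflicting vote or confirm has been sent. Hence all honest non-broadcasters confirm, at least $C$ honest confirms reach every party, and every party delivers. Eventual totality follows the three cases of Lemma~\ref{lem:totality} in the same way: each case is a finite causal chain of honest sends, each enabled by honest messages already in transit. Because channels guarantee eventual receipt, each chain completes without any delay bound. The exclusions the chains rely on are order-independent, namely Lemmas~\ref{lem:confirm-agree}, \ref{lem:delivery-ack} and~\ref{lem:ready}, which rule out an earlier conflicting vote, ack or ready at any time.

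\emph{Latency and adaptive corruption.} The latency claims are the numeric versions of the same chains. Optimistic latency two and good-case latency three come from Lemma~\ref{lem:good-case}, using $Q\le N-b$ from~\eqref{eq:general-inequalities}. Totality delay four comes from Lemma~\ref{lem:totality}, and the honest-broadcaster case is already covered there by $T\ge S$. For adaptive corruption I take the eventual corrupted set throughout. The quorum-intersection facts only ever use ``at most $f-1$ faulty non-broadcasters'' and immutable slots, so they hold unchanged.

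\emph{Message count.} Each non-broadcaster sends at most the five one-shot multicast types echo, vote, confirm, ack and ready, each to the $n-1$ other parties. The broadcaster sends propose and ready. This gives at most $(5(n-1)+2)(n-1)$ messages, and since every rule is one-shot the bound covers post-delivery sends and executions with no delivery. Each message carries an instance identifier, a type tag, and a $B$-bit value. I will argue the headers need $O(\log n)$ bits, which gives $O(n^2(B+\log n))$ bits in total. Messages sent after corruption are not charged.

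\emph{Main obstacle.} The step needing the most care is the Byzantine-broadcaster confirm-delivery case under asynchrony. There the honest confirmers must trigger acks even at parties that voted for another value, and no honest ack may already have gone to a different value. Lemma~\ref{lem:delivery-ack} settles this because it is time-independent. I would point to it explicitly rather than re-derive it.
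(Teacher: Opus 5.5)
Your proposal is correct and follows the paper's proof essentially step for step: agreement and validity from Theorem~\ref{thm:safety}; eventual delivery and totality because the chains in Lemmas~\ref{lem:good-case} and~\ref{lem:totality} are finite sequences of honest transmissions that complete under eventual receipt; adaptive corruption handled by fixing the eventual corrupted set; and the $(5(n-1)+2)(n-1)$ bound from five one-shot types per non-broadcaster plus propose and ready for the broadcaster, with $B+O(\log n)$ bits per message. One harmless imprecision: with an honest broadcaster the quorum arguments allow up to $f$ faulty non-broadcasters, not $f-1$, which is exactly what Lemma~\ref{lem:honest-broadcaster} assumes.
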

\begin{proof}
Agreement and the correctness of delivered values are
Theorem~\ref{thm:safety}.  The arguments of Lemmas~\ref{lem:good-case}
and~\ref{lem:totality} are chains of finitely many message transmissions
between honest parties, each of which completes under eventual receipt and
fair processing; they therefore prove eventual delivery for an honest
broadcaster and totality for any broadcaster, and under the timing
assumption of Definition~\ref{def:latency} they give the stated bounds.  For
correctness and latency, fix the set of at most $f$ parties ever corrupted;
their entire transcripts are admissible Byzantine behavior, and the other
parties obey the rules throughout, so every quorum argument above applies
unchanged.  For the message bound, each non-broadcaster sends at most one
message of each of five types to each of the $n-1$ other parties while it
follows the protocol, and the broadcaster sends at most its proposal and
one ready.  A message has $B+O(\log n)$ bits: the value, the type, and the
party and instance identifiers.  Since each type is sent at most once, these
bounds hold whether or not parties deliver.
\end{proof}

\begin{corollary}[Maximal optimistic threshold]
\label{cor:general-optimal}
\label{cor:optimal}
For $3f+1\le n\le4f-1$, the thresholds of Lemma~\ref{lem:admissible} with
$b=\lfloor n/2\rfloor-f$ give the optimal good-case latency three and two-step
optimistic good-case latency under the largest $b$ that any signature-free
protocol admits.  For $n\ge4f$, $b=f$ is admissible and gives good-case
latency two under all $f$ faults.  Theorem~\ref{thm:main} follows from
Theorem~\ref{thm:bounded} with these thresholds.
\end{corollary}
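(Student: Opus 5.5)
The plan is to derive Corollary~\ref{cor:optimal} directly from Theorem~\ref{thm:bounded} and Lemma~\ref{lem:admissible}, using the two cited lower bounds only to certify optimality. For $3f+1\le n\le4f-1$ we set $b=\lfloor n/2\rfloor-f$. We first check that this $b$ lies in the admissible range $0\le b\le\min\{f,\lfloor n/2\rfloor-f\}$. Nonnegativity follows from $n\ge3f+1\Rightarrow\lfloor n/2\rfloor\ge\lfloor(3f+1)/2\rfloor\ge f$. The bound $b\le f$ follows from $n\le4f-1\Rightarrow\lfloor n/2\rfloor\le2f-1$. Lemma~\ref{lem:admissible} then supplies thresholds satisfying~\eqref{eq:general-inequalities}, and Theorem~\ref{thm:bounded} yields a signature-free adaptive-corruption protocol with optimistic latency two under $b$ faults, good-case latency three, totality four, and $O(n^2)$ messages.

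Optimality has two parts. Good-case latency three cannot be improved below $4f$ without signatures, by \cite[Theorem~4]{AbrahamRenXiang2021Latency}. The budget is maximal because Theorem~8 of Shrestha et al.\ bounds any two-step optimistic budget by $\lfloor n/2\rfloor-f$ for $n<4f$. We cite both results as stated, and additionally note Lemma~\ref{lem:two-step-minimum}, which shows that two steps is itself the least possible optimistic latency.

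For $n\ge4f$ we take $b=f$. Here $\lfloor n/2\rfloor\ge2f$, so $f\le\lfloor n/2\rfloor-f$ and $b=f$ is admissible in Lemma~\ref{lem:admissible}. Every execution with at most $f$ faults is then covered by the two-step optimistic guarantee of Theorem~\ref{thm:bounded}, so the good-case latency collapses to two and the triple becomes $(2,2,4)$.

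Theorem~\ref{thm:main} takes $b=\min\{f,\lfloor n/2\rfloor-f\}$, which is exactly the value in the two cases above. It therefore follows by combining them with the correctness, adaptive-corruption, and message-count statements of Theorem~\ref{thm:bounded}. I expect no real obstacle. The only care needed is the floor arithmetic at the endpoints $n=3f+1$ and $n=4f-1$, in particular confirming that $b\ge0$ holds even for $f=1$.
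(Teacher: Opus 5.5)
Your proposal is correct and follows essentially the paper's own route: check that $b=\lfloor n/2\rfloor-f\le f-1$ is the largest budget Lemma~\ref{lem:admissible} permits when $n\le4f-1$, and that $b=f$ is admissible when $n\ge4f$. Then apply Theorem~\ref{thm:bounded}, and certify optimality by \cite[Theorem~4]{AbrahamRenXiang2021Latency} together with Theorem~8 of Shrestha et al. Your concern about $f=1$ in the first case is moot, since the range $3f+1\le n\le4f-1$ is empty there.
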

\begin{proof}
In the first range $b<f$, so it is the largest value permitted by
Lemma~\ref{lem:admissible}; there $M=\lceil n/2\rceil$.  The good-case lower
bound of Abraham, Ren, and Xiang requires $n\ge4f$ for two-step delivery
under all $f$ faults~\cite[Theorem~4]{AbrahamRenXiang2021Latency}, and
Theorem~8 of Shrestha et al.\ excludes optimistic thresholds above
$\lfloor(n-2f)/2\rfloor=\lfloor n/2\rfloor-f$ in this
range~\cite{ShresthaEtAl2026OptimisticRBC}.  Neither bound concerns the
totality delay or communication.  If $n\ge4f$, then
$\min\{f,\lfloor n/2\rfloor-f\}=f$, and the two-step guarantee of
Theorem~\ref{thm:bounded} covers every fault pattern.
\end{proof}

\begin{remark}
At $n=3f+1$, $f=1$, the maximal threshold is $b=1$, and every threshold on
non-broadcasters equals $2$, so a party may deliver on echoes at the moment
it votes and confirms.  Continued processing after delivery is needed for
totality already in this smallest case.
\end{remark}

\subsection{Exact fault-dependent totality}
\label{app:totality-tightness}

We first refine the upper bound in terms of the actual number of faults,
then give matching executions at minimum resilience.
Throughout this subsection the thresholds satisfy~\eqref{eq:general-inequalities}.

\begin{lemma}[Fault-dependent totality]
\label{lem:fault-sensitive}
Assume $b\ge1$, and let $t\le f$ be the number of parties ever corrupted.
If the first honest delivery is at $T$, every honest party delivers by
$T+U(t)\delta$, where
\[
 U(t)=
 \begin{cases}
 1,&t=0,\\
 2,&t=1,\\
 3,&2\le t\le Q-C+1,\\
 4,&\max\{1,Q-C+1\}<t\le f.
 \end{cases}
\]
Empty ranges are omitted.  These are upper bounds for this protocol.  If the
broadcaster is Byzantine and $t=1$, the echo and confirm branches each take
at most \emph{one} further step; only the ready branch uses the two-step
bound.
\end{lemma}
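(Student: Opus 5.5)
The plan is a case analysis on the broadcaster's status and on $t$. Every branch bounds the time to the last honest delivery by tracking which honest messages must already have been sent at the first honest delivery time $T$. Agreement (Theorem~\ref{thm:safety}) fixes the delivered value $v$ throughout. Confirm uniqueness and the exclusions of Lemmas~\ref{lem:confirm-agree}, \ref{lem:delivery-ack} and~\ref{lem:ready} rule out any earlier honest message for another value, so I will not repeat that check in each branch.

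\emph{Case $t=0$.} Every member of the delivering quorum is honest, whatever the broadcaster is. A delivery at $T$ on $Q$ echoes, $C$ confirms or $2f+1$ readies therefore uses messages from that many honest authors, all sent by $T$. Those same messages reach every honest party by $T+\delta$, so each of them fills the same delivery rule and delivers. This gives $U(0)=1$ without any appeal to the invocation time.

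\emph{Case $t=1$.} If the broadcaster is honest, first observe that $T\ge S$ (Lemma~\ref{lem:honest-broadcaster}). Since $b\ge1$, the optimistic part of Lemma~\ref{lem:good-case} then gives delivery everywhere by $S+2\delta\le T+2\delta$. If the broadcaster is Byzantine, it is the only fault, so every non-broadcaster is honest and the $t=0$ argument applies verbatim to the echo and confirm branches: one further step. For a ready delivery, at least $2f\ge f+1$ of the readies are honest and were sent by $T$. They arrive by $T+\delta$ and trigger relays, and the $n-1\ge2f+1$ honest readies arrive by $T+2\delta$. This proves the refined sentence at the end of the statement, and $U(1)=2$.

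\emph{Case $2\le t\le Q-C+1$.} With an honest broadcaster I would again use $T\ge S$ together with Lemma~\ref{lem:good-case}: the bound is $T+2\delta$ if $t\le b$ and $T+3\delta$ otherwise. With a Byzantine broadcaster there are $t-1$ faulty non-broadcasters, and I split by delivery rule. The confirm branch already takes three steps by Lemma~\ref{lem:totality}, and the ready branch two. The echo branch is the interesting one. At least $Q-(t-1)\ge C$ honest echoes of $v$ were sent by $T$, and this is exactly where the hypothesis $t\le Q-C+1$ enters. These echoes reach every honest non-broadcaster by $T+\delta$, and each one then confirms $v$ (no honest confirm for another value exists). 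The resulting at least $N-t+1\ge N-f+1>C$ honest confirms reach everyone by $T+2\delta$, so everyone delivers on confirms within two steps, well inside three.

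\emph{Case $t>\max\{1,Q-C+1\}$.} Lemma~\ref{lem:totality} gives four steps directly. The main point to get right is the $t=0$ and $t=1$ cases. There the bound must be measured from $T$ rather than from $S$, and the key observation is that the delivering quorum itself (or all but one of the ready quorum) is honest and already sent. Apart from that, the only care needed is to confirm that each case boundary coincides with the inequality $Q-(t-1)\ge C$ and with $b\ge1$.
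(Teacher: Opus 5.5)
Your proposal is correct and follows essentially the same route as the paper: the $t=0$ case uses the all-honest delivering quorum, the honest-broadcaster cases use $T\ge S$ with Lemma~\ref{lem:good-case} (and $b\ge1$ at $t=1$), the Byzantine echo branch uses the $Q-(t-1)\ge C$ honest echoes to trigger confirms everywhere, and Lemma~\ref{lem:totality} covers the remaining branches. The differences are only presentational: you organize the cases by $t$ first, and you re-derive the two-step ready bound at $t=1$ explicitly where the paper simply cites Lemma~\ref{lem:totality}.
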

\begin{proof}
If $t=0$, every message in the first delivery's echo, confirm or ready
quorum was sent by an honest party by $T$.  The same quorum reaches
every party by $T+\delta$, so all deliver on that rule.

An honest broadcaster gives the absolute deadline $S+2\delta$ when $t\le b$,
and $S+3\delta$ otherwise.  Since $T\ge S$ and $b\ge1$, these imply the
claimed relative bounds for $t\ge1$.

Suppose the broadcaster is Byzantine, so $t\ge1$ and at most $t-1$
non-broadcasters are Byzantine.  A first delivery on $Q$ echoes contains at
least $Q-t+1$ honest echoes, all received everywhere by $T+\delta$.  If
$t\le Q-C+1$, there are at least $C$ of them: every honest non-broadcaster
confirms by $T+\delta$, since by Lemma~\ref{lem:confirm-agree} none has
confirmed another value, and everyone delivers on confirms by $T+2\delta$.
Otherwise Lemma~\ref{lem:totality} gives four steps.  A first delivery on
confirms always has the three-step bound, and one on readies the two-step
bound.

When $t=1$, every non-broadcaster is honest.  All $Q$ echoes, or all $C$
confirms, behind a delivery on echoes or confirms were therefore sent to
everyone and arrive by $T+\delta$, so everyone delivers on the same rule
within one step.  The ready branch keeps the bound two, which gives
$U(1)=2$.

\end{proof}

\begin{proposition}[Exact confirm/ack totality profile]
\label{prop:four-tight}
\label{prop:confirm-ack-profile}
For Algorithm~\ref{alg:rbc} with the thresholds of
Lemma~\ref{lem:admissible}, at $n=3f+1$, $f\ge1$ and any
$1\le b\le\lfloor(f+1)/2\rfloor$, the supremum of the totality delay over
executions with exactly $t$ Byzantine parties is
\[
 D_{f,b}(t)=
 \begin{cases}
 1,&t=0,\\
 2,&t=1,\\
 3,&2\le t\le f-b+1,\\
 4,&f-b+2\le t\le f.
 \end{cases}
\]
Empty ranges are omitted. This statement does not assert optimality
among reliable broadcast protocols.
\end{proposition}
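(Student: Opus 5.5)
The upper bounds come from Lemma~\ref{lem:fault-sensitive}, specialized to $n=3f+1$. There $N=3f$ and $C=2f$, and Lemma~\ref{lem:admissible} gives $Q=3f-b$. Hence $Q-C+1=f-b+1$, and the case split of $U(t)$ becomes exactly that of $D_{f,b}(t)$. So $D_{f,b}(t)\le U(t)$ for every $t$. The remaining work is to exhibit, for each nonempty range, executions with exactly $t$ Byzantine parties whose first-to-last delivery gap reaches, or approaches, the stated value.

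I will build the matching executions in increasing order of difficulty, using nonuniform delays as in the convention discussion of Section~\ref{sec:model}.

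\textbf{$t=0$.} An honest broadcaster with one party whose echoes and confirms arrive a full step later than at a fast party gives a gap approaching one step.

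\textbf{$t=1$.} Corrupt only the broadcaster. It proposes $v$ to every non-broadcaster, but echoes are delayed so that no echo or confirm quorum forms quickly. Readies spread unevenly: one party receives $2f+1$ readies early, while the honest readies reach the others late. Each laggard then sends its own ready only on receiving $f+1$ of them, and needs a further step, giving gap $2$. As in Bracha's elapsed bound, I would choose the early readies to come from quorum-origin parties and take a supremum over small $\varepsilon$.

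\textbf{$2\le t\le f-b+1$.} With an honest broadcaster and $t>b$ Byzantine parties (possible when $t\ge b+1$; otherwise use a Byzantine broadcaster), the good-case path can be stretched. One party delivers early on $Q$ echoes, completed with Byzantine echoes. The others receive the honest echoes only at step two, confirm, and deliver on confirms at step three. With $S\approx T$ this gives a gap near $3$. When $t\le b$, use a Byzantine broadcaster instead, so that a first delivery on $C$ confirms relies on $t-1$ Byzantine confirms. The others then need the honest confirms to reach them, then ack, ready, and deliver. To keep the gap at three and no more, I would let the ready path be the one that finishes.

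\textbf{$t\ge f-b+2$ (the main obstacle).} Here the four-step chain echo$\to$vote$\to$ack$\to$ready$\to$deliver must actually be realized. The broadcaster is Byzantine, and one party delivers on $Q$ echoes that include $t-1$ Byzantine echoes. The honest echoes of $v$ then number $Q-t+1<C$, so no other party can confirm-deliver or even confirm. This is exactly where the threshold $t>Q-C+1$ enters.

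The remaining honest non-broadcasters must be forced through the vote path. I would let the broadcaster send $w$ to the rest, so that their echoes cannot complete $Q$ or $C$ for $v$. Then I must check three things:
\begin{itemize}
\item no ack is triggered early by $f+1$ confirms, which holds since at most $t-1$ Byzantine confirms exist and honest confirms need $C$ echoes;
\item votes for $v$ appear only after step one;
\item no early ready relays occur.
\end{itemize}
Each stage is then delayed exactly one step at the laggards, giving a gap approaching $4$.

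I expect this last construction, in particular the counting showing that $C$ echoes for $v$ or $w$ never accumulate early while $M$ echoes for $v$ do arrive, to be the delicate step. I would attempt it first with $t=f$ and $b$ maximal, and then generalize.
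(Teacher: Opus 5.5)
Your reduction of the upper bounds to Lemma~\ref{lem:fault-sensitive} (using $Q-C+1=f-b+1$) is correct, and so is your $t=0$ execution. Your four-step outline is essentially the paper's. The paper's schedule has the other $Q-t$ honest echoers receive the proposal just before $T$, so their echoes reach $p$ at once but reach the laggards a full step later. However, two of your matching executions fail.

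\emph{Three-step range.} For $t>b$ you use an honest broadcaster with a first delivery on $Q$ echoes. Whenever $t\le f-b$, for example at $(f,b,t)=(3,1,2)$ where $D_{3,1}(2)=3$, this execution has gap at most two. The $Q-t$ honest echoes behind $p$'s fast delivery were sent by $T$, so they reach every party by $T+\delta$; they cannot arrive ``only at step two''. Since $Q-t\ge C$ exactly when $t\le f-b$, all $3f-t\ge C$ honest non-broadcasters confirm by $T+\delta$, and everyone delivers on confirms by $T+2\delta$.

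The fix is the construction you reserved for $t\le b$, used for every $2\le t\le f$ as in the paper. A Byzantine broadcaster proposes only to a set $I$ of $\ell=2f-t+1$ honest parties. Together with the $t-1$ Byzantine echoes, each member of $I$ holds exactly $C$ echoes and confirms at time $0$. Then $p\in I$ delivers on $\ell$ honest plus $t-1$ Byzantine confirms. The others ack at $1$ on the $\ell\ge f+1$ honest confirms, send ready at $2$, and deliver at $3$. No other quorum forms: only $\ell<C$ honest confirms and $2f<Q$ echoes of the value exist, and only $p$ acks before time $1$.

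\emph{One fault.} Your relay mechanism yields gap one, not two. The $2f+1$ readies at $p$ include $2f$ honest ones, which reach every party by $T+\delta$. A party that relays on $f+1$ of them then holds $2f$ readies plus its own. A two-step laggard would instead have to be one of the early ready senders, and it must also avoid echo and confirm quorums.

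With a Byzantine broadcaster this is impossible at $f=1$, which the statement covers. There $M=C=Q=2$ and all non-broadcasters are honest. Every honest vote, confirm, ack, and ready therefore causally follows the first moment an honest party holds two echoes of a value. That party delivers at that moment, and those two honest echoes reach everyone within one step, so the gap is at most one.

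The paper instead keeps the broadcaster honest, corrupts one non-broadcaster, and uses the broadcaster itself as the laggard, since it has no echo of its own. $Q-1$ honest non-broadcasters receive the proposal at $0$, and $p$ delivers at $0$ on their echoes plus the Byzantine one. The other $b$ honest non-broadcasters receive the proposal at $1$. Before time $2$ the broadcaster holds only $Q-1$ echoes and $p$'s confirm, and no ready quorum exists.
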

\begin{proof}
Lemma~\ref{lem:fault-sensitive} gives the upper bounds.  We give matching
executions with $\delta=1$, writing $C=2f$, $Q=3f-b$ and $M=2f+1-b$.
Except for the transmissions specified below, every honest message
arrives after one time unit, including messages sent after delivery;
self-messages are immediate.  Byzantine parties send only the specified
messages.  Zero-delay transmissions can instead be arbitrarily short
for the supremum claim.

\emph{No faults.}  Invoke the broadcaster at zero.  Give its proposal to
$Q$ non-broadcasters immediately and to the others at time one.  Choose
$p$ among the first group and deliver their echoes to $p$ immediately.
Party $p$ delivers at zero, whereas the broadcaster receives its first
echoes and delivers at one.

\emph{One fault.}  Let one non-broadcaster be Byzantine.  Give the honest
proposal immediately to $Q-1$ honest non-broadcasters and at time one
to the other $b$ honest non-broadcasters.  Deliver the first group's
echoes and the Byzantine party's matching echo immediately to one of
them, $p$.  It delivers at zero.  The broadcaster gets only $Q-1$ echoes
at time one and receives the remaining honest echoes at time two.
Before time two it has only $p$'s confirm, insufficient for delivery;
no earlier ready quorum exists.  It therefore delivers at two.

\emph{Three delays.}  Let $2\le t\le f$, including the broadcaster.
Give a proposal only to a set $I$ of $\ell=2f-t+1$ honest parties, at zero.
Deliver their echoes to every member of $I$ immediately, together with
the $t-1$ Byzantine non-broadcasters' matching echoes.  Every member
of $I$ has $C$ echoes, votes and confirms at zero.  Choose $p\in I$;
deliver the $\ell$ honest confirms and $t-1$ Byzantine confirms to $p$
immediately; no other honest party receives Byzantine confirms.
Party $p$ delivers and acks at zero.  No fast delivery is
possible, since only $C<Q$ matching echoes exist.  At time one every honest
party has the $\ell\ge f+1$ honest confirms and has sent its ack; these acks cause readies
at two, and the readies cause delivery at three.  Except at $p$, the
$\ell<C$ confirms never complete a direct-delivery quorum.  Only $p$
acks before one, so no earlier ready quorum occurs.

\emph{Four delays.}  Let $f-b+2\le t\le f$ and fix $0<\varepsilon<1$.
The broadcaster and $t-1$ non-broadcasters are Byzantine.  Put
$e=Q-t+1$, so $M\le e<C$.  Give the proposal to one honest party $p$
at zero and to $e-1$ others at $1-\varepsilon$.  Deliver all their echoes
to $p$ by $T=1$, together with the $t-1$ Byzantine echoes.  It has $Q$
echoes and delivers at $T$, voting and confirming then.
Everyone else completes the $e$ honest echoes at $2-\varepsilon$ and
votes then, but never confirms: $e<C$.  The single confirm from $p$
cannot trigger an ack.  Before the other votes arrive, each such voter
has at most its own vote and $p$'s, fewer than $C=2f$; the nonempty
four-delay range requires $f\ge3$.  The other votes complete the $C$-vote quorum
at $3-\varepsilon$, causing acks, then readies at $4-\varepsilon$, and
delivery at $5-\varepsilon$.  The gap is $4-\varepsilon$.

Every honest message has delay at most one in these executions.
\end{proof}

At maximal optimism, $b=b_*=\lfloor(f+1)/2\rfloor$, four steps occur only
above $f-b_*+1$ faults. At $n=7,f=2,b=1$, the four-step range is empty.
These matching executions establish tightness for our confirm/ack
construction, not a lower bound for reliable broadcast in general.

\end{document}